\def\final{1}  
\def\iflong{\iffalse}
\newcommand{\snote}[1]{[{\tiny Sahand: \bf #1}]\marginpar{*}}
\newcommand{\knote}[1]{[{\tiny Karthik: \bf #1}]\marginpar{*}}
\newcommand{\todo}[1]{[{\tiny TODO: \bf #1}]\marginpar{*}}
\newcommand{\snote}[1]{}
\newcommand{\knote}[1]{}
\newcommand{\todo}[1]{}
\newcommand{\NP}{\ensuremath{\mathsf{NP}}}
\tikzstyle{vertex}=[circle, draw, fill=black, inner sep=0pt, minimum size=6pt]
\tikzstyle{thickedge}=[line width=2pt]
\tikzstyle{thinedge}=[thin]
\newlength\nodedistance
\newlength\longnodedistance
\newlength\horiznodedistance
\newlength\vertnodedistance
\newcommand{\card}[1]{\left|#1\right|}
\newcommand{\set}[1]{\left\{#1\right\}}
\newcommand{\suchthat}{:}
\newcommand{\Z}{\ensuremath{\mathbb{Z}}\xspace}
\newcommand{\R}{\ensuremath{\mathbb{R}}\xspace}
\newcommand{\Zbb}{\ensuremath{\mathbb{Z}}\xspace}
\newcommand{\Rbb}{\ensuremath{\mathbb{R}}\xspace}
\newcommand{\calI}{\mathcal{I}}
\newcommand{\calR}{\mathcal{R}}
\newcommand{\prbname}{\textsc{MinOddBlocker}\xspace}
\newcommand{\vc}{\textsc{Vertex Cover}\xspace}
\newcommand{\oddpathedgeblocker}{\textsc{$\{s,t\}$-Odd\-Path\-Edge\-Blocker}}
\newcommand{\oddpathnodeblocker}{\textsc{$\{s,t\}$-Odd\-Path\-Node\-Blocker}}
\newcommand{\diroddpathedgeblocker}{\textsc{$(s\rightarrow t)$-Odd\-Path\-Edge\-Blocker}}
\newcommand{\diroddpathnodeblocker}{\textsc{$(s\rightarrow t)$-Odd\-Path\-Node\-Blocker}}
\newcommand{\mwc}{\textsc{MultiwayCut}\xspace}
\newcommand{\undirpathblocker}{\textsc{$\{s,t\}$-Odd\-Path\-Edge\-Blocker}\xspace}
\newcommand{\undirpathnodeblocker}{\textsc{$\{s,t\}$-Odd\-Path\-Node\-Blocker}\xspace}
\newcommand{\omwnodec}{\textsc{OddMultiwayNodeCut}\xspace}
\newcommand{\omwedgec}{\textsc{OddMultiwayEdgeCut}\xspace}
\newcommand{\pmwnodec}{\textsc{ParityMultiwayNodeCut}\xspace}
\newcommand{\pmwedgec}{\textsc{ParityMultiwayEdgeCut}\xspace}
\newcommand{\dagoddmultiwaycut}{\textsc{OddMultiwayNodeCut}\xspace}
\newcommand{\etal}{\textit{et al}.}
\DeclareMathOperator{\torso}{\texttt{ParityTorso}}
\newcommand{\MinBipartization}{{\sc OddCycleTransversal}}
\DeclareMathOperator{\poly}{poly}
\DeclareMathOperator{\ShadowContainer}{\texttt{ShadowContainer}}
\DeclareMathOperator{\ReverseShadowContainer}{\texttt{ReverseShadowContainer}}
\newcommand{\defparproblem}[4]{
    \bigskip
    \noindent
    \fbox{
        \begin{minipage}{.96\linewidth}
            {\sc #1} \hfill Parameter: #2\\[2pt]
            \smallskip
            \noindent
            \begin{tabular}{@{}l@{ }l}
                \emph{Input:} & \begin{minipage}[t]{\linewidth-\widthof{Input:\ \ }}
                    #3
                \end{minipage}\\[2pt]
                \emph{Task:} & \begin{minipage}[t]{\linewidth-\widthof{Input:\ \ }}
                    #4
                \end{minipage}
            \end{tabular}
        \end{minipage}
    }
    \medskip
}
\title{Odd Multiway Cut in Directed Acyclic Graphs\thanks{A preliminary version of these results appeared in the proceedings of the 12th International Symposium on Parameterized and Exact Computation (IPEC 2017).}}
\author{
Karthekeyan Chandrasekaran\thanks{University of Illinois, Urbana-Champaign.  Email: \email{karthe@illinois.edu}.}
\and
Matthias Mnich\thanks{Maastricht University, Maastricht, The Netherlands \emph{and} Universit{\"a}t Bonn, Bonn, Germany. Email: \email{mmnich@uni-bonn.de}. Supported by ERC Starting Grant 306465 (BeyondWorstCase) and DFG grant MN 59/1-1.} 
\and
Sahand Mozaffari\thanks{University of Illinois, Urbana-Champaign.  Email: \email{sahandm2@illinois.edu}.}
}
\begin{document}
\maketitle              

\begin{abstract}
We investigate the odd multiway node (edge) cut problem where the input is a graph with a specified collection of terminal nodes and the goal is to find a smallest subset of non-terminal nodes (edges) to delete so that the terminal nodes do not have an odd length path between them. In an earlier work, Lokshtanov and Ramanujan showed that both odd multiway node cut and odd multiway edge cut are fixed-parameter tractable (FPT) when parameterized by the size of the solution in \emph{undirected graphs}. In this work, we focus on directed acyclic graphs (DAGs) and design a fixed-parameter algorithm. 
Our main contribution is 
a broadening 
of the shadow-removal framework to address \emph{parity problems in DAGs}. 
We complement our FPT results with tight approximability as well as polyhedral results for $2$ terminals in DAGs. Additionally, we show inapproximability results for odd multiway \emph{edge} cut in undirected graphs even for $2$ terminals. 
\end{abstract}

\begin{keywords}
  Odd Multiway Cut, Fixed-Parameter Tractability, Approximation Algorithms
\end{keywords}

\begin{AMS}
  G.2.2 Graph Theory, I.1.2 Algorithms 
\end{AMS}

\section{Introduction}
\label{sec:intro}
In the classic $\{s,t\}$-cut problem, the goal is to delete the smallest number of edges so that the resulting graph has no path between $s$ and $t$.
A natural generalization of this problem is the multiway cut problem, where the input is a graph with a specified set of terminal nodes and the goal is to delete the smallest number of non-terminal nodes/edges so that the terminals cannot reach each other in the resulting graph.  
In this work, we consider a parity variant of the multiway cut problem.
A path\footnote{We emphasize that the term \emph{paths} refers to simple paths and not walks.
    This distinction is particularly important in parity-constrained settings, because the existence of a walk with an odd number of edges between two nodes $s$ and $t$ does not imply the existence of an odd-path between $s$ and $t$. This is in contrast to the non-parity-constrained settings where the existence of a walk between $s$ and $t$ implies the existence of a path between $s$ and $t$.
} is an \emph{odd-path} (\emph{even-path}) if the number of edges in the path is odd (even).
In the 
\omwnodec (similarly, \omwedgec), the input is a graph with a collection of terminal nodes and 
the goal is to delete the smallest number of non-terminal nodes (edges) so that the resulting graph has no odd-path between the terminals.
This is a generalization of \oddpathnodeblocker\ (and similarly, \oddpathedgeblocker), which is the problem of finding a minimum number of nodes (edges) that are disjoint from $s$ and $t$ that cover all $s-t$ odd-paths. 

Covering and packing paths has been a topic of intensive investigation in graph theory as well as polyhedral theory. 
Menger's theorem gives a perfect duality relation for covering $s-t$ paths: 
the minimum number of nodes (edges) that cover all $s-t$ paths is equal to the maximum number of node-disjoint (edge-disjoint) $s-t$ paths.
However, packing paths of restricted kinds has been observed to be a difficult problem in the literature.
One special case is when the paths are required to be of odd-length for which many structural results exist \cite{ChudnovskyEtAl2006,Schrijver2003,ibrahimpur2017min}. 
In this work, we study the problem of covering $s-t$ odd-paths and more generally all odd-paths between a given collection of terminals. 

Covering $s-t$ odd-paths in undirected graphs has been explored in the literature from the perspective of polyhedral theory---e.g., see Chapter 29 in Schrijver's book \cite{Schrijver2003}.
Given an undirected graph $G=(V,E)$ with distinct nodes $s,t\in V$ and non-negative edge lengths, we may find a shortest length $s-t$ odd-path in polynomial time. Edmonds gave a polynomial-time algorithm for the shortest length $s-t$ odd-path problem by reducing it to the minimum-weight perfect matching problem \cite{Edmonds1965,Edmonds1965b,LaPaughPapadimitriou1984}.
However, as observed by Schrijver and Seymour \cite{SchrijverSeymour1994}, his approach of reducing to a matching problem does not extend to address other fundamental problems about $s-t$ odd-paths. One such fundamental problem is the  \oddpathedgeblocker\ problem. 
Towards investigating \oddpathedgeblocker, Schrijver and Seymour \cite{SchrijverSeymour1994} considered the following polyhedron: 
\[
  \mathcal{P}^{\text{odd-cover}}:=\left\{x\in \R^{E}_+: \sum_{e\in P}x_e\ge 1\ \forall\ \text{$s-t$ odd-path $P$ in $G$}\right\} \enspace .
\]
This leads to a natural integer programming formulation of \oddpathedgeblocker: 
$\min$ $\left\{\sum_{e\in E}x_e: x\in \mathcal{P}^{\text{odd-cover}}\cap \Z^E\right\}$. 
By Edmonds' algorithm, we have an efficient separation oracle for $\mathcal{P}^{\text{odd-cover}}$ and hence there exists an efficient algorithm to optimize over $\mathcal{P}^{\text{odd-cover}}$ using the ellipsoid algorithm \cite{GrotschelEtAl1993}.
It was known that the extreme points of $\mathcal{P}^{\text{odd-cover}}$ are not integral.
Cook and Seb\H{o} conjectured that all extreme points of $\mathcal{P}^{\text{odd-cover}}$ are half-integral which was later shown by Schrijver and Seymour \cite{SchrijverSeymour1994}. 
Schrijver and Seymour's work also gave a min-max relation for the maximum fractional packing of $s-t$ odd-paths.
However, their work does not provide algorithms to address \oddpathedgeblocker. In fact, even the computational complexity of \oddpathedgeblocker\ has been open.

In this work, we undertake a comprehensive study of \omwnodec\ and \omwedgec\ in directed acyclic graphs (DAGs). In addition to approximability, we focus on fixed-parameter tractability. Fixed-parameter algorithms have served as an alternative approach to address \NP-hard problems~\cite{CyganEtAl2015}.
A fixed-parameter algorithm for a problem decides all the problem's instances of size $n$ in time $f(k)\cdot n^{O(1)}$ for some computable function $f$, where $k$ is some integer parameter.
Fixed-parameter algorithms for cut problems have provided novel insights into the connectivity structure of graphs \cite{CyganEtAl2015}.
The notion of \emph{important separators} and the \emph{shadow-removal technique} have served as the main ingredients in the design of fixed-parameter algorithms for numerous cut problems \cite{ChitnisEtAl2013,Marx2006,MarxRazgon2014,BringmannEtAl2016,MarxEtAl2013}.
Our work also builds upon the shadow-removal technique to design fixed-parameter algorithms but differs from known applications substantially owing to the parity constraint.
Parity-constrained cut problems have attracted much interest in the parameterized complexity community~\cite{LokshtanovRamanujan2012,LokshtanovEtAl2017,ReedEtAl2004,KratschWahlstrom2014} mainly due to their challenging nature: indeed, designing fixed-parameter algorithms for parity-constrained cut problems sparked the development of new and powerful techniques~\cite{ReedEtAl2004,KratschWahlstrom2014}.

\subsection{Our contributions}

The main focus of this work is \omwnodec\ in directed acyclic graphs (DAGs).
Before describing the reason for focusing on the subfamily of DAGs among directed graphs, we note that \omwnodec\ and \omwedgec\ are equivalent in directed graphs by standard reductions (e.g., see Lemma~\ref{lem:omwedgec-to-nodec}). 
The reason we focus on the subfamily of DAGs as opposed to all directed graphs is 
due to the following fact: it is \NP-complete to verify if a given directed graph has an $s\rightarrow t$ odd-path (e.g., see LaPaugh-Papadimitriou~\cite{LaPaughPapadimitriou1984}). This fact already illustrates a stark contrast in the complexity between \oddpathedgeblocker\ in undirected graphs and \diroddpathedgeblocker\ in directed graphs: while verifying feasibility of a solution to 
\oddpathedgeblocker\ 
in undirected graphs can be done in polynomial-time, verifying feasibility of a solution to 
\diroddpathedgeblocker\ 
in directed graphs is \NP-complete.
However, there exists a polynomial time algorithm to verify if a given directed \emph{acyclic} graph (DAG) has an $s\rightarrow t$ odd-path (e.g., see Lemma~\ref{lem:odd-path-decision}).
For this reason, we restrict our focus to DAGs. 

Our main contribution is a fixed-parameter algorithm for \omwnodec\ in DAGs.
We complement the fixed-parameter algorithm by showing \NP-hardness and tight approximability results as well as polyhedral results for the two terminal variant, namely \diroddpathnodeblocker, in DAGs.

In addition to the above results for DAGs, we also show \NP-hardness and an inapproximability result for \oddpathedgeblocker\ in undirected graphs. 

\subsection{Related work}
\label{sec:related-work}
We are not aware of any work on this problem in directed graphs.
We describe the known results in undirected graphs.
A simple reduction\footnote{Given an instance $G$ of {\sc Vertex Cover}, introduce two new nodes $s$ and $t$ that are adjacent to all nodes in~$G$ to obtain a graph $H$.
Then a set $S\subseteq V(G)$ is a vertex cover in $G$ if and only if $S$ is a feasible solution to \oddpathnodeblocker\ in $H$.} from {\sc Vertex Cover} 
shows that \oddpathnodeblocker\ in undirected graphs is \NP-hard and does not admit a $(2-\varepsilon)$-approximation for $\varepsilon>0$ assuming the Unique Games Conjecture~\cite{KhotRegev2008}.
These hardness results also hold for \omwnodec.
The most relevant results to this work are that of Lokshtanov and Ramanujan~\cite{LokshtanovRamanujan2012,Ramanujan2013}. 
They studied an extension of \omwnodec\ and \omwedgec that they termed as \pmwnodec (and \pmwedgec)---the input is an undirected graph and two subsets of terminals $T_e$ and $T_o$ and the goal is to find the smallest number of non-terminal nodes (edges) so that every node $u\in T_o$ has no odd-path to any node in $T_e\cup T_o$ and every node $u\in T_e$ has no even-path to any node in $T_e\cup T_o$.
Lokshtanov and Ramanujan designed a fixed-parameter algorithm for \pmwnodec by reducing the problem to \omwnodec\ and designing a fixed-parameter algorithm for \omwnodec.
However, their algorithmic techniques work only for undirected graphs and do not extend for \omwnodec\ in DAGs. 

Lokshtanov and Ramanujan also showed that \omwedgec is \NP-hard in undirected graphs for three terminals.
However, their reduction is not an approximation-preserving reduction.
Hence the approximability of \omwedgec\ in undirected graphs merits careful investigation.
In particular, the complexity of \omwedgec\ in undirected graphs even for the case of two terminals is open in spite of existing polyhedral work in the literature~\cite{SchrijverSeymour1994} for this problem. 

The subset odd cycle transversal problem (\textsc{SubsetOCT}) generalizes the \omwnodec problem in undirected graphs.
Here, the input is an undirected graph $G$ with a subset of vertices $T$ and the goal is to determine a smallest subset of vertices that intersects every odd cycle containing a vertex from $T$. Fixed-parameter algorithms for \textsc{SubsetOCT} are also known in the literature~\cite{LokshtanovEtAl2017b}. 


\subsection{Results}
\noindent{\bf Directed acyclic graphs.}
We recall that \omwnodec\ and \omwedgec\ are equivalent in DAGs by standard reductions.
Hence, all of the following results for DAGs hold for both problems.

The following is our main result. 
\begin{theorem}
\label{thm:omwnodec-dag}
  \omwnodec\ and \omwedgec in DAGs can be solved in $2^{O(k^2)}\cdot n^{O(1)}$ time, where $k$ is the size of the optimal solution and $n$ is the number of nodes in the input graph. 
\end{theorem}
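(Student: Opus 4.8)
The plan is to combine a parity-aware reduction to reachability with an adaptation of the shadow-removal framework to DAGs. First, by Lemma~\ref{lem:omwedgec-to-nodec} it suffices to solve \omwnodec, so I fix a DAG $D$ with terminal set $T$ and seek a minimum set $S$ of non-terminal nodes of size at most $k$ that hits every terminal-to-terminal odd-path. The starting observation is that acyclicity collapses the distinction between walks and paths, so odd-paths can be detected by reachability in the parity-doubled DAG $\calG$ underlying Lemma~\ref{lem:odd-path-decision}: each vertex $v$ is split into an even copy and an odd copy, each arc flips parity, and an $s\to t$ odd-path corresponds to a directed path from the even copy of $s$ to the odd copy of $t$. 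Deleting $v$ in $D$ corresponds to deleting both copies of $v$ in $\calG$, so the whole problem becomes a multiway-separation problem on $\calG$ in which we must destroy all directed paths between prescribed terminal copies of the appropriate parities.

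Next I would port the shadow-removal technique of Chitnis et al.~\cite{ChitnisEtAl2013} to this parity-doubled setting. For a candidate solution $S$, define its forward and reverse shadows relative to the terminal copies in $\calG$, namely the vertex copies that, after deleting the copies of $S$, can no longer reach (respectively, be reached from) any terminal copy of the relevant parity. The core step is the shadow-covering lemma: using the derandomized random-sampling construction, I would produce a family of $2^{O(k^2)}\cdot\poly(n)$ sets with the guarantee that for some optimal solution $S^\star$ there is a member $Z$ of the family containing the shadow of $S^\star$. For each $Z$ in the family I then apply the \torso\ operation to excise $Z$ while adding shortcut arcs that preserve exactly the terminal-to-terminal odd-path relationships; the design of \torso\ must respect the parity labels of $\calG$ so that it neither creates nor destroys odd-connectivity among terminals, and so that the optimum of the torso instance equals that of the original.

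It then remains to solve the resulting \emph{shadowless} instance, i.e., the instance in which the sought solution leaves no shadow. In the shadowless regime every surviving non-terminal copy lies on some terminal-to-terminal path of the relevant parity, which pins the solution down to a highly structured object: I would exploit the topological order of the DAG together with the parity labels to argue that the relevant cuts can be taken among a bounded collection of important-separator-like candidates, and then either recurse or branch over these candidates to find a shadowless optimum in FPT time. Multiplying the $2^{O(k^2)}$ size of the shadow-covering family by the cost of solving each shadowless instance yields the claimed $2^{O(k^2)}\cdot n^{O(1)}$ bound, and the edge version follows from the same reduction.

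The step I expect to be the main obstacle is the parity-correct shadow removal, specifically defining \torso\ (the \texttt{ParityTorso}) and proving that excising a shadow preserves precisely the set of terminal pairs joined by odd-paths. In the non-parity directed multiway cut one freely ``pushes'' a separator toward the terminals and replaces it by important separators; here such a replacement can flip the parity of a surviving path, so the standard exchange arguments do not transfer verbatim. Reconciling the shadow-covering guarantee with the parity-preservation requirement of the torso, and then establishing that the shadowless instance admits a structured, important-separator-style optimum despite the parity constraint, is where the bulk of the technical work, and the genuine broadening of the shadow-removal framework promised in the introduction, will lie.
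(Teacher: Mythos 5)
Your proposal follows the same high-level skeleton as the paper (shadow-covering family via the derandomized Chitnis~\etal\ machinery, then a parity-preserving \torso, then solving the resulting structured instance), but it has a genuine gap at exactly the step where the paper's real contribution lies: how to solve the shadowless instance. You write that you would ``exploit the topological order of the DAG together with the parity labels to argue that the relevant cuts can be taken among a bounded collection of important-separator-like candidates, and then either recurse or branch over these candidates.'' This is an assertion, not an argument, and it is precisely the approach that the paper's Remark~1 flags as unclear: nobody knows how to make parity-aware important separators exploit acyclicity. The paper does something entirely different here. It shows (Theorem~\ref{thm:dag-easy-instances}) that once the shadow of a solution $M$ has been excised by \torso\ (so that every node of $s_{G'}(M)$ has degree at most one in $G'\setminus M$), the set $M$ is a solution if and only if the underlying \emph{undirected} graph $\langle G'\rangle\setminus M$ is bipartite with all terminals in one part. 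The forward direction of this equivalence hinges on acyclicity: every surviving non-shadow node has both a $T\rightarrow v$ path and a $v\rightarrow T$ path avoiding $M$, and their concatenation is a genuine $T$-path (not a closed walk, since $G$ is a DAG), hence even; this forces a consistent parity labeling, i.e., a bipartition. The easy instance is then handed to an \MinBipartization{} (odd cycle transversal) solver, which is FPT. Without this reduction, or some substitute for it, your outline does not yield an algorithm.

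A secondary problem is your reformulation on the parity-doubled graph $\calG$. Separating the even terminal copies from the odd terminal copies in $\calG$ is \emph{not} a standard multiway cut or separation problem, because deleting a node $v$ of $D$ forces the deletion of both copies $v_L,v_R$ in $\calG$; with unconstrained deletions the separation problem in $\calG$ is a polynomial min vertex cut, so the entire hardness lives in this pairing constraint, and the important-separator and shadow machinery you want to port does not apply to paired deletions. The paper sidesteps this by defining shadows in the original DAG $G$ via parity-free reachability to and from $T$, so that Theorem~\ref{thm:random_set_by_chitnis} applies verbatim; parity enters only through the \torso\ operation and the bipartiteness argument. Relatedly, your worry about exchange arguments flipping parity is resolved in the paper's Lemma~\ref{lem:M'improvesM} without any parity bookkeeping: the replacement solution $M'$ blocks \emph{every} path from $M\setminus M'$ to $T$, so any odd $T$-path formerly blocked by $M\setminus M'$ remains blocked, irrespective of parity.
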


We briefly remark on the known techniques to illustrate the challenges in designing the fixed-parameter algorithm for \omwnodec\ in DAGs.
To highlight the challenges, we will focus on the case of $2$ terminals, namely \diroddpathnodeblocker\ in DAGs.

\medskip
\noindent
{\bf Remark 1.} It is tempting to design a fixed-parameter algorithm for \diroddpathnodeblocker\ by suitably modifying the definition of \emph{important separators} to account for parity and then attempt to use the shadow-removal framework for directed graphs \cite{ChitnisEtAl2013}. 
However, it is unclear how the suitable modifications can exploit the acyclic property of the input directed graph. 

\medskip
\noindent
{\bf Remark 2.} The next natural attempt is to rely on the fixed-parameter algorithm for multicut in DAGs by Kratsch et al.~\cite{KratschEtAl2015}.
Their technique crucially relies on reducing the degrees of the source terminals by suitably branching to create a small number of instances. 
On the one hand, applying their branching rule directly to reduce the degree of $s$ in \diroddpathnodeblocker\ will blow up the number of instances in the branching.
On the other hand, it is unclear how to modify their branching rule to account for parity.


\medskip
\noindent
Given the difficulties mentioned in the above two remarks, our algorithm builds upon the shadow-removal framework.
We exploit the acyclic property of the input directed graph to reduce the instance to an instance of {\sc Odd Cycle Transversal}. In {\sc Odd Cycle Transveral}, the goal is to remove the smallest number of nodes to make an undirected graph bipartite.
{\sc Odd Cycle Transversal} is fixed-parameter tractable when parameterized by the number of removed nodes.
We view our technique as an illustration of the broad-applicability of the shadow-removal framework.

\medskip

We complement our fixed-parameter algorithm in Theorem \ref{thm:omwnodec-dag} with tight approximability results for the special case of $2$ terminals.
We refer the reader to Table \ref{table:approx} for a summary of the complexity and approximability results.
Unlike the case of undirected graphs where there is still a gap in the approximability of both \oddpathedgeblocker\ and \oddpathnodeblocker, we present tight approximability results for both \diroddpathedgeblocker\ and \diroddpathnodeblocker in DAGs:
\begin{theorem}
\label{thm:diroddpathnodeblocker-approx}
  We have the following inapproximability and approximability results:
  \begin{enumerate}
    \item[(i)] \diroddpathnodeblocker\ in DAGs is \NP-hard, 
    and has no efficient $(2-\varepsilon)$-approximation for any $\varepsilon>0$ assuming the Unique Games Conjecture.
    \item[(ii)] There exists an efficient $2$-approximation algorithm for \diroddpathnodeblocker\ in DAGs.
  \end{enumerate}
\end{theorem}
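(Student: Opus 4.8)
The plan is to treat the two parts separately: the inapproximability in (i) via an approximation-preserving reduction from \vc, and the algorithm in (ii) via a reduction to minimum vertex cut in an auxiliary ``parity double cover'' DAG. For part (i), I would reduce from \vc, which is \NP-hard and admits no efficient $(2-\varepsilon)$-approximation under the Unique Games Conjecture~\cite{KhotRegev2008}. Given an instance $G=(V,E)$ of \vc, first fix a topological order $v_1,\dots,v_n$ of $V$ and orient every edge from its lower-indexed to its higher-indexed endpoint; this produces an acyclic orientation. Build a DAG $D$ with one internal node $n_v$ per $v\in V$, two new terminals $s,t$, arcs $s\to n_v$ and $n_v\to t$ for every $v$, and an arc $n_u\to n_v$ whenever $\{u,v\}\in E$ is oriented $u\to v$. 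Placing $s$ first and $t$ last in the order keeps $D$ acyclic. The only non-terminal nodes are the $n_v$, and the length-$3$ paths $s\to n_u\to n_v\to t$ are in bijection with the edges of $G$; these are the shortest odd $s\to t$ paths.

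Then I would show that $S\subseteq V$ is a vertex cover of $G$ if and only if $\{n_v:v\in S\}$ is feasible for \diroddpathnodeblocker\ in $D$. For the forward direction, if $S$ covers every edge, then deleting $\{n_v:v\in S\}$ destroys every arc $n_u\to n_v$, so the only surviving $s\to t$ paths are the even paths $s\to n_w\to t$; hence no odd $s\to t$ path remains. The reverse direction is immediate from the length-$3$ paths: any feasible blocker must intersect $s\to n_u\to n_v\to t$ for each edge, i.e., delete $n_u$ or $n_v$. Since every feasible blocker may be assumed to consist solely of $n_v$-nodes, this is a size-preserving correspondence; the optimal blocker size equals the optimal vertex cover size, so the reduction is approximation-preserving and part (i) follows.

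For part (ii), the key device is the parity double cover, which is faithful precisely because $D$ is acyclic: create two copies $v^0,v^1$ of every node and, for each arc $u\to v$, the arcs $u^0\to v^1$ and $u^1\to v^0$. Since a DAG has no directed cycles, every walk is a simple path, so odd $s\to t$ paths in $D$ correspond exactly to $s^0\to t^1$ paths in the double cover $H$ (this faithfulness is also what underlies the polynomial-time odd-path decision procedure for DAGs). Deleting a node $v$ from $D$ corresponds to deleting the pair $\{v^0,v^1\}$ from $H$. The algorithm is: compute a minimum internal $s^0$--$t^1$ vertex cut $C$ in $H$ in polynomial time via max-flow (split each internal node into an in/out pair of capacity $1$, and give the terminals capacity $\infty$), then output its projection $S=\{v:\,v^0\in C\text{ or }v^1\in C\}$. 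Feasibility and the ratio follow from two observations: deleting $S$ from $D$ removes in $H$ a superset of $C$, hence destroys all $s^0\to t^1$ paths and thus all odd $s\to t$ paths, so $S$ is feasible with $|S|\le|C|$; conversely, for any optimal blocker $S^\star$ of size $\mathrm{OPT}$, the lifted set $\{v^0,v^1:v\in S^\star\}$ is an $s^0$--$t^1$ vertex cut of size $2\,\mathrm{OPT}$, so $|C|\le 2\,\mathrm{OPT}$. Combining, $|S|\le|C|\le 2\,\mathrm{OPT}$.

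I expect the main obstacle to be in part (i): producing a genuine DAG while keeping the reduction gap-preserving. The acyclic-orientation trick settles the DAG requirement, but one must verify that the longer odd $s\to t$ paths (corresponding to longer directed paths among the $n_v$) do not break the equivalence — this is exactly what the forward direction handles, since a vertex cover deletes every inter-node arc simultaneously. In part (ii) the only subtlety is that the forced ``delete the pair $\{v^0,v^1\}$ together'' constraint is precisely what separates the problem from an exactly solvable vertex cut; bounding the cost of this constraint by a factor of $2$ is what yields the tight ratio matching the lower bound from (i).
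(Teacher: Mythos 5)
Your proposal is correct, and its two key ideas coincide with the paper's: for part (i) you give exactly the paper's reduction from \vc\ (orient the edges of $G$ acyclically, attach $s$ and $t$ to every node, and observe that vertex covers correspond bijectively to feasible blockers), and for part (ii) you build the same bipartite ``parity double cover'', compute a minimum cut, and project it back. The differences are in how part (ii) is packaged and analyzed. First, you work directly with the node version and vertex cuts, whereas the paper proves the $2$-approximation for \diroddpathedgeblocker\ using edge cuts and transfers it to the node version via the standard node/edge equivalence; your route is more direct for the theorem as stated. Second, you certify the factor of $2$ combinatorially, by lifting an optimal blocker $S^\star$ to a vertex cut of size $2\card{S^\star}$ in the double cover, which is more elementary; the paper instead compares the computed cut against an optimal \emph{fractional} solution $x^*$ of the odd path blocker LP (doubling $x^*$ gives a feasible fractional cut in the double cover, so the cut costs at most $2c(x^*)$). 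The paper's LP-based analysis proves the same ratio but additionally yields the corollary that the integrality gap of the odd path blocker LP in DAGs is at most $2$, which your argument does not give. One corner case worth stating explicitly in your node-level formulation: if $D$ contains the arc $s\rightarrow t$, then no set of internal nodes is feasible; your max-flow computation detects this (the minimum internal cut is infinite), so the algorithm can correctly report infeasibility, but this should be said rather than left implicit.
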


We emphasize that our $2$-approximation algorithm for \diroddpathedgeblocker\ mentioned in Theorem~\ref{thm:diroddpathnodeblocker-approx} is a combinatorial algorithm and not LP-based.
We note that Schrijver and Seymour's result~\cite{SchrijverSeymour1994} that all extreme points of $\mathcal{P}^{\text{odd-cover}}$ are half-integral holds only in undirected graphs and fails in DAGs---see Theorem~\ref{thm:odd-path-blocker-polyhedron-not-half-integral} below.
Consequently, we are unable to design a $2$-approximation algorithm using the extreme point structure of the natural LP-relaxation of the path-blocking integer program.
Instead, our approximation algorithm is combinatorial in nature.
The correctness argument of our algorithm also shows that the integrality gap of the LP-relaxation of the path-blocking integer program is at most $2$ in DAGs.

\begin{theorem}
  \label{thm:odd-path-blocker-polyhedron-not-half-integral}
  The odd path cover polyhedron given by 
  \[
    \mathcal{P}^{\text{odd-cover-dir}}:=\left\{x\in \R_+^E: \sum_{e\in P}x_e\ge 1\ \forall\ \text{$s\rightarrow t$ odd-path $P$ in $D$}\right\}
  \]
  for directed acyclic graphs $D=(V,E)$ is not necessarily half-integral.
\end{theorem}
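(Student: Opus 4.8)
The plan is to refute half-integrality by exhibiting a single explicit DAG $D=(V,E)$ with a source $s$ and a sink $t$, together with one vertex (extreme point) $x^{\ast}$ of $\mathcal{P}^{\text{odd-cover-dir}}$ having a coordinate outside $\tfrac12\Z=\{0,\tfrac12,1,\tfrac32,\dots\}$. To disprove half-integrality it suffices to produce one such coordinate, so the whole task reduces to: (i) writing down $D$ and a candidate point $x^{\ast}$; (ii) certifying feasibility, i.e. $\sum_{e\in P}x^{\ast}_e\ge 1$ for every $s\to t$ odd-path $P$; and (iii) certifying that $x^{\ast}$ is a vertex, i.e. that the inequalities tight at $x^{\ast}$ span $\R^{E}$. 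Since $D$ is acyclic, the set of $s\to t$ odd-paths is finite and, by Lemma~\ref{lem:odd-path-decision}, recognizable, so both (ii) and (iii) are finite verifications.

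To locate a good candidate I would be guided by an elementary fact about $0/1$ covering polyhedra $\{x\ge 0:\ Ax\ge \mathbf{1}\}$: a vertex is the unique solution of a full-rank subsystem of tight inequalities, and by Cramer's rule a non-half-integral coordinate forces the square tight submatrix (with rows indexed by odd-paths) to have determinant of absolute value at least $3$, hence a support of size at least $4$ and at least four interlocking tight odd-paths. The emblematic pattern achieving determinant $3$ is the incidence $J-I$ (the all-ones matrix minus the identity) on four ``core'' edges $f_1,f_2,f_3,f_4$: four odd-paths $P_1,\dots,P_4$ with $P_i$ using exactly the three cores $\{f_j:j\neq i\}$ pin the unique solution $x^{\ast}_{f_1}=\dots=x^{\ast}_{f_4}=\tfrac13$ on the support, with every other edge set to $0$. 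I would therefore construct $D$ so that its odd $s\to t$-paths realize this interlocking pattern, set $x^{\ast}_{f_i}=\tfrac13$ and $x^{\ast}_e=0$ elsewhere, verify feasibility by enumeration, and check that the four tight odd-paths together with the active nonnegativity constraints $x_e=0$ supply $|E|$ linearly independent tight constraints, so that $x^{\ast}$ is a vertex with coordinate $\tfrac13\notin\tfrac12\Z$.

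The main obstacle is the interplay between parity and acyclicity when realizing this interlocking pattern. Four odd-paths whose core-incidence is exactly $J-I$ cannot be realized by the core edges alone: any two cores that co-occur on a path must be comparable in the topological order, which forces the four cores onto a single chain and then forces the ``all-but-one'' paths to repeat a parallel connection. One is therefore compelled to insert auxiliary ``filler'' edges (carrying value $0$ at $x^{\ast}$) and to build each $P_i$ by bypassing one core through a parity-flipping detour. The difficulty is that independent bypass gadgets which create an odd-path skipping one designated core simultaneously create odd-paths skipping several cores; such a ``multi-skip'' path uses only one or two cores and hence has $x^{\ast}$-weight $\tfrac13$ or $\tfrac23<1$, which would destroy feasibility, yet forbidding multi-skips by routing all bypasses through a common hub conflicts with acyclicity. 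The crux of the proof is thus to design a gadget whose parities are arranged so that the only cheap odd-paths are exactly the intended four, and to certify that every other odd $s\to t$-path meets the support in at least three edges; once such a gadget is in hand, feasibility and the rank computation are routine finite checks.
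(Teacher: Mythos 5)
There is a genuine gap: your proposal is a proof \emph{strategy}, not a proof, and the part you leave open is the entire mathematical content of the theorem. Since the statement asserts the mere existence of a DAG with a non-half-integral extreme point, everything hinges on actually exhibiting the graph; your plan correctly reduces the task to (i) a construction, (ii) a feasibility check, and (iii) a rank certificate for the tight constraints, but then stops exactly at (i), conceding that ``the crux of the proof is thus to design a gadget'' and that you would need one ``once such a gadget is in hand.'' Worse, your own analysis exposes concrete obstructions to the $J-I$ pattern you propose and does not overcome them: if the four cores lie on a chain and each intended path bypasses one core via a parity-flipping detour, then (taking the all-cores chain path to be even so that each single-skip path is odd) every \emph{triple}-skip path is again odd and carries weight only $\tfrac13$, destroying feasibility; preventing such multi-skips by a shared hub is, as you note, in tension with acyclicity. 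You give no argument that a determinant-$3$ (denominator-$\tfrac13$) configuration is realizable at all by odd $s\rightarrow t$ paths of a DAG, so the proposal may be chasing a pattern that simply does not exist in this class of polyhedra.

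By contrast, the paper's proof sidesteps the denominator-$3$ pattern entirely: it writes down an explicit DAG (Fig.~\ref{fig:Escher-wall}, a brick-wall-like graph with five thick edges and all other edges subdivided), and a point with coordinates in $\{\tfrac14,\tfrac12,\tfrac34\}$, i.e., denominator $4$ rather than $3$. It certifies feasibility directly, certifies optimality of the point for the unit-cost LP by exhibiting a dual fractional packing of six odd paths each carrying flow $\tfrac12$ (Fig.~\ref{fig:Escher-flow}), and certifies extremality by listing $m$ tight constraints (nonnegativity off the support plus eight path constraints $P_1,\dots,P_8$) and checking their linear independence. If you want to salvage your approach, the honest options are either to carry out the gadget design and prove the ``no cheap odd path'' property --- which your own obstruction analysis suggests is delicate and possibly impossible --- or to abandon the $\tfrac13$ target and aim for a quarter-integral extreme point, at which point you are reconstructing the paper's example.
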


\begin{table}[ht]
  \begin{center}
    \small{
    \begin{tabular}{ccc}
      \toprule
      {\bf Problem} & {\bf Undirected graphs} & {\bf DAGs}\\
      \midrule
      \oddpathnodeblocker & {\color{gray} $(2-\varepsilon)$-inapprox} & $2$-approx (Thm. \ref{thm:diroddpathnodeblocker-approx})\\
	                      &                                           & $(2-\varepsilon)$-inapprox (Thm \ref{thm:diroddpathnodeblocker-approx})\\
      \midrule
      \oddpathedgeblocker & {\color{gray} LP is half-integral \cite{SchrijverSeymour1994}} & LP is NOT half-integral \\
                          &                                                & (Thm. \ref{thm:odd-path-blocker-polyhedron-not-half-integral})\\
                          & {\color{gray} $2$-approx \cite{SchrijverSeymour1994}}          & $2$-approx (Thm. \ref{thm:diroddpathnodeblocker-approx})\\
                          & $(\frac{6}{5}-\varepsilon)$-inapprox (Thm. \ref{thm:odd-path-edge-blocker-hardness}) & $(2-\varepsilon)$-inapprox (Thm \ref{thm:diroddpathnodeblocker-approx})\\
     \midrule
     \omwedgec & {\color{gray} \NP-hard for $3$ terminals \cite{LokshtanovRamanujan2012}} & $2^{O(k^2)}\cdot \textnormal{poly}(n)$ (Thm. \ref{thm:omwnodec-dag}) \\
       & $(\frac{6}{5}-\varepsilon)$-inapprox for $2$ terminals & \\
       & (Thm \ref{thm:odd-path-edge-blocker-hardness}) & \\
      \midrule
      \omwnodec & & $2^{O(k^2)}\cdot \textnormal{poly}(n)$ (Thm. \ref{thm:omwnodec-dag}) \\
      \bottomrule
    \end{tabular}
    \caption{Complexity and approximability. Text in gray refers to known results while text in black refers to the results from this work.\label{table:approx}}
    }
  \end{center}
\end{table}

\noindent{\bf Undirected graphs.} We next turn our attention to undirected graphs.
As mentioned in Section~\ref{sec:related-work}, the problem \oddpathnodeblocker\ is \NP-hard and does not admit a $(2-\varepsilon)$-approximation assuming the Unique Games Conjecture. 
We are unaware of a constant factor approximation for \oddpathnodeblocker. 
For \oddpathedgeblocker, the results of Schrijver and Seymour~\cite{SchrijverSeymour1994} show that the LP-relaxation of a natural integer programming formulation of \oddpathedgeblocker\ is half-integral and thus leads to an efficient $2$-approximation algorithm. 
However, the complexity of \oddpathedgeblocker\ was open.
We address this gap in complexity by showing the following \NP-hardness and inapproximability results. 

\begin{restatable}{theorem}{undirOddPathBlockerHardness}
\label{thm:odd-path-edge-blocker-hardness}
  \oddpathedgeblocker\ is \NP-hard and has no efficient $(6/5-\varepsilon)$-approximation assuming the Unique Games Conjecture.  
\end{restatable}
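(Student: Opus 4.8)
The plan is to establish both statements through a single gap-preserving reduction from \vc, which is \NP-hard and, assuming the Unique Games Conjecture, admits no $(2-\varepsilon)$-approximation~\cite{KhotRegev2008}. I will use the standard UGC-hard gap instances: for every $\varepsilon > 0$ it is \NP-hard to distinguish graphs $H$ on $n$ vertices whose minimum vertex cover has size at most $(1/2 + \varepsilon)n$ from those whose minimum vertex cover has size at least $(1 - \varepsilon)n$. The goal is to transform $H$ into an undirected instance $(G, s, t)$ of \oddpathedgeblocker\ so that the optimal blocker size equals $\mathrm{vc}(H) + 2n$, where $\mathrm{vc}(H)$ denotes the minimum vertex cover size of $H$; the additive overhead of exactly $2n$ is what converts the $[1/2, 1]$ vertex-cover gap into a $[5/2, 3]$ blocker gap, yielding the ratio $6/5$.

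The construction I have in mind attaches to $s$ and $t$ one gadget per vertex and one gadget per edge of $H$. Each vertex gadget for $v$ is a small fixed graph that (i) forces every feasible blocker to delete a fixed number --- namely two --- of its internal edges, accounting for the $2n$ overhead, and (ii) contains a distinguished selector edge $\sigma_v$ whose deletion corresponds to placing $v$ into the vertex cover. Each edge gadget for $\{u, v\} \in E(H)$ is routed through the selector edges $\sigma_u$ and $\sigma_v$ and is engineered to create $s$--$t$ odd-paths that traverse both selectors; crucially, the lengths along the gadget are chosen so that precisely the paths encoding an uncovered edge are odd, while the paths one would otherwise use to ``cheat'' are even and hence impose no constraint. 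Deleting $\sigma_u$ or $\sigma_v$ destroys every odd-path through the gadget, so a vertex cover of $H$ yields a blocker, establishing the completeness direction $\mathrm{OPT}(G) \le \mathrm{vc}(H) + 2n$.

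The soundness direction --- showing $\mathrm{OPT}(G) \ge \mathrm{vc}(H) + 2n$, i.e.\ that no blocker can beat a vertex cover plus the forced overhead --- is where I expect the main difficulty to lie, for two reasons. First, since feasibility concerns simple odd-\emph{paths} rather than walks, I must rule out long odd-paths threading through several gadgets that could in principle be severed more cheaply than by selecting an endpoint in each edge gadget; the gadget lengths and the parity pattern must be set up so that any such composite path is either even or already blocked by the forced overhead edges. Second, I must argue that replacing the deletion of a non-selector edge by the deletion of an appropriate selector edge never increases the cost, so that an optimal blocker may be assumed to cut only selector edges beyond the forced overhead, at which point its selector edges form a vertex cover of $H$. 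This is an exchange argument that relies on the fixed, symmetric structure of the gadgets.

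Combining the two directions gives $\mathrm{OPT}(G) = \mathrm{vc}(H) + 2n$ exactly. Feeding in the \NP-hardness of \vc\ immediately yields \NP-hardness of \oddpathedgeblocker, since an exact algorithm for the latter would compute $\mathrm{vc}(H)$. Feeding in the UGC gap instances, a yes-instance gives $\mathrm{OPT}(G) \le (5/2 + \varepsilon)n$ and a no-instance gives $\mathrm{OPT}(G) \ge (3 - \varepsilon)n$, so any efficient $(6/5 - \varepsilon')$-approximation would distinguish the two cases, contradicting the Unique Games Conjecture. The only remaining routine checks are that $G$ has size polynomial in $n$ and that blocker feasibility stays tractable inside the reduction, both of which follow from the bounded size of each gadget together with Edmonds' polynomial-time shortest odd-path routine in undirected graphs.
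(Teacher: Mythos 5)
There is a genuine gap: your proposal is a plan, not a proof. The entire technical content of your reduction --- the vertex gadgets, the selector edges, the edge gadgets with their parity-tuned routings --- is never constructed, and the soundness direction (\(\mathrm{OPT}(G) \ge \mathrm{vc}(H) + 2n\)), which you yourself identify as the crux, is left entirely unargued. The two obstacles you flag (simple odd-paths threading several gadgets and the exchange argument reducing an arbitrary optimal blocker to one consisting of selector edges plus forced overhead) are precisely the known difficulties of parity-cut gadgetry in undirected graphs; naming them is not the same as overcoming them. Moreover, the ``overhead of exactly \(2n\)'' is reverse-engineered from the target ratio \(6/5\) rather than derived from any gadget: the identity \((1+c)/(\tfrac12+c) = 6/5\) forces \(c=2\), but nothing in your write-up explains why a gadget forcing exactly two deletions (no fewer, and not circumventable by blocking edges shared across gadgets) should exist --- indeed, if such a family of gadgets existed with overhead \(c<2\), your argument would give a hardness factor strictly better than \(6/5\), which should make you suspicious that the number \(2\) is an artifact of the desired conclusion rather than of a construction.

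For comparison, the paper obtains the statement by a different and much lighter route: an approximation-factor-preserving reduction from \mwc, whose \((6/5-\varepsilon)\) UGC-inapproximability is already known \cite{ManokaranEtAl2008,AngelidakisEtAl2017}, rather than from \vc. Given \((G,T,k)\), one attaches to each terminal \(v\) a gadget with new nodes \(x_v, x'_v\) and edges \(\set{s,x_v}, \set{t,x_v}, \set{x_v,v}, \set{x'_v,v}, \set{x_v,x'_v}\), each replicated \(m+1\) times so that no blocker of size \(k\le m\) uses gadget edges. The node \(x'_v\) acts as a parity adjuster: any surviving path in \(G\) between two distinct terminals extends to an odd \(s\)--\(t\) path of \(H\) regardless of its parity, and conversely every odd \(s\)--\(t\) path of \(H\) contains a terminal-to-terminal path of \(G\). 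Hence feasible blockers inside \(E(G)\) coincide exactly with multiway cuts, the optimum is preserved with no additive overhead, and both \NP-hardness and the \((6/5-\varepsilon)\) bound transfer verbatim. If you want to salvage your approach, you would need to exhibit the gadgets explicitly and prove the exchange and anti-cheating lemmas; as it stands, the statement you would need (\(\mathrm{OPT}(G)=\mathrm{vc}(H)+2n\) for some polynomial-size \(G\)) is itself an open construction problem, not a step you may assume.
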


\noindent {\bf Organization.} We summarize the preliminaries in Section \ref{sec:prelims}.
We devise the fixed-parameter algorithms for DAGs (Theorem~\ref{thm:omwnodec-dag}) in Section~\ref{sec:DAG-FPT}.
We complement with approximability results for DAGs (Theorems~\ref{thm:diroddpathnodeblocker-approx} and \ref{thm:odd-path-blocker-polyhedron-not-half-integral}) in Section~\ref{sec:DAG-approx-hardness}.
Next, we focus on undirected graphs and present the inapproximability result (Theorem~\ref{thm:odd-path-edge-blocker-hardness}) in Section \ref{sec:undir-approximability}.
We conclude by discussing a few open problems in Section~\ref{sec:discussion}.

\subsection{Preliminaries}
\label{sec:prelims}
Let $G$ be a (directed) graph with vertex set $V(G)$ and edge set $E(G)$.
For single vertices $v\in V(G)$, we will frequently use $v$ instead of $\set{v}$.
For a subset $W\subseteq V(G)$, \emph{a $W$-path in $G$} is a path with both of its end nodes in $W$.

For a directed acyclic graph $G$ and node sets $T$, $V^{\infty} \subseteq V(G)$ where $T\subseteq V^{\infty}$, 
an \emph{odd multiway cut} in $G$ is a set $M \subseteq V(G) \setminus V^\infty$ of nodes that intersects every odd $T$-path in $G$.
We refer to elements of $T$ as \emph{terminals}, elements of $V(G)\setminus T$ as \emph{non-terminals}, and elements of $V^{\infty}$ as \emph{protected nodes}.

We restate the problem of \omwnodec\ in DAGs to set the notation.


\defparproblem{\omwnodec in DAGs}{$k$}{A DAG $G$ with a set $V^\infty\subseteq V(G)$ of protected nodes and set $T\subseteq V^\infty$ of terminal nodes, and an integer $k\in\Z_+$.}{Verify if there is an odd multiway node cut in $G$ of size at most $k$.} 


For subsets $X$ and $Y$ of $V(G)$ we say that $M \subseteq V(G) \setminus V^\infty$ is an \emph{$X \rightarrow Y$ separator} in $G$ when $G \setminus M$ has no path from $X$ to $Y$.
The set of nodes that can be reached from a node set $X$ in $G$ is denoted by $\calR_G(X)$.
We note that $\calR_G(X)$ always includes $X$.

We define the \emph{forward shadow} of a node set $M$ to be $f_G(M):=V(G \setminus M) \setminus \calR_{G \setminus M}(T)$, i.e., the set of nodes $v$ such that there is no $T \rightarrow v$ path in $G$ disjoint from $M$.
Similarly, the \emph{reverse shadow} of $M$, denoted $r_G(M)$, is the set of nodes $v$ from which there is no path to $T$ in~$G \setminus M$.
Equivalently, the reverse shadow is $f_{G^\text{rev}}(M)$, where $G^\text{rev}$ is the graph obtained from~$G$ by reversing all the edge orientations.
We refer to the union of the forward and the reverse shadow of $M$ in $G$, as \emph{shadow} of $M$ in $G$ and denote it by $s_G(M)$.
A set $M \subseteq V(G)$ is \emph{thin}, if every node $v \in M$ is not in $r_G(M \setminus \set{v})$.

We need the notion of important separators \cite{Marx2006}.
An $X \rightarrow Y$ separator $M'$ is said to \emph{dominate} another $X \rightarrow Y$ separator $M$, if $\card{M'} \leq \card{M}$ and $\calR_{G \setminus M}(X) \subsetneq \calR_{G \setminus M'}(X)$. A minimal $X \rightarrow Y$ separator that is not dominated by any other separator is called an \emph{important $X \rightarrow Y$ separator}.

For a directed graph $G$, its underlying undirected graph $\langle G\rangle$ is the undirected graph obtained from $G$ by dropping the edge orientations.
In an undirected graph $H$ with protected nodes~$V^\infty$, an \emph{odd cycle transversal} is a set $U \subseteq V(H) \setminus V^\infty$ of nodes such that $H \setminus U$ is bipartite.
The problem of finding a minimum odd cycle transversal in a given instance $(H,V^\infty)$ is the \MinBipartization{} problem.
This problem is \NP-hard, but admits fixed-parameter algorithms when parameterized by the size $k$ of an optimal solution.
The asymptotically fastest fixed-parameter algorithm for \MinBipartization{} in terms of $k$ is due to Lokshtanov et al.~\cite{LokshtanovEtAl2014}; it runs in time $2.32^k\cdot n^{O(1)}$, and is based on linear programming techniques.
While their algorithm does not allow for protected nodes, the problem \MinBipartization\ with protected nodes can be reduced to \MinBipartization\ without protected nodes by iteratively replacing each protected node with $k+1$ nodes and connecting them to the same set of neighbors as the original node.
We thus have:
\begin{proposition}
  \label{thm:fastoct}
  There is an algorithm that, given an undirected $n$-node graph $H$, a set $V^\infty\subseteq V(H)$ of protected nodes and an integer $k$, decides if $H$ admits an odd cycle transversal of size at most $k$ that is disjoint from $V^\infty$, and if so, returns one. Moreover, the algorithm runs in time $2.32^k\cdot n^{O(1)}$.
\end{proposition}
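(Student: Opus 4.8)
The plan is to reduce the protected-node version of \MinBipartization\ to the standard version (without protected nodes) via a gadget that makes protected nodes effectively undeletable, and then to invoke the algorithm of Lokshtanov et al. (running in time $2.32^k\cdot n^{O(1)}$) as a black box on the transformed instance. Concretely, I would construct a graph $H'$ from $H$ by replacing each protected node $v\in V^\infty$ with $k+1$ pairwise non-adjacent copies $v_1,\dots,v_{k+1}$, each adjacent to the same neighbors that $v$ had, performing this replacement iteratively over all of $V^\infty$ (so that the copy-sets of two adjacent protected nodes become completely joined). Since each protected node contributes at most $k$ extra vertices, $H'$ has at most $(k+1)\,n=O(kn)$ nodes, and (using $k\le n$, which we may assume) running the black-box algorithm on $H'$ preserves the claimed running time $2.32^k\cdot n^{O(1)}$.

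First I would establish the key structural fact that this ``blow-up by independent sets'' preserves bipartiteness: for every $U\subseteq V(H)\setminus V^\infty$, the graph $H\setminus U$ is bipartite if and only if $H'\setminus U$ is bipartite. For the forward direction I would extend a proper $2$-coloring of $H\setminus U$ by giving each copy of $v$ the color of $v$. For the reverse direction I would contract every family of copies back to its original protected node; this maps any odd cycle of $H'\setminus U$ to an odd closed walk of $H\setminus U$ of the same length, and an odd closed walk always contains an odd cycle. Because $U$ avoids $V^\infty$, all copies survive in $H'\setminus U$, so the equivalence applies verbatim.

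Next I would argue that solutions of size at most $k$ correspond in both directions. Any odd cycle transversal $U$ of $H$ with $U\cap V^\infty=\emptyset$ and $\card{U}\le k$ is, by the structural fact, an odd cycle transversal of $H'$. Conversely, given an odd cycle transversal $U'$ of $H'$ with $\card{U'}\le k$, I would note that $U'$ cannot contain all $k+1$ copies of any protected node, so at least one copy of each protected node survives in $H'\setminus U'$; I would then show that deleting from $U'$ every copy it contains yields a set $\hat U$ that is still an odd cycle transversal. The reason is that each removed copy $v_i$ can be reinserted into the bipartite graph $H'\setminus U'$ with the color of a surviving copy $v_j$ of the same original node---legal because copies of one node are mutually non-adjacent and share an identical neighborhood. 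The resulting $\hat U$ is disjoint from all copies, hence is a subset of $V(H)\setminus V^\infty$ of size at most $k$ that, by the structural fact again, is an odd cycle transversal of $H$. This simultaneously shows how to recover a solution: run the black-box algorithm on $H'$ and strip the copies from the transversal it returns.

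The main obstacle I anticipate is the reverse direction of the structural fact, specifically ruling out ``spurious'' odd cycles in $H'$ that thread through several copies of one or more protected nodes and need not obviously descend to an odd cycle in $H$. The contraction argument resolves this cleanly, but one must be careful that contracting copies can turn a simple cycle into a non-simple closed walk; this is exactly why I appeal to the fact that every odd closed walk contains an odd cycle rather than attempting to preserve simplicity. A secondary point requiring care is the iterative nature of the replacement when two protected nodes are adjacent, where the operation produces a complete bipartite join between their copy-sets; I would handle this uniformly by regarding the whole construction as a single independent-set blow-up and verifying the structural fact at that level of generality.
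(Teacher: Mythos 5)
Your proposal is correct and takes essentially the same route as the paper: the paper also reduces the protected-node version to standard \MinBipartization{} by iteratively replacing each protected node with $k+1$ copies attached to the same neighborhood, and then invokes the $2.32^k\cdot n^{O(1)}$-time algorithm of Lokshtanov et al.\ as a black box. The only difference is that the paper states this reduction in a single sentence, whereas you supply the correctness details (the blow-up/bipartiteness equivalence and the stripping of copies from the returned transversal) that the paper leaves implicit.
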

We will use \MinBipartization$(H, V^\infty, k)$ to denote the procedure that implements this fixed-parameter algorithm for the input graph $H$ with protected nodes $V^{\infty}$ and parameter~$k$. 
\section{Fixed-parameter tractability of \omwnodec\ in DAGs}
\label{sec:DAG-FPT}
To solve \dagoddmultiwaycut in DAGs, we will use the shadow-removal technique introduced by Chitnis et al.~\cite{ChitnisEtAl2013}.
We will reduce the problem to the \MinBipartization{} problem in undirected graphs, which is a fixed-parameter tractable problem when parameterized by the solution size.
We begin by arguing about ``easy'' instances, where we define an instance $(G,V^{\infty},T,k)$ as \emph{easy} if it has a solution $M$ (of size at most $k$) where every node $v\in s_G(M)$ has total degree at most one in $G \setminus M$, provided that it has some solution (of size at most $k$) at all.

\subsection{Easy instances}
\begin{theorem}
\label{thm:dag-easy-instances}
%
  There is an algorithm that, given any easy instance $(G,V^{\infty},T,k)$ of \omwnodec where $G$ is a DAG, finds a solution of size at most $k$ in time $2.32^k\cdot n^{O(1)}$, where $n$ is the number of nodes in the input graph $G$. 
\end{theorem}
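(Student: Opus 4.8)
The plan is to reduce the given easy instance to an instance of \MinBipartization\ and then invoke the fixed-parameter algorithm of Proposition~\ref{thm:fastoct}. The first point to exploit is that, since $G$ is acyclic, every directed walk is a path, so parity of connectivity can be tracked by the standard parity product: let $G^{(2)}$ be the DAG on vertex set $\{v^0,v^1 : v\in V(G)\}$ with an arc $u^p\to v^{1-p}$ for every arc $(u,v)\in E(G)$ and every $p\in\{0,1\}$, so that $G$ has an odd path from $a$ to $b$ if and only if $G^{(2)}$ has a directed path from $a^0$ to $b^1$. The underlying undirected graph $\langle G^{(2)}\rangle$ is bipartite, with the parity bit as a proper $2$-colouring. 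I would then form the undirected graph $H=\torso(G,V^{\infty},T)$ by taking $\langle G^{(2)}\rangle$, identifying all terminal copies $\{a^0,a^1 : a\in T\}$ into a single vertex $\tau$, and marking $\tau$ together with both copies of every node of $V^{\infty}$ as protected; a non-terminal node $v$ is represented by the pair $v^0,v^1$, and a solution in $H$ is pulled back to $G$ by deleting $v$ whenever a copy of $v$ is deleted. Collapsing the terminals at $\tau$ turns each odd $T$-path into a closed odd walk through $\tau$, i.e.\ an odd cycle, while the parity double cover keeps $H$ bipartite away from $\tau$.

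Correctness then splits into two implications. The easy direction is that any odd cycle transversal $U$ of $H$ of size at most $k$ that avoids the protected set pulls back to an odd multiway cut of $G$ of size at most $k$: each odd $T$-path of $G$ maps under the identification at $\tau$ to an odd cycle of $H$, so $U$ must contain a copy of some internal node of that path, and the corresponding original node lies on the path. The reverse direction---that the promised easy solution $M$ of size at most $k$ yields an odd cycle transversal of $H$ of size at most $k$---is the heart of the argument. Here I must show that $H$ minus the copies of $M$ is bipartite; equivalently, that \emph{every} odd cycle of $H$ corresponds to a directed odd $T$-path of $G$ and is therefore already hit by $M$.

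The \textbf{main obstacle} is precisely this last step. An odd cycle of $H$ through $\tau$ a priori corresponds only to an \emph{undirected} closed walk between terminal copies in $\langle G^{(2)}\rangle$, and such a walk may reverse orientation at ``peak'' and ``valley'' vertices and hence need not trace a directed path of $G$; an odd cycle arising from such reversals would be spurious, and $M$ would be under no obligation to hit it. This is exactly where the two hypotheses of the statement enter. Using a fixed topological order, I would analyse the reversal vertices of such a walk and argue that each of them must fail to reach $T$ or to be reached from $T$ within $G\setminus M$, and hence lies in the shadow $s_G(M)$; the easy-instance guarantee that every node of $s_G(M)$ has total degree at most one in $G\setminus M$ then forbids it from being an internal vertex of any walk, since an internal walk vertex has degree at least two. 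Establishing this peak/valley analysis inside the $\torso$ construction is where the real work lies, and the degree bound is what makes it go through. Once it is in place, an odd cycle of $H$ avoiding the copies of $M$ would have to be a genuine directed odd $T$-path surviving in $G\setminus M$, contradicting that $M$ is an odd multiway cut; so $M$ induces an odd cycle transversal of $H$ of the same size (after the routine normalisation that deletes both copies of a chosen node). Finally I would run \MinBipartization$(H,V^{\infty}_{H},k)$ from Proposition~\ref{thm:fastoct} and return the pulled-back set; since $H$ has $O(n)$ vertices, the running time is $2.32^k\cdot n^{O(1)}$, as claimed.
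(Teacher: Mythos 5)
Your reduction is not parameter-preserving, and this breaks the theorem's guarantee. In your graph $H$, deleting a node $v$ of $G$ corresponds to deleting \emph{both} copies $v^0,v^1$: removing only one copy leaves the odd cycles through the other copy intact. Concretely, take $G$ to be the directed path $s\to a\to b\to t$ with $T=V^\infty=\{s,t\}$ and $k=1$. This instance is easy: $M=\{a\}$ is a solution whose shadow $s_G(M)=\{b\}$ has degree one in $G\setminus M$. Your $H$ consists of $\tau$ adjacent to $a^0,a^1,b^0,b^1$ together with the edges $a^0b^1$ and $a^1b^0$, i.e., two triangles sharing only the protected vertex $\tau$; hence every odd cycle transversal of $H$ avoiding $\tau$ has size at least $2$. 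So \MinBipartization$(H,\{\tau\},1)$ answers ``no'' and your algorithm fails on an easy instance that has a size-$1$ solution. The parenthetical ``routine normalisation \dots of the same size'' is exactly where this is hidden: the set of copies of $M$ has size $2|M|$, and, as the example shows, no choice of one copy per node of $M$ is an odd cycle transversal, so the factor-$2$ blow-up is intrinsic to the construction. Raising the budget to $2k$ does not rescue the statement either, since then both the size of the returned (pulled-back) solution and the exponent in the running time double.

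A secondary gap: your completeness argument rests on the claimed lemma that every reversal (peak/valley) vertex of an odd cycle of $H$ avoiding the copies of $M$ lies in $s_G(M)$, which you leave as a sketch; as stated it is not clearly true, since a peak vertex may have a directed path to $T$ in $G\setminus M$ through edges not on the walk, and then it belongs to neither shadow. The paper takes a different, parameter-exact route: it works with $\langle G\rangle$ itself (one vertex per node of $G$), adds a single new \emph{protected} apex node adjacent to all terminals to force $T$ into one side of any bipartition, and proves directly that for an easy solution $M$ the graph $\langle G\rangle\setminus M$ is bipartite with $T$ in one part. The non-shadow nodes are $2$-coloured by the parity of their $T\to v$ paths---well defined because, by acyclicity, a $T\to v$ path and a $v\to T$ path concatenate to a \emph{simple} $T$-path, which must be even since $M$ is a solution---and the shadow nodes, having degree at most one in $G\setminus M$, are placed greedily. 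Conversely, any odd cycle transversal of the apexed graph that avoids the protected nodes is a solution, so one call to \MinBipartization with budget exactly $k$ suffices. If you want to keep the double-cover intuition, you would first have to establish this bipartiteness statement for $\langle G\rangle\setminus M$ anyway, at which point the double cover is no longer needed.
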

\begin{proof}
  Let $(G,V^{\infty},T,k)$ be an instance of \dagoddmultiwaycut. 
  Let $\langle G\rangle$ denote the undirected graph obtained from $G$ by dropping the orientations of the edges in $G$. 
  We show the following equivalence: a set $M\subseteq V\setminus V^{\infty}$ with the property as in the statement is a solution if and only if 
  $\langle G\rangle \setminus M$ is bipartite with a bipartition $(A,B)$ such that $T\subseteq A$. 

  Suppose $\langle G\rangle \setminus M$ is bipartite with a bipartition $(A,B)$ such that $T\subseteq A$.
  In a bipartite graph, every two end-nodes of any odd path are necessarily in different parts.
  Hence, there is no odd $T$-path in $\langle G\rangle \setminus M$.
  Thus, there is no odd $T$-path in $G\setminus M$.
  Hence, the set $M$ is a solution for the \omwnodec\ instance $(G,V^{\infty},T,k)$. 

  Suppose the solution $M$ has the property mentioned in the statement of the theorem. 
  Let $U := V(G \setminus M) \setminus s_G(M)$. 
  Define
  \begin{align*}
    A &:= \set{x \in U \suchthat \text{there is an even $T \rightarrow x$ path in $G \setminus M$}} \text{ and}\\
    B &:= \set{x \in U  \suchthat \text{there is an odd $T \rightarrow x$ path in $G \setminus M$}}.
  \end{align*}
  It follows from the definition of the shadow that every node in $U$ has a path $P_1$ from $T$ in $G \setminus M$.
  Therefore, every node of $U$ is in $A \cup B$.
  Also by definition, every node $v$ in $U$ has a path $P_2$ to $T$ in $G \setminus M$.
  The parity of every $T \rightarrow v$ path has to be the same as the parity of $P_2$, because the concatenation of a $T\rightarrow v$ path and a $v\rightarrow T$ path in $G \setminus M$ is a $T$-path in $G \setminus M$ and therefore must be even.
  We note that such a concatenation cannot be a cycle since $G$ is acyclic.
  Thus, no node of $U$ is in both $A$ and~$B$.
  Hence, we have that $(A,B)$ is a partition of $U$.

  We observe that there cannot be an edge from a node $v$ in $A$ to a node $u$ in $A$, as otherwise the concatenation of the even $T \rightarrow v$ path $Q_1$ with the edge $v \rightarrow u$ is an odd $T \rightarrow u$ path in~$G \setminus M$ which means $u \in B$.
  This contradicts our conclusion about $A$ and $B$ being disjoint.
  By a similar argument, there is no edge between any pair of nodes in $B$.
  Thus, the subgraph of~$G$ induced by $A$ and $B$ are independent sets respectively.
  Hence $\langle G\rangle[A \cup B]$ is a bipartite graph.
  Furthermore, $(A, B)$ is a bipartition of $\langle G\rangle[A \cup B]$ with every node of $T$ in $A$.
  By assumption, the degree of every node $x \in s_G(M)$ is at most one.
  Therefore, $x$ has neighbors in at most one of $A$ and $B$.
  Thus, we can extend the bipartition $(A, B)$ of $\langle G\rangle[A \cup B]$ to a bipartition $(A', B')$ of $\langle G\rangle \setminus M$ 
  as follows: denote $H:=\langle G\rangle[A\cup B]$; repeatedly pick a node $x\in s_G(M)\setminus V(H)$ with a neighbor in $H$, include $x$ in a part ($A$ or $B$) in which $x$ has no neighbor and update $A$, $B$ and $H$. 

  Hence, if the given instance has a solution $M$ of size at most $k$ such that every node $v\in s_G(M)$ has total degree at most one, then such a solution can be found by the fixed-parameter algorithm for \MinBipartization.
  To ensure that the terminal nodes will be in the same part, we introduce a new protected node into the graph and connect it to every terminal node.
  This approach is described in Algorithm~\ref{alg:solve easy instance}.
  \begin{algorithm}
    \caption{$\mathtt{SolveEasyInstance}$\label{alg:solve easy instance}}
    \begin{algorithmic}[1]
      \State \textbf{Input:} A DAG $G$ with a set $V^\infty\subseteq V(G)$ of protected nodes and a set $T\subseteq V^\infty$ of terminals, and an integer $k\in \Z_+$.
      \State \textbf{Output:} A minimum odd multiway cut for $(G,V^{\infty},T,k)$. 
      \State $G_1 \gets$ the underlying undirected graph of $G$, i.e., $\langle G\rangle$.
      \State Let $G_2$ be the graph obtained from $G_1$ by introducing a new node $x$ and connecting it to every node in $T$.
      \State $N \gets$ \MinBipartization$(G_2, V^\infty \cup \set{x}, k)$ \label{line:solve undirected}
      \State \Return $N$
    \end{algorithmic}
  \end{algorithm}
  All steps in Algorithm~\ref{alg:solve easy instance} can be implemented to run in polynomial time except Step~\ref{line:solve undirected}.
  By Proposition~\ref{thm:fastoct}, Step~\ref{line:solve undirected} can run in time $2.32^k\cdot n^{O(1)}$.
\end{proof}

We will use the name $\mathtt{SolveEasyInstance}$ to refer to the algorithm of Theorem~\ref{thm:dag-easy-instances}.
Theorem \ref{thm:dag-easy-instances}
suggests that the existence of a solution $M$ of size at most $k$, such that every node $v \in s_G(M)$ has total degree at most one, is a useful property in an instance of \dagoddmultiwaycut.
However, it is not necessarily the case that some solution of size at most $k$ always has this property.
Our aim now is to reduce the given arbitrary instance $(G, V^\infty, T, k)$ to another instance that has such a solution or determine that no solution of size at most $k$ exists.
For this purpose, we define the operation parity-preserving torso on DAGs, as follows.

\subsection{Parity-preserving torso}
\label{sec:dag-parity-torso}
The parity-preserving torso operation was introduced by Lokshtanov and Ramanujan~\cite{LokshtanovRamanujan2012} for undirected graphs.
We extend it in a natural fashion for DAGs. 
\begin{definition}[Parity-preserving torso.]
  Let $G$ be a DAG and $Z$ be a subset of $V(G)$.
  Let $G'$ be the DAG obtained from $G \setminus Z$ by adding an edge from node $u$ to $v$, for every pair of nodes $u,v\in V(G)\setminus Z$ such that there is an odd-path from $u$ to $v$ in~$G$ all of whose internal nodes are in $Z$.
  We obtain $\torso(G, V^\infty, Z)$ from $(G', V'^\infty)$ by including a new node $x_{uv}$ and edges $u \rightarrow x_{uv}$ and $x_{uv} \rightarrow v$ for every pair of nodes $u,v \in V(G) \setminus Z$ such that there is an even path from $u$ to $v$ in~$G$ all of whose internal nodes are in $Z$.
  The set $V'^\infty$ is defined to be the union of $V^{\infty}\setminus Z$ and all the new nodes $x_{uv}$ (see Fig.~\ref{fig:torso-dag-example}).
\end{definition}

\begin{figure}[ht]
  \centering
  \begin{subfigure}[t]{0.49 \textwidth}
    \centering
    \setlength{\nodedistance}{1.2cm}
\setlength{\longnodedistance}{3.15cm}
\begin{tikzpicture}[arrows={-latex'}, node distance=\nodedistance]
	\node[vertex, label=below:$v_1$](v1) {};
	\node[vertex, label=below:$z_1$, right=of v1](z1) {};
	\node[vertex, label=below:$z_4$, below right=of z1](z4) {};
	\node[vertex, label=below:$z_2$, above right=of z4](z2) {};
	\node[vertex, label=below:$v_2$, right=of z2](v2) {};
	\node[vertex, label=below:$z_3$, below left=of z4](z3) {};
	\node[vertex, label=below:$z_5$, below right=of z4](z5) {};
	\node[vertex, label=below:$v_3$, left=of z3](v3) {};
	\node[vertex, label=below:$v_4$, right=of z5](v4) {};
	\node[vertex, label=below:$v_5$, right=of v4](v5) {};
	
	\node [draw, inner sep=0.50cm, dashed, ultra thick, fit= (z1) (z2) (z3) (z4), label=above:$Z$] {};

	\draw[very thick] (v1) -- (z1);
	\draw[very thick] (z1) -- (z2);
	\draw[very thick] (z2) -- (v2);
	\draw[very thick] (v3) -- (z2);
	\draw[very thick] (v3) -- (z3);
	\draw[very thick] (z3) -- (z4);
	\draw[very thick] (z4) -- (z5);
	\draw[very thick] (z3) -- (z5);
	\draw[very thick] (z5) -- (v4);
	\draw[very thick] (v4) -- (v5);
\end{tikzpicture}
 
    \caption{The original graph $G$.}
    \label{fig:example DAG for torso}
  \end{subfigure}
  \begin{subfigure}[t]{0.49 \textwidth}
      \centering
      \setlength{\nodedistance}{1.2cm}
\setlength{\longnodedistance}{3.15cm}
\begin{tikzpicture}[arrows={-latex'}, node distance=\nodedistance]
    \node[vertex, label=below:$x_{v_3v_4}$](x34) {};
    \node[vertex, label=left:$v_3$, above left=of x34](v3) {};
    \node[vertex, label=above:$v_4$, above right=of x34](v4) {};
    \node[vertex, label=right:$v_5$, right=of v4](v5) {};
    \node[vertex, label=right:$x_{v_3v_2}$, above right=of v3](x32) {};
	\node[vertex, label=left:$v_1$, above left=of x32](v1) {};
	\node[vertex, label=right:$v_2$, above right=of x32](v2) {};

	\draw[very thick] (v1) -- (v2);
	\draw[very thick] (v3) -- (x32);
	\draw[very thick] (x32) -- (v2);
	\draw[very thick] (v3) -- (x34);
	\draw[very thick] (x34) -- (v4);
	\draw[very thick] (v3) -- (v4);
	\draw[very thick] (v4) -- (v5);
\end{tikzpicture}
 
    \caption{$\torso(G, Z)$.}
    \label{fig:torso(G,Z)-dag}
  \end{subfigure}
  \caption{An illustration of the parity-preserving torso operation.}
\label{fig:torso-dag-example}
\end{figure}
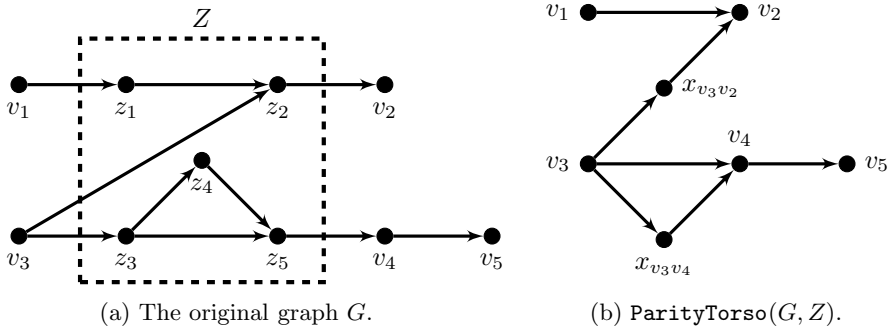

We emphasize that the acyclic nature of the input directed graph allows us to implement the parity-preserving torso operation in polynomial time (e.g., using Lemma \ref{lem:odd-path-decision}).
Moreover, applying parity-preserving torso on a DAG results in a DAG as well. 
In what follows, we state the properties of the $\torso$ operation that are exploited by our algorithm. 
The parity-preserving torso operation, has the property that it maintains $u\rightarrow v$ paths along with their parities between any pair of nodes $u,v\in V(G) \setminus Z$.
More precisely:
\begin{lemma}
\label{lem:torso does not change connectivity-dag}
  Let $G$ be a DAG and $Z, V^\infty \subseteq V(G)$.
  Also define $(G', V'^\infty) := \torso(G, V^\infty, Z)$.
  Let $u,v$ be nodes in $V(G) \setminus Z$.
  There is a $u \rightarrow v$ path $P$ in $G$ if and only if there is a $u \rightarrow v$ path $Q$ of the same parity in $G'$. Moreover, the path $Q$ can be chosen so that the nodes of $P$ in~$G \setminus Z$ are the same as the nodes of $Q$ in $G \setminus Z$, i.e.\  $V(P) \cap (V(G) \setminus Z) = V(Q) \cap (V(G) \setminus Z)$.
\end{lemma}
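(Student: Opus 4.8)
The plan is to prove both implications by decomposing a $u\rightarrow v$ path into maximal segments delimited by its nodes outside $Z$, and then translating each segment between $G$ and $G'$ using exactly the three kinds of edges that $\torso(G,V^\infty,Z)$ creates. List the nodes of a given $u\rightarrow v$ path that lie in $V(G)\setminus Z$, in the order they appear, as $u=w_0,w_1,\dots,w_m=v$. For two consecutive such nodes $w_i,w_{i+1}$, the segment of the path between them either is a single edge, or has all its internal nodes in $Z$ (in $G$), or is a single $G'$-edge, or is a two-edge detour $w_i\rightarrow x_{ab}\rightarrow w_{i+1}$ through one of the new nodes (in $G'$). The whole argument reduces to a segment-by-segment replacement that preserves parity, together with the observation that the resulting object is a simple path.

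For the forward direction I would start with a $u\rightarrow v$ path $P$ in $G$ and treat its segments one at a time. If a segment is a single edge with no internal $Z$-node, it survives in $G\setminus Z$ and hence in $G'$. If its internal nodes all lie in $Z$ and it has odd length, then by definition $\torso$ inserts the edge $w_i\rightarrow w_{i+1}$, which I use as the replacement (length $1$, still odd). If it has even length, $\torso$ inserts $x_{w_iw_{i+1}}$ together with the two-edge path $w_i\rightarrow x_{w_iw_{i+1}}\rightarrow w_{i+1}$ (length $2$, still even). In every case the replacement has the same parity as the segment it replaces, so concatenating the replacements yields a $u\rightarrow v$ walk $Q$ in $G'$ of the same parity as $P$. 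Since the inserted nodes $x_{uv}$ do not belong to $V(G)$, the nodes of $Q$ lying in $V(G)\setminus Z$ are exactly $w_0,\dots,w_m$, which is precisely $V(P)\cap(V(G)\setminus Z)$.

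The reverse direction is symmetric. Given a $u\rightarrow v$ path $Q$ in $G'$, I first note that each added node $x_{ab}$ has unique in-neighbor $a$ and unique out-neighbor $b$, so whenever $Q$ meets $x_{ab}$ it uses the two-edge piece $a\rightarrow x_{ab}\rightarrow b$. Cutting $Q$ at its nodes in $V(G)\setminus Z$, each direct edge of $Q$ is either an original edge of $G$, which I keep, or a torso-added odd-shortcut, which I replace by a witnessing odd $w_i\rightarrow w_{i+1}$ path through $Z$ (whose existence is exactly what caused the shortcut to be added); each detour $w_i\rightarrow x_{w_iw_{i+1}}\rightarrow w_{i+1}$ I replace by a witnessing even $w_i\rightarrow w_{i+1}$ path through $Z$. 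Each replacement matches the parity of the piece it replaces, so the concatenation is a $u\rightarrow v$ walk $P$ in $G$ of the same parity as $Q$, and the internal nodes introduced all lie in $Z$, so $P$ agrees with $Q$ on $V(G)\setminus Z$.

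The main obstacle, and the only place where acyclicity is genuinely needed, is that both constructions a priori produce only \emph{walks}: in the reverse direction the internal $Z$-nodes substituted into different pieces could in principle coincide, and in the forward direction the analogous worry applies to the new nodes. Here I would invoke the DAG hypothesis directly: a directed walk in an acyclic graph cannot repeat a vertex, since a repeated vertex would close a directed cycle. Because $G$ is a DAG and, as already noted, $\torso(G,V^\infty,Z)$ is again a DAG, both $P$ and $Q$ are automatically simple paths. This is exactly the step that fails in general directed graphs and that motivates restricting attention to DAGs; combined with the parity bookkeeping and the node-set accounting above, it completes the proof.
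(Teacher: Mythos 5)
Your proof is correct and follows essentially the same route as the paper: the paper's forward direction is an induction on the length of $P$ that splits at an internal node outside $Z$, which is just your segment decomposition in inductive form, and its reverse direction is the same edge/detour replacement you describe. Both arguments hinge on the same key point you make explicit—that acyclicity of $G$ and of $\torso(G,V^\infty,Z)$ turns the concatenated walks into simple paths—so the proposal matches the paper's proof, and is in fact slightly more careful about that walk-versus-path step.
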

\begin{proof}
  We prove the forward direction by induction on the length of $P$.
  For the base case of induction, consider paths of length zero in $G$ that are disjoint from $Z$.
  Such a path is not affected by the $\torso$ operation.
  Suppose that the claim holds for all paths of length less than $\ell$, for some $\ell > 0$.
  Let $P$ be a $u \rightarrow v$ path of length $\ell$ in $G$, where $u,v \in V(G) \setminus Z$.
  If all internal nodes of $P$ are in $Z$, then by definition of $\torso$, a path $Q$ of the same parity exists in $G'$ and $V(P) \cap (V(G) \setminus Z) = V(Q) \cap (V(G) \setminus Z)$.
  Otherwise, let $w \in V(G) \setminus Z$ be an internal node of $P$.
  Let $P_1$ and $P_2$ be the subpaths of $P$ from $u$ to $w$ and from $w$ to $v$.
  By induction hypothesis, there is a $u \rightarrow w$ path $Q_1$ in $G'$ of the same parity as $P_1$ where $V(P_1) \cap (V(G) \setminus Z) = V(Q_1) \cap (V(G) \setminus Z)$, and similarly, a $w \rightarrow v$ path $Q_2$ is found of the same parity as $P_2$ where $V(P_2) \cap (V(G) \setminus Z) = V(Q_2) \cap (V(G) \setminus Z)$.
  Since $G'$ is a DAG, the path $Q$ obtained by concatenating $Q_1$ and $Q_2$ has the same parity as $P$ and $V(P) \cap (V(G) \setminus Z) = V(Q) \cap (V(G) \setminus Z)$.

  Conversely, suppose $Q$ is a path $u=x_0, x_1, x_2, \ldots, x_r=v$ in $G'$ from $u$ to $v$ where $u,v \in V(G)$.
  Then for every node $x_i$ of $Q$ in $V(G') \setminus V(G)$, replace the subpath $x_{i-1}, x_i, x_{i+1}$ with the even path in $G$ that connects $x_{i-1}$ to $x_{i+1}$.
  Also, for every pair $i$ where $x_i, x_{i+1} \in V(G)$ but $(x_i, x_{i+1})$ is not an edge in $G$, replace the subpath $x_i, x_{i+1}$ of $Q$ with the odd path that connects $x_i$ to~$x_{i+1}$ in $G$.
  By construction, the resulting sequence is a path $P$ in $G$ and has the same parity as~$Q$ and $V(P) \cap (V(G) \setminus Z) = V(Q) \cap (V(G) \setminus Z)$.
\end{proof}

\begin{corollary}
\label{cor:sol_in_torso_is_a_sol_in_G}
  Let $\mathcal{I}=(G,V^{\infty},T,k)$ be an instance of \dagoddmultiwaycut\ and let $Z\subseteq V(G)\setminus T$.
  Let $(G',V'^{\infty}):=\torso(G, V^\infty, Z)$ and denote the instance $(G',V'^{\infty},T,k)$ by $\mathcal{I}'$.
  The instance $\mathcal{I}$ admits a solution $S$ of size at most $k$ that is disjoint from $Z$ if and only if the instance $\mathcal{I'}$ admits a solution of size at most $k$.
\end{corollary}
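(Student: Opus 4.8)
The plan is to prove both implications by reducing to Lemma~\ref{lem:torso does not change connectivity-dag}, which is the real engine here; the corollary is essentially a bookkeeping argument layered on top of it. The first observation I would record is that since $Z\subseteq V(G)\setminus T$, we have $Z\cap T=\emptyset$, so the endpoints of every $T$-path lie in $V(G)\setminus Z$. Hence Lemma~\ref{lem:torso does not change connectivity-dag} applies verbatim to $T$-paths, and it gives a parity-preserving correspondence between $T$-paths $P$ in $G$ and $T$-paths $Q$ in $G'$ with the crucial node-matching property $V(P)\cap(V(G)\setminus Z)=V(Q)\cap(V(G)\setminus Z)$. The single idea driving the whole argument is that because any candidate solution lives entirely inside $V(G)\setminus Z$ (solutions must avoid the protected nodes $V'^\infty$, which includes every newly added node $x_{uv}$, and they must avoid $Z$), such a set hits $P$ if and only if it hits the corresponding $Q$.

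For the forward direction, I would take a solution $S$ of $\mathcal{I}$ of size at most $k$ with $S\cap Z=\emptyset$ and claim that the \emph{same} set $S$ solves $\mathcal{I}'$. First I would check that $S$ is an admissible candidate for $\mathcal{I}'$: since $S\subseteq V(G)\setminus V^\infty$ and $S\cap Z=\emptyset$, and since $V'^\infty=(V^\infty\setminus Z)\cup\{x_{uv}\text{'s}\}$ with the new nodes lying outside $V(G)$, we get $S\cap V'^\infty=\emptyset$. Then I would argue by contradiction: if some odd $T$-path $Q$ survived in $G'\setminus S$, the converse direction of Lemma~\ref{lem:torso does not change connectivity-dag} would produce an odd $T$-path $P$ in $G$ with $V(P)\cap(V(G)\setminus Z)=V(Q)\cap(V(G)\setminus Z)$. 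Because $S\subseteq V(G)\setminus Z$, we have $S\cap V(P)=S\cap V(P)\cap(V(G)\setminus Z)=S\cap V(Q)\cap(V(G)\setminus Z)=\emptyset$, so $P$ is an odd $T$-path in $G\setminus S$, contradicting that $S$ solves $\mathcal{I}$.

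For the reverse direction, I would take any solution $S'$ of $\mathcal{I}'$ of size at most $k$ and show $S:=S'$ solves $\mathcal{I}$ and is disjoint from $Z$. Here admissibility comes for free: since $S'$ avoids the protected set $V'^\infty$, it in particular avoids all the new nodes, so $S'\subseteq V(G)\setminus Z$, and avoiding $V^\infty\setminus Z$ together with $S'\cap Z=\emptyset$ yields $S'\cap V^\infty=\emptyset$. Thus $S'$ is automatically a $Z$-disjoint candidate for $\mathcal{I}$. The feasibility argument is the mirror image of the previous one: an odd $T$-path $P$ surviving in $G\setminus S$ would, via the forward direction of Lemma~\ref{lem:torso does not change connectivity-dag}, give an odd $T$-path $Q$ in $G'$ with matching node set off $Z$, and the same set-membership computation shows $S\cap V(Q)=\emptyset$ (using that $S$ contains no new nodes), so $Q$ would survive in $G'\setminus S$, contradicting feasibility of $S'$ for $\mathcal{I}'$.

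I do not expect a genuine obstacle in this corollary, since the structural content has been isolated into Lemma~\ref{lem:torso does not change connectivity-dag}. The only delicate point, and the part I would write most carefully, is the interaction with protected nodes: I must confirm that the identical set $S=S'$ is simultaneously a \emph{legal} (protected-node-avoiding, $Z$-disjoint) candidate in both instances, and then invoke the node-correspondence clause of the lemma rather than merely its existence clause, since it is precisely the identity $V(P)\cap(V(G)\setminus Z)=V(Q)\cap(V(G)\setminus Z)$ that lets a cut for one path double as a cut for the other.
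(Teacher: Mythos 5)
Your proposal is correct and follows essentially the same route as the paper: both directions reduce to Lemma~\ref{lem:torso does not change connectivity-dag}, using the node-matching property $V(P)\cap(V(G)\setminus Z)=V(Q)\cap(V(G)\setminus Z)$ together with the fact that any legal solution lies in $V(G)\setminus Z$ (avoiding $V'^\infty$, hence all new nodes $x_{uv}$), so that the same set cuts a path in $G$ iff it cuts the corresponding path in $G'$. Your write-up merely makes explicit the admissibility bookkeeping that the paper's proof leaves terse.
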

\begin{proof}
  Let $M'$ be a solution to the instance $\mathcal{I}'$ of size at most $k$.
  Since $V^\infty\setminus Z \subseteq V'^\infty$ and $M' \cap V'^\infty = \emptyset$ and $M' \cap Z=\emptyset$, we have that $M' \cap V^\infty = \emptyset$.
  By definition of the $\torso$ operation, $V(G') \setminus V(G)$ is contained in $V'^\infty$ and therefore is disjoint from $M'$. Thus, $M' \subseteq V(G)$.
  Suppose $P$ is an odd $T$-path in $G$ and is disjoint from $M'$.
  By Lemma~\ref{lem:torso does not change connectivity-dag}, there is an odd $T$-path in $G'$ that is also disjoint from $M'$, contradicting our assumption about $M'$.

  Conversely, suppose $M$ is a solution for the instance $\mathcal{I}$ of size at most $k$ that is disjoint from~$Z$.
  Suppose $P'$ is an odd $T$-path in $G'$ and is disjoint from $M$.
  By Lemma~\ref{lem:torso does not change connectivity-dag}, There is an odd $T$-path in $G$ that is also disjoint from $M$, contradicting our assumption about $M$.
\end{proof}

Corollary~\ref{cor:sol_in_torso_is_a_sol_in_G} reveals that if there exists a solution $M$ in $G$ that is disjoint from $Z$ and~$V^{\infty}$, then it also exists in the DAG obtained from $\torso(G, V^\infty, Z)$ and hence it is sufficient to search for it in $\torso(G, V^\infty, Z)$.
Therefore, we are interested in finding a set $Z$ of nodes that is disjoint from some solution of size at most $k$, and moreover, the instance $(\torso(G, V^\infty, Z),T,k)$ is an easy instance of the problem, i.e., satisfies the property mentioned in Theorem \ref{thm:dag-easy-instances}.
The following lemma shows that it is sufficient to find a set $Z$ that contains the shadow of a solution. 
\begin{lemma}
\label{lem:shadow_has_degree_one}
  Let $G$ be a DAG and $M,Z,V^\infty \subseteq V(G)$. 
  Suppose $M$ intersects every odd $T$-path in $G$ and $s_G(M) \subseteq Z \subseteq V(G) \setminus M$.
  Let $(G', V'^\infty) := \torso(G, V^\infty, Z)$.
  Then every node in $s_{G'}(M)$ has total degree at most one in $G' \setminus M$.
\end{lemma}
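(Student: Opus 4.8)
The plan is to first pin down exactly which nodes of $G'$ can belong to $s_{G'}(M)$, and then to verify the degree bound separately for the two kinds of nodes in $V(G')$: the ``old'' nodes surviving in $V(G)\setminus Z$ and the ``new'' nodes $x_{uv}$ introduced by the torso operation. Throughout I will use that $Z\subseteq V(G)\setminus M$ gives $M\subseteq V(G)\setminus Z\subseteq V(G')$, so $M$ is a legitimate vertex subset of $G'$ and, since $M\subseteq V(G)$, no new node $x_{uv}$ lies in $M$.

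\emph{Step 1: old nodes are never in the shadow of $M$ in $G'$.} The key hypothesis to exploit is $s_G(M)\subseteq Z$: every surviving node $w\in V(G)\setminus Z$ lies outside $s_G(M)$, and hence is both reachable from $T$ and able to reach $T$ in $G\setminus M$. I would take a $T\rightarrow w$ path $P_1$ and a $w\rightarrow T$ path $P_2$ in $G\setminus M$ and push them into $G'$ using Lemma~\ref{lem:torso does not change connectivity-dag}. Because $M\cap Z=\emptyset$ and that lemma preserves the node set intersected with $V(G)\setminus Z$, the transferred paths avoid $M$ among old nodes, and their only additional nodes are protected $x_{uv}$'s, which are not in $M$; thus the transferred paths lie entirely in $G'\setminus M$. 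Consequently $w$ is reachable from $T$ and reaches $T$ in $G'\setminus M$, so $w\notin s_{G'}(M)$. This shows $s_{G'}(M)$ is contained in the set of new nodes $\{x_{uv}\}$.

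\emph{Step 2: new nodes lying in the shadow have degree at most one.} Each $x_{uv}$ has exactly the two incident edges $u\rightarrow x_{uv}$ and $x_{uv}\rightarrow v$, so its total degree in $G'\setminus M$ equals $\card{\set{u,v}\setminus M}$. I would prove the contrapositive: if $u\notin M$ and $v\notin M$, then $x_{uv}\notin s_{G'}(M)$. As in Step 1, $u,v\in V(G)\setminus Z$ forces $u,v\notin s_G(M)$, so $u$ is reachable from $T$ and $v$ reaches $T$ in $G\setminus M$; transferring these connections via Lemma~\ref{lem:torso does not change connectivity-dag} and then prepending the edge $u\rightarrow x_{uv}$ (respectively appending $x_{uv}\rightarrow v$) shows that $x_{uv}$ is reachable from $T$ and reaches $T$ in $G'\setminus M$, i.e.\ $x_{uv}\notin s_{G'}(M)$. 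Hence if $x_{uv}\in s_{G'}(M)$, at least one of $u,v$ is in $M$, giving it total degree at most one in $G'\setminus M$.

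Combining the two steps closes the argument, since every node of $s_{G'}(M)$ is a new node $x_{uv}$ with an endpoint in $M$. The step I expect to require the most care is Step~1: transferring \emph{two-sided} $T$-connectivity from $G$ to $G'$, which relies essentially on $s_G(M)\subseteq Z$ (so that survivors are non-shadow) together with the parity- and node-set-preserving guarantee of the torso. One minor point to check en route is that the terminals survive the torso, so that $T$-reachability in $G'$ is meaningful; this holds because no terminal lies in $s_G(M)$ and, in the intended application, $Z\subseteq V(G)\setminus T$.
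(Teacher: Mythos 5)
Your proof is correct and takes essentially the same approach as the paper's: first show that $s_{G'}(M)$ consists only of new torso nodes $x_{uv}$ (using Lemma~\ref{lem:torso does not change connectivity-dag} together with $s_G(M)\subseteq Z$ and $M\cap Z=\emptyset$), then show that a new node with both neighbors outside $M$ cannot lie in the shadow, so any shadow node has degree at most one in $G'\setminus M$. The only cosmetic difference is that you argue both steps contrapositively and re-transfer paths from $G$ in Step~2, whereas the paper reuses Step~1's conclusion directly inside $G'$.
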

\begin{proof}
  We claim that $s_{G'}(M)$ is contained in $V(G') \setminus V(G)$.
  Suppose not.
  Then there is a node $v \in V(G) \setminus Z$ that is in $s_{G'}(M)$.
  Suppose $v \in r_{G'}(M)$.
  Thus, there is no path from $v$ to $T$ in $G'$ that is disjoint from $M$.
  By Lemma~\ref{lem:torso does not change connectivity-dag}, every path in $G$ from $v$ to $T$ intersects $M$.
  Therefore,~$v$ is in the shadow of $M$ in $G$ and is hence contained in $Z$.
  This is a contradiction.
  A similar contradiction arises if $v \in f_{G'}(M)$.
  Therefore, $s_{G'}(M)$ is disjoint from $V(G)$.

  Let $x \in s_{G'}(M)$.
  We observe that by definition of the $\torso$ operation, every node $x \in V(G') \setminus V(G)$ has in-degree and out-degree one. Let $u$ and $v$ be the in-neighbor and out-neighbor of $x$ in $G'$.
  Suppose $x$ has total degree two in $G' \setminus M$.
  This implies that $u, v \notin M$. Since $u$ is not in the shadow of $M$ in $G'$, there is a $T \rightarrow u$ path disjoint from $M$ in $G'$.
  Appending the $u \rightarrow x$ edge to that path, gives a $T \rightarrow x$ path in $G'$ disjoint from $M$.
  Thus, $x \notin f_{G'}(M)$.
  Similarly, $x \notin r_{G'}(M)$, because $v \notin M$.
  Thus, $x \notin s_{G'}(M)$, a contradiction.
\end{proof}

\subsection{Difficult instances}
Corollary \ref{cor:sol_in_torso_is_a_sol_in_G} and Lemma~\ref{lem:shadow_has_degree_one} show that if we find a set $Z$ such that for some solution $M$, the set $Z$ is disjoint from $M$ and contains the shadow of $M$ in $G$, then considering $\torso(G,V^\infty,Z)$ will give a new instance that satisfies the conditions of Theorem ~\ref{thm:dag-easy-instances}.
Our goal now is to obtain such a set $Z$.
We will show the following lemma.
We emphasize that the lemma holds for arbitrary digraphs. 

\begin{lemma}
\label{lem:ShadowContainer_exists}
  There is an algorithm $\ShadowContainer$ that, given an instance $(G, V^\infty,T,k)$ of \dagoddmultiwaycut, in time $2^{O(k^2)} \poly(\card{V(G)})$ returns a family $\mathcal{Z}$ of $2^{O(k^2)}\log|V(G)|$ subsets of $V(G)$, with the property that if the instance admits a solution of size at most $k$, then for some solution $M$ of size at most $k$, there exists a set $Z\in \mathcal{Z}$ that is disjoint from $M$ and contains~$s_G(M)$.
\end{lemma}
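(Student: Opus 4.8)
The plan is to cover the forward and reverse shadows of an unknown solution inside a single random experiment. Since $r_G(M)=f_{G^\text{rev}}(M)$, a subroutine that covers the forward shadow $f_G(M)$ of an unknown $M$ while staying disjoint from $M$ can be run on both $G$ and $G^\text{rev}$ from the \emph{same} random sample, and the union of the two resulting regions covers $s_G(M)=f_G(M)\cup r_G(M)$ while avoiding $M$. Amplifying then yields a family of $2^{O(k^2)}\log\card{V(G)}$ sets, each a candidate for $Z$. Nothing in this outline uses acyclicity, which is why the statement holds for arbitrary digraphs; the guarantee we need is precisely the directed shadow-covering property of Chitnis et al.~\cite{ChitnisEtAl2013}, whose random-sampling-of-important-separators argument I would adapt.

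First I would establish the structural link between forward shadows and important separators. Fix a solution $M$ (disjoint from $V^\infty\supseteq T$) and set $\calR := \calR_{G\setminus M}(T)$. Every edge leaving $\calR$ has its head in $M$, so the set $B$ of these heads is a subset of $M$ of size at most $k$ whose deletion leaves $\calR$ exactly the part reachable from $T$; hence $B$ is a $T\rightarrow f_G(M)$ separator. Pushing $B$ to a dominating important $T\rightarrow\cdot$ separator $P$ with $\card{P}\le k$ and $\calR\subseteq \calR_{G\setminus P}(T)$, we obtain $f_G(M)\subseteq V(G)\setminus \calR_{G\setminus P}(T)=:Z_P$. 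Because there are at most $4^k$ important separators from $T$ of size at most $k$ and they are enumerable in time $4^k\poly(\card{V(G)})$ \cite{Marx2006}, the candidate region $Z_P$ ranges over a family of size $4^k$. The one defect is that $Z_P\supseteq P$ and may contain the vertices of $M$ lying behind $P$, so $Z_P$ itself need not be disjoint from $M$.

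To repair this I would use random sampling to discard the at most $k$ offending vertices of $Z_P\cap M$ while retaining all of $f_G(M)$, exactly as in \cite{ChitnisEtAl2013}: instead of fixing $P$ deterministically, one samples a random subset of $V(G)$, forces the sampled vertices to act as sources/sinks, and takes $Z$ to be the region cut off by the resulting important separator. A short analysis shows that, for a fixed $M$, the random $Z$ simultaneously contains $f_G(M)$ and avoids $M$ with probability $2^{-O(k^2)}$, where the $k^2$ in the exponent comes from coupling the guess of which of the $\le k$ boundary vertices lie in $P$ with the guess of the $\le k$ further $M$-vertices to exclude. Running the same sample on $G^\text{rev}$ gives the reverse part, so a single sample produces a set containing $s_G(M)$ and disjoint from $M$ with probability $2^{-O(k^2)}$. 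Replacing the $2^{O(k^2)}\log\card{V(G)}$ independent samples by a standard splitter / $(n,O(k^2))$-universal family derandomizes the construction, yielding a deterministic family $\mathcal{Z}$ of $2^{O(k^2)}\log\card{V(G)}$ sets and total running time $2^{O(k^2)}\poly(\card{V(G)})$.

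The main obstacle is the probabilistic core of the third step: proving that the random set simultaneously retains the entire forward shadow and excludes every vertex of $M$ with probability $2^{-O(k^2)}$, and then derandomizing it into exactly $2^{O(k^2)}\log\card{V(G)}$ deterministic sets. The structural separator lemma of the second step is routine, but forcing the exponent to be $O(k^2)$ rather than something larger requires carefully coupling the $4^k$ important-separator choices with the random guessing of which $O(k)$ vertices to protect, and choosing the parameters of the splitter family accordingly.
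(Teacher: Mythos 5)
Your high-level architecture matches the paper's: cover the reverse shadow and the forward shadow (the latter as the reverse shadow in $G^{\text{rev}}$) with a random-sampling-of-important-separators subroutine, take unions, and derandomize into a family of $2^{O(k^2)}\log\card{V(G)}$ sets; the paper likewise invokes the Chitnis et al.\ machinery as a black box (Theorem~\ref{thm:random_set_by_chitnis}) and runs it on both $G$ and $G^{\text{rev}}$ (Algorithm~\ref{alg:ShadowContainer}). However, there is a genuine gap at the core of your proposal: you claim that ``for a fixed $M$'' --- i.e., for an \emph{arbitrary} solution --- the sampled set contains the shadow and avoids $M$ with probability $2^{-O(k^2)}$. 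That is not what the Chitnis et al.\ theorem gives, and it cannot be obtained by merely adapting it. Their guarantee applies only to sets $S$ that are thin and whose shadow is \emph{certified} by important separators \emph{contained in $S$}: for every $v\in Y$ there must exist an important $v\rightarrow T$ separator inside $S$ (hypothesis (ii) of Theorem~\ref{thm:random_set_by_chitnis}). An arbitrary solution $M$ need not satisfy this, which is exactly why the lemma is stated existentially (``for some solution $M$''). The real content of the paper's proof is establishing that such a well-behaved solution exists: the exchange argument of Lemma~\ref{lem:M'improvesM} (replace a non-important minimal separator inside $M$ by an important separator dominating it, obtaining a solution with strictly larger reverse shadow) and Corollary~\ref{lem:shadow-maximal-solution} (a shadow-maximal optimal solution is thin and certified). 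Your proposal has no analogue of this step, and your substitute structural argument does not supply it: pushing the boundary set $B\subseteq M$ to an important separator produces a separator \emph{from} $T$ toward $f_G(M)$, which is not the certification the sampling theorem needs (important separators from shadow vertices \emph{to} $T$, lying inside $M$); moreover its target $f_G(M)$ depends on the unknown $M$, so the $4^k$ enumeration you appeal to is not algorithmic.

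A second omission concerns coupling the two directions. Even granting that each direction can be covered for \emph{some} solution, you need both the forward and reverse shadows to be covered for the \emph{same} solution, or else the union $Z_1\cup Z_2$ has no single $M$ witnessing the lemma; running both directions ``from the same random sample'' does not by itself achieve this. The paper handles the coupling with a careful tie-breaking choice of $M^*$ (maximize $\card{r_G(S)\cup f_G(S)\cup S}$, then maximize $\card{r_G(S)}$), with the protected-set enlargement $V^\infty\cup Z_1$ in the second call, and with Claim~\ref{claim:M'=M^*}, which together show that the certification property holds for the same $M^*$ in $G^{\text{rev}}$ as well. To repair your proposal you would essentially have to reproduce the chain Lemma~\ref{lem:M'improvesM} $\rightarrow$ Corollary~\ref{lem:shadow-maximal-solution} $\rightarrow$ Claim~\ref{claim:M'=M^*}; the probabilistic and derandomization components you defer to Chitnis et al.\ are indeed black-boxable, but only once that structural work is in place.
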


We defer the proof of Lemma~\ref{lem:ShadowContainer_exists} to first see its implications.
We now show how the procedure $\ShadowContainer$ can be used to obtain a fixed-parameter algorithm for the \dagoddmultiwaycut problem in DAGs and thus prove Theorem \ref{thm:omwnodec-dag}.

\begin{algorithm}
\caption{Minimum odd node multiway cut in DAGs}
\label{alg:even_path_blocker}
  \begin{algorithmic}[1]
    \State \textbf{Input:} A DAG $G$ with terminal set $T$, a set $V^\infty\supseteq T$ of protected nodes, and $k\in \Z_+$.
    \State \textbf{Output:} An odd node multiway cut for $(G,T)$ of size at most $k$ and disjoint from $V^\infty$, or ``no solution of size at most $k$'' if such does not exist.
	\State $\mathcal Z\leftarrow \ShadowContainer(G, T, V^\infty, k)$ \label{line:branch}
    \For{$Z \in \mathcal Z$} 
      \State $(G_1, V_1^\infty) \gets \torso(G, V^\infty, Z)$
	  \State $N \gets \mathrm{SolveEasyInstance}(G_1, V_1^\infty, T, k)$ \label{line:solve_easy_instance}
	  \If{$N$ is a solution in $G$}
	    \State \Return $N$
	  \EndIf
	\EndFor
    \State \Return ``no solution of size at most $k$''
  \end{algorithmic}
\end{algorithm}

\begin{theorem}
\label{thm:pathblocker_is_fpt}
  There exists an algorithm that, given an instance $(G, V^\infty, T, k)$ of \dagoddmultiwaycut where $G$ is a DAG, in $2^{O(k^2)}\poly(\card{V(G)})$ time either finds a solution of size at most $k$ or determines that no such solution exists.
\end{theorem}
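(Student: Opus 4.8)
The plan is to show that Algorithm~\ref{alg:even_path_blocker} is correct and runs within the stated bound; since the one substantive ingredient---the construction of $\ShadowContainer$---is isolated in Lemma~\ref{lem:ShadowContainer_exists}, the proof of this theorem only has to assemble the pieces already established. I would split the argument into soundness, completeness, and running time.

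Soundness is immediate from the structure of the algorithm. The only way it returns a set $N$ is through the explicit test ``$N$ is a solution in $G$'', and $N$ is produced by $\mathtt{SolveEasyInstance}$ with parameter $k$, hence has size at most $k$. Thus anything returned is a genuine odd multiway node cut of size at most $k$ disjoint from $V^\infty$. This membership test itself can be carried out in polynomial time by running the odd-path decision procedure for DAGs (Lemma~\ref{lem:odd-path-decision}) on each ordered terminal pair in $G\setminus N$.

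For completeness, suppose $(G,V^\infty,T,k)$ admits a solution of size at most $k$. I would invoke Lemma~\ref{lem:ShadowContainer_exists} to obtain, among the sets returned in Line~\ref{line:branch}, a set $Z\in\mathcal{Z}$ together with a solution $M$ of size at most $k$ such that $Z$ is disjoint from $M$ and $s_G(M)\subseteq Z$. Fixing this $Z$ and the corresponding loop iteration, with $(G_1,V_1^\infty):=\torso(G,V^\infty,Z)$, I would argue as follows. Since $M$ is a solution of the original instance disjoint from $Z$, Corollary~\ref{cor:sol_in_torso_is_a_sol_in_G} shows that $M$ is also a solution of the torso instance $(G_1,V_1^\infty,T,k)$; and since $s_G(M)\subseteq Z\subseteq V(G)\setminus M$, Lemma~\ref{lem:shadow_has_degree_one} guarantees that every node of $s_{G_1}(M)$ has total degree at most one in $G_1\setminus M$. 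Hence $(G_1,V_1^\infty,T,k)$ is an easy instance, witnessed by $M$, so by Theorem~\ref{thm:dag-easy-instances} the call $\mathtt{SolveEasyInstance}(G_1,V_1^\infty,T,k)$ in Line~\ref{line:solve_easy_instance} returns a set $N$ of size at most $k$ that solves the torso instance. Applying Corollary~\ref{cor:sol_in_torso_is_a_sol_in_G} once more, now lifting a torso solution back to $G$, shows that $N$ is a solution of size at most $k$ in $G$, so the membership test passes and the algorithm returns a valid answer in this iteration (or in an earlier one, which soundness guarantees is equally valid). Conversely, if no solution of size at most $k$ exists, the test never passes and the algorithm correctly reports ``no solution of size at most $k$''.

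For the running time, Lemma~\ref{lem:ShadowContainer_exists} gives that $\ShadowContainer$ runs in $2^{O(k^2)}\poly(|V(G)|)$ time and returns a family $\mathcal{Z}$ of $2^{O(k^2)}\log|V(G)|$ sets. Each iteration computes a parity-preserving torso in polynomial time, runs $\mathtt{SolveEasyInstance}$ in $2.32^k\cdot n^{O(1)}$ time (Theorem~\ref{thm:dag-easy-instances}, via Proposition~\ref{thm:fastoct}), and performs the polynomial-time verification. Multiplying the iteration count by the per-iteration cost gives $2^{O(k^2)}\log|V(G)|\cdot 2.32^k\cdot n^{O(1)}=2^{O(k^2)}\poly(|V(G)|)$, the $2.32^k$ factor being absorbed into $2^{O(k^2)}$. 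I do not expect a genuine obstacle here: all the difficulty lives in Lemma~\ref{lem:ShadowContainer_exists}, and the only point needing care is confirming that the \emph{specific} solution $M$ furnished by $\ShadowContainer$---rather than merely some solution---transfers to the torso instance and witnesses easiness, which is exactly what Corollary~\ref{cor:sol_in_torso_is_a_sol_in_G} and Lemma~\ref{lem:shadow_has_degree_one} supply.
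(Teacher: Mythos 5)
Your proposal is correct and follows essentially the same route as the paper's own proof: it analyzes Algorithm~\ref{alg:even_path_blocker}, invokes Lemma~\ref{lem:ShadowContainer_exists} to obtain the pair $(Z,M)$, combines Corollary~\ref{cor:sol_in_torso_is_a_sol_in_G} with Lemma~\ref{lem:shadow_has_degree_one} to show the torso instance is easy, applies Theorem~\ref{thm:dag-easy-instances}, lifts the returned set back to $G$ via Corollary~\ref{cor:sol_in_torso_is_a_sol_in_G}, and bounds the running time by the cost of $\ShadowContainer$. Your write-up is in fact slightly more explicit than the paper's---spelling out the polynomial-time membership test via Lemma~\ref{lem:odd-path-decision} and the step that the specific solution $M$ transfers to the torso instance to witness easiness---but the decomposition and all key steps coincide.
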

\begin{proof}
  We use Algorithm~\ref{alg:even_path_blocker}.
  Let $(G, V^\infty,T,k)$ be an instance of \dagoddmultiwaycut, where $G$ is a DAG.
  Suppose there exists a solution of size at most $k$.
  By Lemma~\ref{lem:ShadowContainer_exists}, the procedure $\ShadowContainer(G, T, V^\infty, k)$ in Line~\ref{line:branch} returns a family $\mathcal{Z}$ of subsets of $V(G)$ with $|\mathcal{Z}|=2^{O(k^2)}\log|V(G)|$ containing a set $Z$ such that there is a solution $M$ of size at most $k$ that is disjoint from $Z$ and $Z$ contains $s_G(M)$. 
  Let $(G_1, V_1^\infty)$ be the result of applying $\torso$ operation to the set $Z$ in $G$ (i.e., the result of Step 2 in Algorithm \ref{alg:even_path_blocker}).
  By Lemma~\ref{lem:shadow_has_degree_one}, every node in $s_{G_1}(M)$ has total degree at most one in $G_1 \setminus M$.
  Therefore, by Theorem~\ref{thm:dag-easy-instances}, the set $N$ returned in Line~\ref{line:solve_easy_instance} is a solution to the instance $(G_1, V_1^\infty, T, k)$.
  By Corollary~\ref{cor:sol_in_torso_is_a_sol_in_G}, the set $N$ is also a solution to the original instance of the problem.

  If there is no solution of size at most $k$, the algorithm will not find any.
  Therefore, the algorithm is correct.
  The runtime of the algorithm is dominated by Line 2 which can be implemented to run in $2^{O(k^2)}\text{poly}(|V(G)|)$ time by Lemma~\ref{lem:ShadowContainer_exists}.
\end{proof}

To complete this proof, it remains to prove Lemma~\ref{lem:ShadowContainer_exists}.
In order to do so, we will use the following result. 
\begin{theorem}[{Chitnis~\etal~\cite[Thm. 3.18]{ChitnisEtAl2015}}]
\label{thm:random_set_by_chitnis}
  There is an algorithm that, given a digraph~$G$, a set of protected nodes $V^{\infty}\subseteq V(G)$, terminal nodes $T\subseteq V^{\infty}$ and an integer $k$, in time $2^{O(k^2)} \poly(|V(G)|)$ returns 
  a family $\mathcal{Z}$ of subsets of $V(G)\setminus V^{\infty}$ with $|\mathcal{Z}|=2^{O(k^2)}\log|V(G)|$ 
  such that for every $S,Y \subseteq V(G)$ satisfying
  \begin{enumerate}
    \item[(i)] $S$ is a thin set with $\card{S} \leq k$, and 
    \item[(ii)] for every $v \in Y$, there exists an important $v \rightarrow T$ separator contained in $S$,
  \end{enumerate}
  there is some $Z\in \mathcal{Z}$ for which $Y \subseteq Z \subseteq V(G) \setminus S$. 
\end{theorem}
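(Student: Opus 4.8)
The plan is to prove this as the standard \emph{derandomized shadow-covering} statement of the directed shadow-removal framework, in two stages: a randomized construction of a single set $Z$ that works for one fixed pair $(S,Y)$ with probability $2^{-O(k^2)}$, followed by a derandomization that replaces the random choices by the deterministic family $\mathcal{Z}$ of the claimed size. Beyond the notions of \emph{thin set} and \emph{important separator} recalled in Section~\ref{sec:prelims}, the only external inputs I would use are the classical bound that, for a fixed target $T$, there are at most $4^k$ important $X\rightarrow T$ separators of size at most $k$ and that they can be enumerated in time $4^k\poly(n)$, together with an explicit \emph{splitter}/universal-set construction (Naor, Schulman, and Srinivasan) for the derandomization step.

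Stage one is a randomized algorithm $\mathsf{RandomSet}(G,V^\infty,T,k)$ producing a random $Z\subseteq V(G)\setminus V^\infty$ via \emph{iterated random sampling of important separators toward $T$}. In each round it guesses uniformly at random one separator from the pool of at most $4^k$ candidate important separators of size at most $k$, deletes it, and accumulates into $Z$ the region that this separator cuts off from $T$; the process stops after at most $O(k)$ rounds. I would then prove the core lemma: for every fixed $(S,Y)$ satisfying hypotheses (i) and (ii),
\[
  \Prob{Y\subseteq Z \text{ and } Z\cap S=\emptyset}\ \ge\ 2^{-O(k^2)}.
\]
Here thinness of $S$ (hypothesis (i)) is what bounds the number of rounds by $O(k)$, since each correctly guessed important separator contained in $S$ strictly advances the peeling and $\card{S}\le k$; hypothesis (ii) is what guarantees that every $v\in Y$ eventually lands behind one of the peeled separators, so that a fixed sequence of at most $k$ correct guesses places all of $Y$ into $Z$. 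Conditioned on that sequence of correct guesses, every vertex of $S$ remains on the $T$-side of the deleted separators and is therefore excluded from $Z$. Each round's correct guess has probability at least $4^{-k}$, and the $O(k)$ rounds contribute $O(k^2)$ total random bits, which is exactly what yields the $2^{-O(k^2)}$ bound.

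Stage two derandomizes the $O(k^2)$-bit decision process. Since a successful run is determined by a sequence of $O(k)$ choices, each from a pool enumerable in $4^k\poly(n)$ time, I would feed the guesses through an explicit splitter / $(n,O(k^2))$-universal family so that, for \emph{every} valid $(S,Y)$, at least one vector in the family realizes a ``good'' sequence of guesses. Applying the deterministic peeling construction to each such vector produces the family $\mathcal{Z}$; its size is the $2^{O(k^2)}\log\card{V(G)}$ guaranteed by the universal-set bound, and since building each of these sets costs $O(k)$ rounds of important-separator computation, the total running time is $2^{O(k^2)}\poly(\card{V(G)})$, as required. This matches the bound quoted from Chitnis~\etal~\cite{ChitnisEtAl2015}.

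The main obstacle is the core lemma of Stage one: showing that a \emph{single} random set $Z$ simultaneously contains \emph{all} of $Y$ while avoiding \emph{all} of $S$ with probability $2^{-O(k^2)}$. Because $\card{Y}$ may be large, one cannot argue vertex-by-vertex; the argument must instead exploit the interaction between the thinness of $S$ and hypothesis (ii) to establish the structural fact that the whole of $Y$ sits behind a bounded ($O(k)$-sized) collection of important separators contained in $S$, so that one $O(k)$-round guess handles every vertex of $Y$ at once. Pinning down the correct notion of ``important separator toward $T$'' and the peeling order that provably keeps $S$ on the $T$-side is the delicate part; the enumeration bound, the union/product over rounds, and the splitter substitution are then routine.
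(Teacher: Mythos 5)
The first thing to note is that the paper does not prove Theorem~\ref{thm:random_set_by_chitnis} at all: it is imported verbatim, as a black box, from Chitnis~\etal~\cite[Thm.~3.18]{ChitnisEtAl2015}, and the fixed-parameter algorithm of Section~\ref{sec:DAG-FPT} only \emph{uses} it (inside Lemma~\ref{lem:ShadowContainer_exists}). So the only meaningful comparison is against the proof in that reference, and at the level of strategy your two-stage plan --- a randomized ``covering of the shadow'' construction with success probability $2^{-O(k^2)}$, followed by derandomization via splitters/universal sets of Naor--Schulman--Srinivasan --- is indeed the shape of their argument. The gap is in the core of Stage one.

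Your randomized process guesses, in each of $O(k)$ rounds, ``one separator from the pool of at most $4^k$ candidate important separators.'' But important separators are defined relative to a \emph{source}: the bound of $4^k$ holds for important $v \rightarrow T$ separators of size at most $k$ for a \emph{fixed} $v$. The separators you need to hit are important $v \rightarrow T$ separators for vertices $v \in Y$, and the algorithm knows neither $S$ nor $Y$. The only pool it can actually enumerate is the collection of important $v \rightarrow T$ separators over \emph{all} $v \in V(G)$, which has size $\Theta(n \cdot 4^k)$; a uniform guess from that pool is correct with probability only $1/(n4^k)$, so $k$ rounds give success probability $n^{-\Omega(k)}$, and any family derandomizing this has size $n^{\Omega(k)}$ --- an XP bound, not the claimed $2^{O(k^2)}\log\card{V(G)}$. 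Circumventing exactly this chicken-and-egg problem (covering an unknown, possibly huge $Y$ without being able to guess its sources) is the entire technical content of the ``random sampling of important separators'' technique in \cite{ChitnisEtAl2015}; your sketch assumes it away. There is also a telltale internal inconsistency: if Stage one really were $k$ uniform guesses from pools of size $4^k$, the whole randomness space would have size $(4^k)^k = 2^{O(k^2)}$, \emph{independent of $n$}, and Stage two would be plain enumeration --- no $(n, O(k^2))$-universal family would be needed and no $\log\card{V(G)}$ factor would appear. The $\log$ factor in the statement signals that the random choices must range over an $n$-sized ground set, and your proposal never identifies where that dependence enters.

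A secondary point: the ``core lemma'' you defer as the main obstacle is actually the elementary part, and your attribution of roles to the hypotheses is off. Covering $Y$ needs only $\card{S} \le k$, not thinness: each $v \in Y$ has an important $v \rightarrow T$ separator $K_v \subseteq S$, the union $\bigcup_{v \in Y} K_v$ has at most $k$ vertices, so greedily at most $k$ of the $K_v$'s have the same union $R \subseteq S$, and $R$ then cuts every $v \in Y$ from $T$. Thinness is what you need for the other two facts: it forces $Y \cap S = \emptyset$ (otherwise some $s \in S$ would lie in $r_G(S \setminus \set{s})$), and it keeps $S$ out of $Z := r_G(R)$, since each $s \in S$ is either in $R$ (hence excluded from the shadow by definition) or satisfies $R \subseteq S \setminus \set{s}$, whence $s \in r_G(R)$ would imply $s \in r_G(S\setminus\set{s})$. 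So a correct target for the guessing process exists by easy arguments; what is genuinely hard --- and missing --- is a process an algorithm can run, with enumerable choices, that finds it with probability $2^{-O(k^2)}$.
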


To invoke Theorem~\ref{thm:random_set_by_chitnis}, we need to guarantee that there exists a solution $S$ of size at most $k$ such that $S$ is thin and its reverse shadow $Y$ in $G$ has the property that for every $v \in Y$ there is an important $v \rightarrow T$ separator contained in $S$.
Towards obtaining such a solution, we prove the following.
\begin{lemma}
\label{lem:M'improvesM}
  Let $(G, V^\infty, T, k)$ be an instance of \dagoddmultiwaycut, where $G$ is a DAG.
  Let $M$ be a solution for this instance.
  If there exists $v \in r_G(M)$ such that $M$ does not contain an important $v \rightarrow T$ separator, then there exists another solution $M'$ of size at most~$\card{M}$, such that $r_G(M) \cup f_G(M) \cup M \subseteq r_G(M') \cup f_G(M') \cup M'$, and $r_G(M) \subsetneq r_G(M')$.
\end{lemma}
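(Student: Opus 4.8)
The plan is to use the standard ``shadow-pushing'' argument: locate the portion of $M$ that forms the innermost $v \rightarrow T$ separator, replace it by a dominating important separator, and argue that this pushes the reverse shadow strictly outward while keeping the solution valid. Concretely, let $R := \calR_{G \setminus M}(v)$ be the set of nodes reachable from $v$ in $G \setminus M$; since $v \in r_G(M)$ we have $R \cap T = \emptyset$, and in fact $R \subseteq r_G(M)$. Let $N$ be the set of nodes of $M$ having an in-neighbor in $R$ (the out-boundary of $R$); this is a minimal $v \rightarrow T$ separator with $\calR_{G \setminus N}(v) = R$, and $N \subseteq M$. Because $M$ contains no important $v \rightarrow T$ separator, $N$ itself is not important, so it is dominated by some important $v \rightarrow T$ separator $N'$ with $\card{N'} \le \card{N}$ and $R \subsetneq R' := \calR_{G \setminus N'}(v)$. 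I would first record the two structural consequences of domination that drive everything else: $N' \cap R = \emptyset$, and every node of $N \setminus N'$ lies in $R'$. Setting $M' := (M \setminus N) \cup N'$ then gives $\card{M'} \le \card{M} - \card{N} + \card{N'} \le \card{M}$.

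Next I would show $M'$ is a solution. Suppose $P$ is an odd $T$-path in $G \setminus M'$. If $P$ avoids $N$, then $P$ avoids $(M \setminus N) \cup N = M$, contradicting that $M$ is a solution; hence $P$ passes through some $w \in N \setminus N'$. By the structural consequence above, $w \in R'$, so $w$ cannot reach $T$ in $G \setminus N'$, and therefore cannot reach $T$ in the subgraph $G \setminus M' \subseteq G \setminus N'$. But the suffix of $P$ from $w$ to its terminal endpoint is exactly such a $w \rightarrow T$ path in $G \setminus M'$, a contradiction. It is worth highlighting that parity plays no role here: \emph{any} $T$-path in $G \setminus M'$, regardless of length, yields the contradiction, so the odd case comes for free.

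For the shadow conclusions the same routing idea applies. To see $r_G(M) \subseteq r_G(M')$, take $u \in r_G(M)$: any $u \rightarrow T$ path in $G \setminus M'$ must (as above) use a node of $N \setminus N' \subseteq R'$ and so cannot reach $T$, whence $u$ remains in the reverse shadow unless it was absorbed into $N' \subseteq M'$, which I would rule out using minimality of $N'$. Strictness $r_G(M) \subsetneq r_G(M')$ follows by exhibiting a witness in $R' \setminus R$ that is not in $M'$. The union containment $r_G(M) \cup f_G(M) \cup M \subseteq r_G(M') \cup f_G(M') \cup M'$ then splits into three parts: the nodes of $M \setminus N$ stay in $M'$; the nodes of $N \setminus N'$ land in $r_G(M')$ by the argument just given; and the forward-shadow nodes are tracked by the symmetric reachability-from-$T$ analysis in $G^{\text{rev}}$.

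I expect the main obstacle to be the bookkeeping in this last step, specifically (a) verifying $r_G(M) \cap N' = \emptyset$ and producing a strict witness in $R' \setminus R$ disjoint from $M'$, and (b) the forward-shadow containment $f_G(M) \subseteq f_G(M') \cup M'$, where the clean ``$w \in R'$ cannot reach $T$'' obstruction does not transfer verbatim and must be re-derived in the reverse orientation. Throughout, acyclicity of $G$ is what guarantees that the concatenated and rerouted subpaths remain genuine simple paths rather than mere walks, which is precisely where the DAG hypothesis is essential in the parity setting.
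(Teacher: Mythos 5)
Your overall strategy mirrors the paper's proof: isolate the part of $M$ nearest to $v$, replace it by a dominating important separator $N'$, set $M':=(M\setminus N)\cup N'$, and verify feasibility, the union containment, and strict growth of the reverse shadow. The feasibility and reverse-shadow-growth arguments are essentially the paper's Claims on $M\setminus M'\subseteq r_G(M')$ and $r_G(M)\subseteq r_G(M')$. However, your plan for the forward shadow is a genuine gap, and it is exactly the step you flagged as obstacle (b): the containment $f_G(M)\subseteq f_G(M')\cup M'$ that you propose to prove is \emph{false}, and no ``symmetric reachability-from-$T$ analysis in $G^{\text{rev}}$'' can establish it. The swap pushes the cut away from $v$, which controls reachability \emph{to} $T$ (nodes of $R'$ cannot reach $T$ in $G\setminus N'$) but gives no control over reachability \emph{from} $T$: after the swap, nodes of $N\setminus N'$ typically become reachable from $T$, and forward-shadow nodes of $M$ then migrate into the \emph{reverse} shadow of $M'$. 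Concretely, take $T=\{s,t\}$ and edges $s\to p$, $v\to p$, $p\to q$, $p\to u$, $u\to q$, $q\to t$, with $M=\{p\}$. Then $M$ is a solution, $v\in r_G(M)$, and $\{p\}$ is not important (it is dominated by $\{q\}$), so the construction gives $N'=\{q\}$ and $M'=\{q\}$; here $u\in f_G(M)$ but $u\notin f_G(M')\cup M'$, since $s\to p\to u$ survives in $G\setminus M'$ --- instead $u\in r_G(M')$. What one must prove is only the union containment $f_G(M)\subseteq r_G(M')\cup f_G(M')\cup M'$, and the paper's proof of it is inherently two-sided with no reverse-orientation analogue: if $u\in f_G(M)$ avoided all three sets, then a $T\to u$ path avoiding $M'$ (it exists because $u\notin f_G(M')$) must hit $M\setminus M'$, and its suffix from that hit concatenated with a $u\to T$ path avoiding $M'$ (it exists because $u\notin r_G(M')$) yields an $(M\setminus M')\to T$ path avoiding $M'$, contradicting $M\setminus M'\subseteq r_G(M')$.

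Two smaller points. First, your assertion that the out-boundary $N$ of $R$ is a \emph{minimal} $v\to T$ separator is wrong in general (edges $v\to a$, $v\to b$, $b\to a$, $a\to t$ with $M=\{a,b\}$ give $N=\{a,b\}$, yet $\{a\}$ already separates), so the deduction ``$N$ is not important, hence dominated by an important $N'$ with $R\subsetneq R'$'' is not justified as written; the paper avoids this by passing to a minimal separator $M_1\subseteq M_0$ and only then invoking the hypothesis that $M$ contains no important separator. Your argument survives this repair, since all you really use about $N$ is that each of its nodes has an in-neighbor in $R$. Second, deferring the possibility ``$u\in r_G(M)$ absorbed into $N'$'' to ``minimality of $N'$'' does not work: minimality only gives each $w\in N'$ a $w\to T$ path avoiding $N'\setminus\{w\}$, which says nothing about avoiding $M$; ruling this out requires more (the paper's own proof of its corresponding claim is in fact silent on the same point), so you should not advertise it as settled by minimality.
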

\begin{proof}
  Let $M_0$ be the set of nodes $u \in M$ for which there is a $v \rightarrow u$ path in $G$ that is internally disjoint from $M$. 
  Since $v\in r_G(M)$, every $v\rightarrow T$ path intersects $M$.
  For a $v\rightarrow T$ path $P$, the first node $u\in P\cap M$ is in $M_0$.
  Hence, every $v\rightarrow T$ path intersects $M_0$. 
  Therefore, the set $M_0$ is a $v \rightarrow T$ separator in $G$.
  Therefore, it contains a minimal separator $M_1$. Since we assumed that there is no important $v \rightarrow T$ separator contained in $M$, the set $M_1$ is not an important $v \rightarrow T$ separator. 
  Suppose $M_1$ is dominated by another $v \rightarrow T$ separator and let $M_2$ be an important $v\rightarrow T$ separator that dominates $M_1$.
  Define $M'$ as $(M \setminus M_1) \cup M_2$.
  We recall that a separator is by definition, disjoint from the protected node set.
  Therefore, $M' \cap V^\infty = \emptyset$.
  We will show that $M'$ contradicts the choice of $M$. We need the following claims.
  \begin{claim}
  \label{claim:M-M'_is_contained_in_r_G(M')}
    $M \setminus M' \subseteq r_G(M')$.
  \end{claim}
  \begin{proof}[Proof of Claim~\ref{claim:M-M'_is_contained_in_r_G(M')}]
    We observe that $M \setminus M' = M_1 \setminus M_2$.
    Let $u$ be an arbitrary node in $M_1 \setminus M_2$.
    Since $u \in M_1$ and $M_1$ is a minimal $v\rightarrow T$ separator, there is a $v \rightarrow u$ path $P_1$ that is internally disjoint from $M_1$.
    Since $M_2$ dominates $M_1$, therefore, $R_{G \setminus M_1}(v) \subseteq R_{G \setminus M_2}(v)$.
    Thus, $V(P_1)\subseteq R_{G\setminus M_2}(v)$.
    Hence, $P_1$ is disjoint from $M_2$.
    Suppose $P_2$ is an arbitrary $u \rightarrow T$ path in $G$.
    Concatenation of $P_1$ and $P_2$ is a $v \rightarrow T$ path in $G$ and therefore, has to intersect $M_2$. Since $P_1$ is disjoint from $M_2$, the path $P_2$ has to intersect $M_2$.
    Hence, every $u \rightarrow T$ path in $G$ intersects~$M_2$ and in particular, intersects $M'$.
    Equivalently, $u \in r_G(M')$.
    This completes the proof of Claim~\ref{claim:M-M'_is_contained_in_r_G(M')}.
  \end{proof}

  We next show that $M'$ is a feasible solution for the problem and is no larger than $M$.
  \begin{claim}
  \label{claim:M'isasolution}
    The set $M'$ intersects every odd $T$-path in $G$ and $\card{M'} \leq \card{M}$.
  \end{claim}
  \begin{proof}[Proof of Claim~\ref{claim:M'isasolution}]
    By assumption, every odd $T$-path $P$ intersects $M$. If $P$ intersects $M \cap M'$, then it also intersects $M'$.
    If $P$ intersects $M \setminus M'$, then by Claim~\ref{claim:M-M'_is_contained_in_r_G(M')} it also intersects $M'$.
    Thus, every odd $T$-path in $G$ intersects $M'$.
    Furthermore, by definition of $M'$, we have
    \begin{equation*}
      \card{M'}    = \card{M} + (\card{M_2 \setminus M} - \card{M_1 \setminus M_2})
                \leq \card{M} + (\card{M_2} - \card{M_1})
                \leq \card{M} \enspace .
    \end{equation*}
    This completes the proof of Claim~\ref{claim:M'isasolution}.
  \end{proof}

  \begin{claim}
  \label{claim:r_G(M)iscontainedinr_G(M')}
    $r_G(M) \subseteq r_G(M')$.
  \end{claim}
  \begin{proof}[Proof of Claim~\ref{claim:r_G(M)iscontainedinr_G(M')}]
    Let $u$ be an arbitrary node in $r_G(M)$.
    The set $M$ is a $u \rightarrow T$ separator.
    Therefore, every $u \rightarrow T$ path intersects $M$.
    We need to show that every $u \rightarrow T$ path also intersects $M'$.
    Let $P$ be a $u \rightarrow T$ path.
    If $P$ intersects $M \cap M'$, then it also intersects $M'$.
    If $P$ does not intersect $M \cap M'$, then it has to intersect $M \setminus M'$.
    By Claim~\ref{claim:M-M'_is_contained_in_r_G(M')}, every $M \setminus M' \rightarrow T$ path intersects $M'$.
    Therefore, $u \in r_G(M')$.
    This completes the proof of Claim~\ref{claim:r_G(M)iscontainedinr_G(M')}.
  \end{proof}

  \begin{claim}
  \label{claim:M'isshadowmaximal}
    $r_G(M) \cup f_G(M) \cup M \subseteq r_G(M') \cup f_G(M') \cup M'$.
  \end{claim}
  \begin{proof}[Proof of Claim~\ref{claim:M'isshadowmaximal}]
    By Claim~\ref{claim:M-M'_is_contained_in_r_G(M')}, we have $M \setminus M' \subseteq r_G(M')$ and by Claim~\ref{claim:r_G(M)iscontainedinr_G(M')}, we have $r_G(M) \subseteq r_G(M')$.
    Thus, it remains to prove that $f_G(M) \subseteq r_G(M') \cup f_G(M') \cup M'$.
    Let $u$ be an arbitrary node in $f_G(M) \setminus (r_G(M') \cup f_G(M') \cup M')$.
    Since $u \notin f_G(M')$, there is a $T \rightarrow u$ path~$P_1$ in $G$ that is disjoint from $M'$.
    But $u \in f_G(M)$.
    Thus $P_1$ has to intersect $M$, particularly it has to intersect $M \setminus M'$.
    Let $P_2$ be a subpath of $P_1$ from $M \setminus M'$ to $u$.
    Since $u \notin r_G(M')$, there is a $u \rightarrow T$ path $P_3$ in $G$ that is disjoint from $M'$.
    The concatenation of $P_2$ and $P_3$ is a path from $M \setminus M'$ to $T$ that is disjoint from $M'$.
    But by Claim~\ref{claim:M-M'_is_contained_in_r_G(M')}, every $M \setminus M' \rightarrow T$ path in~$G$ must intersect $M'$.
    This contradiction shows that $f_G(M) \subseteq (r_G(M') \cup f_G(M') \cup M')$.
    This completes the proof of Claim~\ref{claim:M'isshadowmaximal}.
  \end{proof}

  \begin{claim}
  \label{claim:r_G(M)isstrictlycontainedinr_G(M)}
    $r_G(M) \subsetneq r_G(M')$.
  \end{claim}
  \begin{proof}[Proof of Claim~\ref{claim:r_G(M)isstrictlycontainedinr_G(M)}]
    By Claim~\ref{claim:r_G(M)iscontainedinr_G(M')}, $r_G(M) \subseteq r_G(M')$.
    We need to prove $r_G(M) \neq r_G(M')$.
    We recall that $M \setminus M' = M_1 \setminus M_2$.
    Since $M_2$ is an important $v\rightarrow T$ separator, it follows that the $v\rightarrow T$ separator $M_1$ is not contained in $M_2$.
    Therefore $M \setminus M'$ is non-empty.
    Furthermore, by definition of reverse shadow, $M \setminus M'$ is not contained in $r_G(M)$, but by Claim~\ref{claim:M-M'_is_contained_in_r_G(M')}, it is contained in $r_G(M')$.
    This completes the proof of Claim~\ref{claim:r_G(M)isstrictlycontainedinr_G(M)}.
  \end{proof}
  By Claim~\ref{claim:M'isasolution}, $M'$ is a solution of size not larger than $M$.
  Therefore, the set $M'$ has the properties claimed in Lemma~\ref{lem:M'improvesM}.
  This completes the proof of Lemma~\ref{lem:M'improvesM}.
\end{proof}

We recall that a set $M \subseteq V(G)$ is \emph{thin}, if every node $v \in M$ is not in $r_G(M \setminus \set{v})$.
\begin{corollary}
\label{lem:shadow-maximal-solution}
  Let $(G, V^\infty, T, k)$ be an instance of \dagoddmultiwaycut, where $G$ is a DAG.
  Let $M^*$ be an optimal solution that maximizes the size of $\card{r_G(S) \cup f_G(S) \cup S}$ among all optimal solutions $S$.
  If more than one optimal solution maximizes this quantity, choose the one with largest $\card{r_G(S)}$.
  The set $M^*$ is thin and for every node $v \in r_G(M^*)$ there is an important $v \rightarrow T$ separator in $M^*$.
\end{corollary}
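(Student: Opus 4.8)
The plan is to establish the two desired properties of $M^*$ — thinness, and the existence of an important $v \rightarrow T$ separator in $M^*$ for every $v \in r_G(M^*)$ — by invoking Lemma~\ref{lem:M'improvesM} and arguing that if either property fails, we can produce another optimal solution that is strictly better with respect to the tie-breaking criteria used to define $M^*$. The key observation is that $M^*$ is chosen to be extremal in a carefully ordered sense: first it maximizes $\card{r_G(S) \cup f_G(S) \cup S}$ over all optimal solutions $S$, and then among those it maximizes $\card{r_G(S)}$.

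First I would handle the important-separator property. Suppose for contradiction that there exists $v \in r_G(M^*)$ such that $M^*$ contains no important $v \rightarrow T$ separator. Then Lemma~\ref{lem:M'improvesM} applies to $M = M^*$ and yields another solution $M'$ with $\card{M'} \leq \card{M^*}$ (hence $M'$ is also optimal, since $M^*$ is), satisfying both $r_G(M^*) \cup f_G(M^*) \cup M^* \subseteq r_G(M') \cup f_G(M') \cup M'$ and the strict inclusion $r_G(M^*) \subsetneq r_G(M')$. The first containment shows $\card{r_G(M') \cup f_G(M') \cup M'} \geq \card{r_G(M^*) \cup f_G(M^*) \cup M^*}$, so $M'$ does at least as well on the primary objective; hence $M'$ is among the optimal solutions maximizing that quantity. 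But then the strict inclusion $r_G(M^*) \subsetneq r_G(M')$ gives $\card{r_G(M')} > \card{r_G(M^*)}$, contradicting the choice of $M^*$ as the maximizer of $\card{r_G(S)}$ among those solutions. This contradiction establishes the important-separator property.

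Next I would establish thinness. Recall $M^*$ is thin if no node $w \in M^*$ lies in $r_G(M^* \setminus \set{w})$. Suppose not: some $w \in M^*$ satisfies $w \in r_G(M^* \setminus \set{w})$. The natural move is to consider $M^* \setminus \set{w}$ and argue it is still a feasible solution, contradicting optimality (or the extremality) of $M^*$. The intuition is that if $w$ is in the reverse shadow of $M^* \setminus \set{w}$, then every $w \rightarrow T$ path already meets $M^* \setminus \set{w}$, so $w$ is redundant for blocking any odd $T$-path that would pass through it toward $T$; one must also check the incoming side, but since $w$ being reachable-then-blocked means any odd $T$-path through $w$ has its terminal-bound portion already cut, $w$ contributes nothing, and $M^* \setminus \set{w}$ remains a valid odd multiway cut of strictly smaller size. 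This would contradict the optimality of $M^*$ directly.

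The main obstacle will be the thinness argument: I must verify cleanly that removing a reverse-shadow node $w$ does not create an uncovered odd $T$-path. The subtlety is the parity constraint — it is not enough that every $w \rightarrow T$ path meets $M^* \setminus \set{w}$; I must ensure that no odd $T$-path $P$ that used $w$ as an internal node survives in $G \setminus (M^* \setminus \set{w})$. Here I would use that any $T$-path through $w$ decomposes at $w$ into a $T \rightarrow w$ segment and a $w \rightarrow T$ segment, and the latter must already intersect $M^* \setminus \set{w}$ by the assumption $w \in r_G(M^* \setminus \set{w})$; hence $P$ itself meets $M^* \setminus \set{w}$, independent of its parity. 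This reduces the parity complication to the plain reachability statement and closes the argument.
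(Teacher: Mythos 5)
Your proposal is correct and follows essentially the same route as the paper: thinness is obtained by observing that a node $w\in M^*$ with $w\in r_G(M^*\setminus\{w\})$ is redundant (any odd $T$-path through $w$ already meets $M^*\setminus\{w\}$ via its $w\rightarrow T$ segment), contradicting optimality, and the important-separator property follows by applying Lemma~\ref{lem:M'improvesM} and contradicting the two-level extremal choice of $M^*$. Your write-up merely spells out details the paper leaves implicit (why $M'$ remains optimal, and the parity-free decomposition argument for thinness), which is a faithful elaboration rather than a different proof.
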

\begin{proof}
  The set $M^*$ is thin.
  If not, there is a node $v \in M^*$ such that $v \in r_G(M^* \setminus \set{v})$.
  Then $M^* \setminus \set{v}$ is a solution too, contradicting the optimality of $M^*$.

  If there is a node $v \in r_G(M^*)$ for which there is no important $v \rightarrow T$ separator in $M^*$, then by Lemma~\ref{lem:M'improvesM}, there exists a solution $M'$ such that $r_G(M) \cup f_G(M) \cup M \subseteq r_G(M') \cup f_G(M') \cup M'$, and $r_G(M) \subsetneq r_G(M')$.
  This contradicts the choice of $M^*$.
  Therefore, for every node $v \in r_G(M^*)$ there is an important $v \rightarrow T$ separator in $M^*$.
\end{proof}

We will use Corollary~\ref{lem:shadow-maximal-solution} to prove Lemma~\ref{lem:ShadowContainer_exists}.
\begin{proof}[Proof of Lemma~\ref{lem:ShadowContainer_exists}]
  Let us use $\ReverseShadowContainer(G, V^\infty, k)$ to denote the algorithm from Theorem \ref{thm:random_set_by_chitnis}. 
  We will show that Algorithm~\ref{alg:ShadowContainer} generates the desired set.
  \begin{algorithm}
  \caption{\texttt{ShadowContainer}
  \label{alg:ShadowContainer}}
    \begin{algorithmic}[1]
    \State \textbf{Input:} A digraph $G$ with terminal set $T$, a set $V^\infty$ of protected nodes containing $T$, and $k\in \Z_+$.
	\State \textbf{Output:} A set $\mathcal Z$ of at most $2^{O(k^2)}\log|V(G)|$ subsets of $V(G)$ with the property that if $(G,T,V^{\infty},k)$ admits a solution of size at most $k$, then for some solution $M$ of size at most $k$, there exists a set $Z\in \mathcal{Z}$ that is disjoint from $M$ and contains~$s_G(M)$.
  \State Let $G^\text{rev}$ denote the graph obtained from $G$ by reversing the orientation of all edges
				\State $\mathcal{Z}_1 \gets \ReverseShadowContainer(G, V^\infty, k)$    \label{line:Z1}
        \For{$Z_1\in \mathcal{Z}_1$} \label{line:branch1}
					\State $\mathcal{Z}_2 \gets \ReverseShadowContainer(G^\text{rev}, V^\infty \cup Z_1, k)$  \label{line:Z2}
					\For{$Z_2\in \mathcal{Z}_2$} \label{line:branch2}
						\State $\mathcal{Z}\gets \mathcal{Z}\cup \{Z_1\cup Z_2\}$
					\EndFor
				\EndFor
        \State \Return $\mathcal{Z}$
    \end{algorithmic}
\end{algorithm}

  By Theorem~\ref{thm:random_set_by_chitnis}, the cardinality of $\mathcal{Z}$ returned by the algorithm is $2^{O(k^2)}\log|V(G)|$. 
  The runtime analysis of the algorithm follows from the runtime analysis of the procedure $\ReverseShadowContainer$ in Theorem~\ref{thm:random_set_by_chitnis}.
  To prove the correctness of this algorithm, we argue that at least one of the sets in the returned family $\mathcal{Z}$ has the desired properties.

  Suppose there exists a solution of size at most $k$ and let $M^*$ be an optimal solution that maximizes the size of $\card{r_G(S) \cup f_G(S) \cup S}$ among all optimal solutions $S$.
  If more than one solution maximizes this quantity, choose the one with largest $\card{r_G(S)}$.
  By Corollary~\ref{lem:shadow-maximal-solution}, the solution $M^*$ is thin and has the property that every node $v$ in the reverse shadow of $M^*$ has an important $v \rightarrow T$ separator contained in $M^*$.
  By Theorem~\ref{thm:random_set_by_chitnis}, the procedure $\ReverseShadowContainer(G, V^\infty, k)$ in Line~\ref{line:Z1} will return a family $\mathcal{Z}_1$ of sets containing a set $Z_1$ that is disjoint from $M^*$ and contains its reverse shadow.
  Let us fix such a $Z_1$.

  Note that $G^\text{rev}$ is a DAG on the same node set as $G$.
  What's more, any solution for the \dagoddmultiwaycut\ instance $(G^\text{rev}, V^\infty \cup Z_1, T, k)$ is also a solution for the instance $(G, V^\infty, T, k)$.
  Conversely, a solution for the instance $(G, V^\infty, T, k)$ that is disjoint from $Z_1$ is also a solution for the instance $(G^\text{rev}, V^\infty \cup Z_1, T, k)$.
  Therefore, the set $M^*$ is also an optimal solution to the instance $(G^\text{rev}, V^\infty \cup Z_1, T, k)$.
  We observe that $f_G(S) = r_{G^\text{rev}}(S)$ and $r_G(S) = f_{G^\text{rev}}(S)$ for all $S \subseteq V(G) \setminus V^\infty$. Therefore, $M^*$ maximizes the size of $r_{G^\text{rev}}(S) \cup f_{G^\text{rev}}(S) \cup S$ among all optimal solutions $S$ to $(G^\text{rev}, V^\infty \cup Z_1, T, k)$.
  We have the following claim. 

  \begin{claim}
  \label{claim:M'=M^*}
    If for an optimal solution $M'$ for the instance $(G^\text{rev}, V^\infty \cup Z_1, T, k)$ of \dagoddmultiwaycut, we have $r_{G^\text{rev}}(M^*) \cup f_{G^\text{rev}}(M^*) \cup M^* \subseteq r_{G^\text{rev}}(M') \cup f_{G^\text{rev}}(M') \cup M'$ and $r_{G^\text{rev}}(M^*) \subseteq r_{G^\text{rev}}(M')$, then $M' = M^*$.
  \end{claim}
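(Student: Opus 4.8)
The plan is to transfer both hypotheses from $G^\text{rev}$ to $G$ using the identities $r_{G^\text{rev}}(S)=f_G(S)$ and $f_{G^\text{rev}}(S)=r_G(S)$, recorded just above the claim for every $S\subseteq V(G)\setminus V^\infty$. Writing $U(S):=r_G(S)\cup f_G(S)\cup S$, the two hypotheses become $U(M^*)\subseteq U(M')$ and $f_G(M^*)\subseteq f_G(M')$. I would first collect three facts about $M'$. Since every solution of $(G^\text{rev},V^\infty\cup Z_1,T,k)$ is a solution of $(G,V^\infty,T,k)$, and conversely $M^*$ (being disjoint from $Z_1$) is a solution of the reversed instance, $M'$ is in fact an optimal solution of the \emph{original} instance with $\card{M'}=\card{M^*}$. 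Moreover, as a solution of the reversed instance $M'$ is disjoint from $V^\infty\cup Z_1$, and since $Z_1$ was fixed to contain $r_G(M^*)$, we obtain the crucial disjointness $M'\cap r_G(M^*)=\emptyset$.

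Next I would invoke the extremal choice of $M^*$. Because $M^*$ maximizes $\card{U(S)}$ over all optimal solutions $S$ of the original instance and $M'$ is now known to be such an optimal solution, the containment $U(M^*)\subseteq U(M')$ forces $U(M^*)=U(M')$. I would then locate the symmetric difference $M'\setminus M^*$: for any $y\in M'\setminus M^*$ we have $y\notin M^*$ and $y\in U(M')=U(M^*)$, so $y\in s_G(M^*)=r_G(M^*)\cup f_G(M^*)$. If $y\in f_G(M^*)$, then $f_G(M^*)\subseteq f_G(M')$ would place $y$ in $f_G(M')$, which is disjoint from $M'$, contradicting $y\in M'$. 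Hence $y\in r_G(M^*)$, establishing $M'\setminus M^*\subseteq r_G(M^*)$.

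Finally I would combine this containment with the disjointness $M'\cap r_G(M^*)=\emptyset$ to deduce $M'\setminus M^*=\emptyset$, i.e.\ $M'\subseteq M^*$; since $\card{M'}=\card{M^*}$, this yields $M'=M^*$. I expect the one genuinely delicate point to be recognizing that the protection by $Z_1$ is indispensable: the conditions $U(M^*)=U(M')$ and $f_G(M^*)\subseteq f_G(M')$ alone do \emph{not} determine $M'$ (one can exchange a reverse-shadow node of one minimal solution for another while preserving $U$ and a comparable forward shadow), and it is precisely the requirement that $M'$ avoid $Z_1\supseteq r_G(M^*)$---forbidding exactly the nodes produced in the previous paragraph---that collapses $M'\setminus M^*$ to the empty set. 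Note that the $r_G$-tie-breaking used to define $M^*$ is not needed \emph{inside} this claim; it enters earlier only to guarantee, through the important-separator property of Corollary~\ref{lem:shadow-maximal-solution}, that a set $Z_1\in\mathcal{Z}_1$ containing $r_G(M^*)$ exists at all.
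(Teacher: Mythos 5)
Your proof is correct and is essentially the paper's own argument, just translated from $G^\text{rev}$ into $G$-coordinates via the identities $r_{G^\text{rev}}(S)=f_G(S)$ and $f_{G^\text{rev}}(S)=r_G(S)$: both proofs use maximality of $\card{U(M^*)}$ to force $U(M^*)=U(M')$, place $M'\setminus M^*$ inside $s_G(M^*)$, exclude the $f_G(M^*)$ part using the second hypothesis together with disjointness of a solution from its own shadow, and exclude the $r_G(M^*)$ part using the protection by $Z_1$. Your closing observations (that the $Z_1$-protection is the indispensable ingredient, and that the $r_G$-tie-breaking is not used inside the claim) are also accurate.
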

  \begin{proof}[Proof of Claim~\ref{claim:M'=M^*}]
    As $M^*$ maximizes $\card{r_{G^\text{rev}}(S) \cup f_{G^\text{rev}}(S) \cup S}$ among all optimal solutions for the instance $(G, V^\infty, T, k)$ and as $r_{G^\text{rev}}(M^*) \cup f_{G^\text{rev}}(M^*) \cup M^* \subseteq r_{G^\text{rev}}(M') \cup f_{G^\text{rev}}(M') \cup M'$, the two sets $r_{G^\text{rev}}(M^*) \cup f_{G^\text{rev}}(M^*) \cup M^*$ and $r_{G^\text{rev}}(M') \cup f_{G^\text{rev}}(M') \cup M'$ must be equal.
    Therefore, the set $M' \setminus M^*$ is contained inside $r_{G^\text{rev}}(M^*) \cup f_{G^\text{rev}}(M^*) \cup M^*$.
    Since nodes in $f_{G^\text{rev}}(M^*)$ are protected in $G^{\text{rev}}$ by construction, the solution $M'$ cannot contain any node from $f_{G^\text{rev}}(M^*)$.
    Since $r_{G^\text{rev}}(M^*) \subseteq r_{G^\text{rev}}(M')$ and by definition of reverse shadow, $M'$ is disjoint from $r_{G^\text{rev}}(M^*)$.
    Thus, the set $M' \setminus M^*$ is disjoint from $M^*$ and $r_{G^\text{rev}}(M^*)$ and $f_{G^\text{rev}}(M^*)$, while being contained in $r_{G^\text{rev}}(M^*) \cup f_{G^\text{rev}}(M^*) \cup M^*$.
    Hence, $M' \setminus M^* = \emptyset$ or equivalently $M' \subseteq M^*$.
    Therefore, $M' = M^*$, because $\card{M'} = \card{M^*}$.
    This completes the proof of Claim~\ref{claim:M'=M^*}.
  \end{proof}

  Suppose there is a node $v \in r_{G^\text{rev}}(M^*)$ such that no important $v \rightarrow T$ separator in $G^\text{rev}$ is contained in $M^*$.
  Then by Lemma~\ref{lem:M'improvesM}, there is another optimal solution $M'$ such that $r_{G^\text{rev}}(M^*) \cup f_{G^\text{rev}}(M^*) \cup M^* \subseteq r_{G^\text{rev}}(M') \cup f_{G^\text{rev}}(M') \cup M'$ and $r_{G^\text{rev}}(M^*) \subsetneq r_{G^\text{rev}}(M')$.
  By Claim~\ref{claim:M'=M^*}, the set $M' = M^*$, which contradicts $r_{G^\text{rev}}(M^*) \subsetneq r_{G^\text{rev}}(M')$.
  This contradiction shows that for every node $v \in r_{G^\text{rev}}(M^*)$, there is an important $v \rightarrow T$ separator in $G^\text{rev}$ that is contained in $M^*$.
  Thus, by Theorem~\ref{thm:random_set_by_chitnis}, the procedure $\ReverseShadowContainer(G^\text{rev}, V^\infty \cup Z_1, k)$ from Line \ref{line:Z2} will return a family $\mathcal{Z}_2$ of sets containing a set $Z_2$ that is disjoint from $M^*$ and contains $r_{G^\text{rev}}(M^*)=f_G(M^*)$.
  Hence $Z_1 \cup Z_2$ is disjoint from $M^*$ and contains $s_G(M^*)$.
\end{proof}

\section{\diroddpathnodeblocker\ in DAGs}
\label{sec:DAG-approx-hardness}
In this section, we prove Theorem \ref{thm:diroddpathnodeblocker-approx} by showing nearly-matching hardness of approximation (Theorem \ref{thm:hardness}) and approximability results (Theorem \ref{thm:approximation}).
We also exhibit instances of DAGs for which $\mathcal{P}^{\text{odd-cover-dir}}$ is not half-integral (Theorem \ref{thm:odd-path-blocker-polyhedron-not-half-integral}). 
\subsection{Hardness of Approximation}
\label{sec:hardness}
The main result of this section is the following:
\begin{theorem}
\label{thm:hardness}
  \diroddpathnodeblocker\ in DAGs is \NP-complete, 
  and has no efficient $(2-\varepsilon)$-approximation for any $\varepsilon>0$ assuming the Unique Games Conjecture.
\end{theorem}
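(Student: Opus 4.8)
The plan is to give an approximation-preserving reduction from \vc, mirroring the undirected reduction sketched in Section~\ref{sec:related-work} but carefully orienting the edges so that the resulting digraph is acyclic. Given a \vc instance $G$, I would construct a digraph $H$ as follows: keep a node $v$ for each $v\in V(G)$; fix an arbitrary total order on $V(G)$ and orient every edge $\{u,v\}\in E(G)$ from the smaller to the larger endpoint (an acyclic orientation); add two new nodes $s,t$ with arcs $s\rightarrow v$ and $v\rightarrow t$ for every $v\in V(G)$. Set $T=V^\infty=\{s,t\}$. Since $s$ is a source, $t$ is a sink, and the internal arcs form an acyclic orientation of $G$, the digraph $H$ has no directed cycle, so it is a DAG; the construction is clearly polynomial.

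First I would establish the key equivalence: the feasible solutions of \diroddpathnodeblocker\ on $(H,s,t)$ are exactly the vertex covers of $G$, and the correspondence preserves cardinality. Every odd $s\rightarrow t$ path in $H$ has the form $s\rightarrow v_1\rightarrow \cdots \rightarrow v_\ell\rightarrow t$ with $\ell\ge 2$ (the length-$2$ paths $s\rightarrow v\rightarrow t$ are even), and its first internal arc $v_1\rightarrow v_2$ is an oriented edge of $G$; hence any vertex cover $S$ contains $v_1$ or $v_2$ and thus blocks the path. Conversely, for each edge $\{u,v\}\in E(G)$ oriented $u\rightarrow v$, the walk $s,u,v,t$ is an odd $s\rightarrow t$ path, so any feasible blocker must contain $u$ or $v$ and is therefore a vertex cover. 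As $s,t$ are protected and every blocker/cover is a subset of $V(G)$, this is a bijection that preserves size; in particular the optimum of the blocker equals the minimum vertex cover number of $G$.

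Next I would derive \NP-completeness: membership in \NP\ follows because feasibility of a candidate blocker can be verified in polynomial time in a DAG using the odd-path decision procedure (Lemma~\ref{lem:odd-path-decision}), applied to $H\setminus S$, while hardness follows from the reduction since \vc is \NP-hard. For inapproximability, the size-preserving bijection shows that any efficient $\alpha$-approximation for \diroddpathnodeblocker\ in DAGs yields an efficient $\alpha$-approximation for \vc; by the Khot--Regev hardness of \vc under the Unique Games Conjecture~\cite{KhotRegev2008}, no efficient $(2-\varepsilon)$-approximation can exist for any $\varepsilon>0$ unless the conjecture fails.

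The main technical point --- and the only place where the directed acyclic setting adds anything over the undirected sketch --- is verifying both directions of the equivalence simultaneously with a single fixed acyclic orientation of $G$: I must check that orienting each edge only once still forces a covering constraint for that edge (this uses that $s,u,v,t$ is an odd path for the chosen orientation $u\rightarrow v$), and that no ``spurious'' odd $s\rightarrow t$ path escapes being blocked by a vertex cover (this uses that the first internal arc of any such path is always a $G$-edge with both endpoints internal). Both checks are short once the construction is fixed, so I expect the acyclicity argument together with the forward direction of the equivalence to be the parts warranting the most care, while the remaining steps are routine.
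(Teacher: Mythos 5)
Your proposal is correct and follows essentially the same route as the paper: the identical reduction from \vc\ (acyclically orient $G$ by a total order, attach a source $s$ and sink $t$ adjacent to all original nodes, making length-$2$ $s\rightarrow t$ paths even and edge-induced length-$3$ paths odd), with membership in \NP\ via Lemma~\ref{lem:odd-path-decision} and inapproximability via the Khot--Regev UGC hardness of \vc. Your write-up merely spells out the cover-equals-blocker equivalence that the paper states without detail.
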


As a first step, we show that \diroddpathnodeblocker\ is in $\NP$.
While this is a folklore result, we present the proof for the sake of completeness.
\begin{lemma}
\label{lem:odd-path-decision}
  There exists a polynomial-time algorithm that, given a DAG $D$ and nodes $s$ and $t$ in $D$, decides whether there exists an odd-length $s \rightarrow t$ path in $D$.
\end{lemma}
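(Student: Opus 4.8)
The plan is to reduce the parity-constrained reachability question to an ordinary reachability question in an auxiliary digraph that records the parity of the path length, and then to exploit acyclicity to argue that the walks produced by this reduction are in fact simple paths.

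Concretely, I would construct a \emph{parity-product} digraph $D'$ on the node set $V(D)\times\{0,1\}$, placing an arc from $(u,b)$ to $(v,1-b)$ for every arc $u\rightarrow v$ of $D$ and every $b\in\{0,1\}$. The bit in the second coordinate tracks the parity of the number of arcs traversed so far, so that any walk in $D'$ starting at $(s,0)$ reaches a copy $(w,b)$ exactly when the corresponding walk in $D$ from $s$ to $w$ has length of parity $b$. I would then claim: there is an odd-length $s\rightarrow t$ path in $D$ if and only if $(t,1)$ is reachable from $(s,0)$ in $D'$. Deciding the latter is a single graph-reachability query, solvable by breadth-first search in $O(|V(D)|+|E(D)|)$ time on the $2|V(D)|$-node graph $D'$.

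The forward direction is immediate: an odd $s\rightarrow t$ path lifts arc-by-arc to a walk from $(s,0)$ to $(t,1)$. The reverse direction is where the acyclicity hypothesis is essential, and this is the step I expect to be the crux. A path from $(s,0)$ to $(t,1)$ in $D'$ projects, by dropping the parity bit, to a walk $s=w_0,w_1,\dots,w_\ell=t$ in $D$ whose length $\ell$ is odd (the parity bit flips on each arc and ends at $1$). In a general digraph this projected object is only a \emph{walk} and may revisit nodes, so it need not yield an odd-length simple path---indeed, this is precisely why the analogous decision problem is $\NP$-complete in arbitrary digraphs. However, because $D$ is acyclic, a repeated node among $w_0,\dots,w_\ell$ would exhibit a directed closed walk and hence a directed cycle, which is impossible; therefore the projected walk is automatically a simple path, yielding the desired odd $s\rightarrow t$ path.

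An equivalent formulation that avoids constructing $D'$ explicitly is a dynamic program over a topological ordering of $D$: for each node $v$ maintain two Boolean flags recording whether $s$ reaches $v$ by an even- or odd-length path, initialize the even flag at $s$, and propagate along each arc $u\rightarrow v$ by flipping parity. Acyclicity guarantees both that processing nodes in topological order computes these flags correctly and, as above, that the recorded even/odd reachability witnesses are genuinely simple paths. Either way, the running time is linear in the size of $D$, which establishes the lemma.
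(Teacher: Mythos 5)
Your proposal is correct and is essentially the paper's own proof: the parity-product digraph $D'$ on $V(D)\times\{0,1\}$ is exactly the paper's bipartite graph with copies $v_L,v_R$ (the subscript playing the role of the parity bit), and both arguments hinge on the same key observation that acyclicity forces the projected odd walk to be a simple path. The topological-order dynamic program you mention is just an implementation of the same reachability computation, so there is no substantive difference.
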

\begin{proof}
  We construct a directed bipartite graph $G$ as follows.
  For each node $v \in V(D)$, introduce nodes $v_L$ and $v_R$ in $G$.
  For each edge $uv \in E(D)$, add edges $u_Lv_R$ and $u_Rv_L$ to $G$.
  We claim that there is an odd-length $s \rightarrow t$ path in $D$ if and only if there is an $s_L \rightarrow t_R$ path in $G$. 
  Since existence of an $s_L \rightarrow t_R$ path is decidable in polynomial time, this would prove the theorem.

  We now prove the above-mentioned claim.
  Suppose $s=u^0, u^1, u^2, \ldots, u^\ell=t$ is an odd-length $s \rightarrow t$ path in $D$ with intermediate nodes $u^1,\ldots, u^{\ell-1}$.
  Then $$s_L=u_L^{0}, u_R^{1}, u_L^{2}, \ldots, u_m^{\ell}=t_m$$ is a path in $G$ and since $\ell$ is odd, we have $m=R$ and hence, the path in $G$ ends in $t_R$.
  Conversely, suppose $s_L=u_L^{0}, u_R^{1}, u_L^{2}, \ldots, u_R^{\ell}=t_R$ is a path in $G$.
  Since the path starts in one part and ends in the other, it must be of odd length.
  Therefore, $s=u^{0}, u^{1}, u^{2}, \ldots, u^{l}=t$ is an odd walk in $D$.
  Since every walk in a DAG is a path, we have an odd-length $s \rightarrow t$ path in $D$.
\end{proof}


With this result we are now ready to prove Theorem~\ref{thm:hardness}.
\begin{proof}[Proof of Theorem~\ref{thm:hardness}]
  We consider the decision version of \diroddpathnodeblocker, where the input consists of a directed acyclic graph $D$ and a non-negative integer $w$ and the goal is to decide if there exists a feasible solution for \diroddpathnodeblocker\ with at most $w$ nodes.
  Let $(D, w)$ be an instance of the decision version of \diroddpathnodeblocker.
  By Lemma~\ref{lem:odd-path-decision}, given a set of nodes $U \subseteq V(D)$ of cardinality at most $w$, we can verify in polynomial time whether $D \setminus U$ has no $s \rightarrow t$ odd-path.
  Therefore \diroddpathnodeblocker\ in DAGs is in \NP.

  To prove \NP-hardness of \diroddpathnodeblocker\ in DAGs, we give a reduction from \vc.
  Recall that the input to the \vc problem is an undirected graph $G$ and $k \in \Zbb$ and the goal is to verify if there exists a vertex cover of size at most $k$.
  We construct a directed acyclic graph $H$ from $G$ as follows: pick an arbitrary ordering of the nodes and orient the edges $\{u,v\}$ of $G$ as $u\rightarrow v$ if $u<v$ in the ordering; we add two new nodes $s,t$ with directed edges $s\rightarrow u, u\rightarrow t$ for every $u\in V(G)$.
  The resulting graph $H$ is a directed acyclic graph.
  A subset $U\subseteq V(G)$ is a vertex cover in $G$ if and only if $U$ is a feasible solution to \diroddpathnodeblocker\ in $H$.
\end{proof}

\subsection{Approximation and Integrality Gap}
\label{sec:approximation}
In this section we present an approximation algorithm of factor $2$ for \diroddpathedgeblocker\ in DAGs.
This factor matches the lower bound on the hardness of approximation shown in Section~\ref{sec:hardness}.
We will use the following integer program formulation of \diroddpathedgeblocker\ and its LP-relaxation.
\begin{alignat}{2}
	& \text{min} & & \sum_{e \in E(D)} c(e)x_e \label{eq:lp-primal}\\
	& \text{subject to}& \quad & \sum_{\mathclap{{e \in P}}}
	\begin{aligned}[t]
		x_e & \geq 1& \text{for all odd-length $s \rightarrow t$ path $P$ in $D$},\\[3ex]
		x_e & \in \set{0,1} & \quad \forall\ e \in E(D).
	\end{aligned}\nonumber
\end{alignat}
where each binary variable $x_e$ indicates whether $e$ is in the solution.
This integer program can then be relaxed to a linear program by replacing the constraints $x_e \in \set{0,1}$ with $x_e\geq0$.
We denote the resulting LP as \emph{odd path blocker LP}.
\begin{theorem}
\label{thm:approximation}
  There exists an efficient $2$-approximation algorithm for \diroddpathedgeblocker\ in DAGs.
\end{theorem}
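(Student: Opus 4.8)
The plan is to give a combinatorial $2$-approximation for \diroddpathedgeblocker\ in DAGs by exploiting the bipartite-graph reduction from Lemma~\ref{lem:odd-path-decision}. Recall that lemma builds a directed bipartite graph $G$ with two copies $v_L, v_R$ of each node $v$, where odd $s\rightarrow t$ paths in $D$ correspond exactly to $s_L\rightarrow t_R$ paths in $G$, and even $s\rightarrow t$ paths correspond to $s_L\rightarrow t_L$ paths. The key observation is that an edge $e=uv$ of $D$ gives rise to \emph{two} arcs $u_Lv_R$ and $u_Rv_L$ in $G$, and blocking all odd $s\rightarrow t$ paths in $D$ is equivalent to separating $s_L$ from $t_R$ in $G$ while being allowed to delete only these paired arcs. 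So first I would set up this correspondence precisely and phrase the edge-blocker problem as an $\{s_L\}\rightarrow\{t_R\}$ cut problem in $G$ in which the two arcs arising from a single edge of $D$ must be treated as a unit.

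Next I would solve an ordinary minimum $s_L\rightarrow t_R$ arc-cut in $G$ (by max-flow/min-cut in polynomial time), obtaining a set $C$ of arcs of $G$ whose removal disconnects $s_L$ from $t_R$. The natural candidate solution for $D$ is then $F:=\{\,e\in E(D): \text{at least one of the two arcs of }G\text{ derived from }e\text{ lies in }C\,\}$. Each edge of $D$ contributes at most two arcs to the cut, so $\card{F}\le\card{C}$, and I would argue that removing $F$ from $D$ indeed destroys every odd $s\rightarrow t$ path: if some odd path survived in $D\setminus F$, its image would be a surviving $s_L\rightarrow t_R$ path in $G\setminus C$, contradicting that $C$ is a cut. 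The approximation factor of $2$ comes from comparing $\card{C}$ to the optimum: given an optimal blocker $F^\ast$ for $D$, deleting for each $e\in F^\ast$ \emph{both} associated arcs of $G$ yields a valid $s_L\rightarrow t_R$ arc-cut of size $2\card{F^\ast}$, so the min-cut value $\card{C}$ is at most $2\,\mathrm{OPT}$, whence $\card{F}\le\card{C}\le 2\,\mathrm{OPT}$.

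The same argument simultaneously bounds the integrality gap of the odd path blocker LP~\eqref{eq:lp-primal}: the LP optimum is at least the optimum of the corresponding fractional $s_L\rightarrow t_R$ cut LP on $G$ (which is integral, being a standard cut polytope), and that integral cut value is at most $2\,\mathrm{OPT}_{\mathrm{LP}}$ by the doubling construction, so rounding gives a factor-$2$ gap. I would state the approximation guarantee and the integrality-gap bound together, since they follow from the same min-cut computation.

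The main obstacle I anticipate is the bookkeeping of the two-arc-per-edge correspondence: one must be careful that a minimum arc-cut in $G$ need not respect the pairing (it may delete only one of the two arcs derived from an edge of $D$), which is exactly why only a factor-$2$ rather than an exact equivalence is obtainable, and conversely that paths in $D$ are \emph{simple} whereas their images in $G$ are paths automatically because $D$ is acyclic (here the acyclicity from Lemma~\ref{lem:odd-path-decision} is essential, since it guarantees every $s_L\rightarrow t_R$ walk is a path and vice versa). I would therefore devote the crux of the write-up to verifying both directions of the path correspondence and to checking that neither $s$ nor $t$ is deleted, so that the produced set $F$ is genuinely a feasible \diroddpathedgeblocker.
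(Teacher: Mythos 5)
Your proposal is correct and is essentially the paper's own proof: the same bipartite double-cover construction from Lemma~\ref{lem:odd-path-decision}, the same minimum $s_L \rightarrow t_R$ cut followed by projection to an edge set $F$ of $D$, and the same feasibility argument that a surviving odd $s\rightarrow t$ path would lift to a surviving $s_L \rightarrow t_R$ path. The only cosmetic difference is in the factor-$2$ analysis: the paper doubles the \emph{fractional} LP optimum $x^*$ into a feasible solution of the cut LP on $G$ (which immediately yields the integrality-gap corollary), whereas your primary argument doubles the \emph{integral} optimum $F^*$; since you also sketch the fractional version, the content coincides.
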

\begin{proof}
  Our algorithm uses a construction similar to what was described in the proof of Lemma~\ref{lem:odd-path-decision}. Let $D$ be the input instance of problem \diroddpathedgeblocker\ with edge costs $c:E\rightarrow \R_+$. 
  Construct the directed bipartite graph $G$ with $V(G):=\{v_L,v_R:v\in V(D)\}$ and $E(G) := \set{u_Lv_R, u_Rv_L \suchthat uv \in E(D)}$.
  Define the cost of the edges $u_Rv_L$ and $u_Lv_R$ to be $c(uv)$.
  Let $X \subseteq E(G)$ be a minimum $s_L \rightarrow t_R$ cut in $G$.
  Let $F:=\set{uv \suchthat u_Lv_R \in X \text{ or } u_Rv_L \in X}$ be the projection of the edges of $X$ onto the edges of $D$. Claims~\ref{lem:feasibility} and \ref{lem:factor-bound} below prove that $F$ is a $2$-approximate solution for \diroddpathedgeblocker\ in $D$.
\end{proof}

\begin{claim}
\label{lem:feasibility}
  The set $F$ is a feasible solution for \diroddpathedgeblocker\ in $D$.
\end{claim}
\begin{proof}
  For sake of contradiction, suppose not.
  Then there must exist an odd $s \rightarrow t$ path in $D \setminus F$.
  Let $s=u^{0}, u^{1}, \ldots, u^{\ell}=t$ be such a path.
  We note that the edge $u^{i}u^{i+1}$ is not in $F$ for $i=0, 1, \ldots, \ell-1$.
  Thus, $u^{i}_Lu^{i+1}_R$ and $u^{i}_Ru^{i+1}_L$ are not in $X$.
  Hence, $s_L=u_L^{0}, u_R^{1}, u_L^{2}, \ldots, u_R^{\ell}=t_R$ is a path in $G$.
  This contradicts the feasibility of $X$ as an $s_L \rightarrow t_R$ cut.
  Therefore $F$ is a feasible solution.
\end{proof}

Let $x^*$ be an optimal solution to the odd path blocker LP for $D$. Let $c(x)$ denote the objective value of a feasible solution $x$ to the odd path blocker LP for $D$.
We use the same notation to denote the cost of an $s_L \rightarrow t_R$ cut in $G$.
\begin{claim}
\label{lem:factor-bound}
  The cost of $F$ is at most twice that of $x^*$.
\end{claim}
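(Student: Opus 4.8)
The plan is to bound $c(F)$ in two stages: first show $c(F) \le c(X)$, and then show $c(X) \le 2c(x^*)$ by exhibiting a feasible fractional $s_L \to t_R$ cut of $G$ whose cost is exactly $2c(x^*)$, built directly from the LP optimum $x^*$.

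For the first stage, I would argue directly from the definition of the projection $F$. Every $uv \in F$ arises because at least one of the two arcs $u_Lv_R$ and $u_Rv_L$ lies in $X$, and each of these arcs of $G$ has cost $c(uv)$. Hence $uv$ contributes exactly $c(uv)$ to $c(F)$, while contributing at least $c(uv)$ (and possibly $2c(uv)$, when both copies belong to $X$) to $c(X)$. Summing over all $uv \in F$ gives $c(F) \le c(X)$.

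For the second stage, I would define a fractional assignment $y$ on the arcs of $G$ by setting $y_{u_Lv_R} := x^*_{uv}$ and $y_{u_Rv_L} := x^*_{uv}$ for every $uv \in E(D)$. Since $c(u_Lv_R) = c(u_Rv_L) = c(uv)$, the total cost is $c(y) = \sum_{uv \in E(D)} c(uv)(x^*_{uv}+x^*_{uv}) = 2c(x^*)$. I then claim that $y$ is a feasible fractional $s_L \to t_R$ cut, i.e.\ $\sum_{a \in Q} y_a \ge 1$ for every $s_L \to t_R$ path $Q$ in $G$. Using the correspondence established in the proof of Lemma~\ref{lem:odd-path-decision}, such a path $Q = (s_L = u^0_L, u^1_R, u^2_L, \dots, u^\ell_R = t_R)$ projects onto the walk $s = u^0, u^1, \dots, u^\ell = t$ in $D$, whose length $\ell$ is odd; because $D$ is acyclic, this walk is in fact a simple odd $s \to t$ path $P$. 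By the definition of $y$, the arc of $Q$ corresponding to the edge $u^iu^{i+1}$ of $P$ carries weight $x^*_{u^iu^{i+1}}$, so $\sum_{a \in Q} y_a = \sum_{e \in P} x^*_e \ge 1$, the last inequality being the odd-path LP constraint satisfied by the feasible $x^*$. This establishes feasibility of $y$.

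Finally, by the max-flow--min-cut theorem (equivalently, by LP duality together with the integrality of the minimum $s$--$t$ cut LP), the minimum $s_L \to t_R$ cut $X$ has cost no larger than that of any feasible fractional cut, so in particular $c(X) \le c(y) = 2c(x^*)$. Chaining the two stages yields $c(F) \le c(X) \le 2c(x^*)$, as claimed. The step I expect to require the most care is the feasibility argument for $y$: one must invoke the acyclicity of $D$ to guarantee that the projected odd \emph{walk} is a genuine odd \emph{path}, so that the LP constraint indexed by $P$ actually applies, and one must verify that the arc-to-edge correspondence between $Q$ and $P$ is weight-preserving under the definition of $y$.
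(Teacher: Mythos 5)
Your proposal is correct and follows essentially the same route as the paper: bound $c(F)\le c(X)$ by the projection, double $x^*$ into a fractional assignment $Y$ on the arcs of $G$, verify its feasibility for the cut LP via the odd-walk-is-an-odd-path argument in the DAG, and invoke min-cut $=$ LP optimum to conclude $c(X)\le 2c(x^*)$. The only cosmetic difference is that you argue feasibility of $Y$ directly while the paper argues it by contradiction; the substance, including the careful use of acyclicity, is identical.
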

\begin{proof}
  We note that $c(F) \leq c(X)$, by the construction of $F$. It would suffice to show that $c(X) \leq 2 c(x^*)$.

  Define $Y: E(G) \rightarrow \Rbb_+$ as $Y(u_Lv_R) = Y(u_Rv_L) = x^*(uv)$ and let $c(Y):=\sum_{e\in E(G)} c_e Y(e)$.
  We have that $c(Y) = 2 c(x^*)$.
  We recall that any minimum $s_L\rightarrow t_R$ cut has the same value as an optimal solution to the following path blocking integer program, as well as its linear programming relaxation:
  \begin{align*}
                \min & \sum_{e\in E(G)}w(e)y_e\\
	\sum_{e\in P}y_e &\ge 1\ \forall\ s_L\rightarrow t_R\text{ path $P$ in $G$},\\
		         y_e &\in \{0,1\}\ \forall\ e\in E(G).
  \end{align*}
		
  Hence, it suffices to prove that $Y$ is a feasible solution to the LP-relaxation of the above integer program.
  For sake of contradiction, suppose it is not.
  It means that there is an $s_L \rightarrow t_R$ path $s_L=u_L^{0}, u_R^{1}, u_L^{2}, \ldots, u_R^{\ell}=t_R$ in $G$, such that the sum of the $Y$ values over its edges is less than one. Since $s_L$ and $t_R$ are in different parts of $G$, the length $\ell$ of this path must be odd.
  Now consider the path $s=u^{0}, u^{1}, \ldots, u^{\ell}=t$ in $D$.
  The sum of $x^*$ values on its edges is also less than one and since $\ell$ is odd, this contradicts the feasibility of $x^*$ to the odd path blocker LP for~$D$. 
  Therefore $Y$ must be feasible.
\end{proof}

The proof of Theorem~\ref{thm:approximation} also yields the following corollary.
\begin{corollary}
  The integrality gap of the odd path blocker LP in DAGs is at most $2$.
\end{corollary}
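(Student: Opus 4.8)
The plan is to observe that the proof of Theorem~\ref{thm:approximation} already establishes the integrality-gap bound essentially for free, because the $2$-approximation there was argued \emph{against the LP optimum} rather than against the (unknown) integral optimum. Recall that the integrality gap of the odd path blocker LP in DAGs is the supremum, over all DAG instances $D$, of the ratio between the cost of an optimal integral solution and the optimal LP value $c(x^*)$. Hence it suffices to exhibit, for every instance $D$, a feasible \emph{integral} solution whose cost is at most $2c(x^*)$.

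I would take exactly the set $F$ produced in the proof of Theorem~\ref{thm:approximation}: construct the bipartite digraph $G$ on the vertices $\{v_L,v_R : v\in V(D)\}$, let $X$ be a minimum $s_L\rightarrow t_R$ cut in $G$, and set $F:=\{uv : u_Lv_R\in X \text{ or } u_Rv_L\in X\}$. By Claim~\ref{lem:feasibility}, $F$ is a feasible integral solution to \diroddpathedgeblocker\ in $D$, so its cost is an upper bound on the optimal integral value. By Claim~\ref{lem:factor-bound}, $c(F)\le c(X)\le 2c(x^*)$, where $x^*$ is an optimal solution to the odd path blocker LP for $D$. Chaining these inequalities shows that the optimal integral cost is at most $2c(x^*)$.

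Since this bound holds for \emph{every} DAG instance $D$, the ratio of the optimal integral cost to the optimal LP value never exceeds $2$, and therefore the integrality gap is at most $2$. I expect no genuine obstacle here: the only point worth emphasizing is that the guarantee in Theorem~\ref{thm:approximation} was proved by comparing $c(F)$ against $c(x^*)$ (via the auxiliary fractional solution $Y$ on $E(G)$ with $c(Y)=2c(x^*)$), and not against the integral optimum, which is exactly what makes the integrality-gap conclusion an immediate corollary.
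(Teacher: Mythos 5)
Your proposal is correct and matches the paper's own reasoning: the paper derives this corollary directly from the proof of Theorem~\ref{thm:approximation}, precisely because Claims~\ref{lem:feasibility} and~\ref{lem:factor-bound} bound the cost of the integral solution $F$ by $2c(x^*)$, i.e., against the LP optimum rather than the integral optimum. Nothing further is needed.
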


\subsection{Extreme Point Structure of the Odd Path Cover Polyhedron}
\label{sec:half-integrality}
In this section, we examine the extreme point structure of the polyhedron $\mathcal{P}^{\text{odd-cover-dir}}$ defined in Section \ref{sec:intro} but for the case of DAGs. 
Concretely, $\mathcal{P}^{\text{odd-cover-dir}}$ in a DAG is the 
set of feasible solutions to the odd path blocker LP defined in Section \ref{sec:approximation} and is given in the statement of Theorem \ref{thm:odd-path-blocker-polyhedron-not-half-integral}. 
We say that a polyhedron is \emph{half-integral} if each of its extreme points is a half-integral vector (i.e., each coordinate is an integer multiple of $1/2$). Half-integrality is a desirable property in polyhedra associated with covering LPs, because it yields a simple rounding scheme that achieves an approximation factor of $2$.
Schrijver and Seymour \cite{SchrijverSeymour1994} showed that $\mathcal{P}^{\text{odd-cover}}$ in undirected graphs is half-integral.
In this section, we exhibit a DAG for which $\mathcal{P}^{\text{odd-cover-dir}}$ has a non-half-integral extreme point. 

\begin{proof}[Proof of Theorem \ref{thm:odd-path-blocker-polyhedron-not-half-integral}]
  Consider the DAG in Fig.~\ref{fig:Escher-wall}.
  Subdivide every edge, except for the five thick edges in the top row, into two.
  All edges have unit cost.
  Let us denote the resulting DAG as $D=(V,E)$. 

  We first observe that in $D$, every odd-length path from $s$ to $t$ must use an odd number of the thick edges.
  Let $m$ be the number of edges in this network.
  We introduce a solution $x$ for this instance that is not half-integral.
  Set
  \begin{align*}
    &x(A)=1/4, x(B)=1/2, x(C)=1/4, x(D)=1/2,
    \\&
    x(E)=1/4, x(F)=1/4, x(G)=3/4, x(H)=1/4,
  \end{align*}
  and let $x(e)$ be zero for every other edge $e$.
  It can be verified that $x\in \mathcal{P}^{\text{odd-cover-dir}}$ for this instance.
  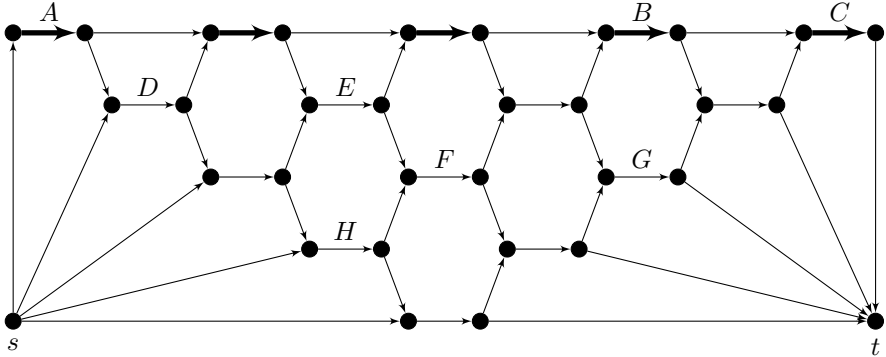
\begin{figure}[ht]
    \centering
    \setlength{\horiznodedistance}{0.2cm}
\setlength{\vertnodedistance}{0.8cm}
\setlength{\nodedistance}{0.73cm}
\setlength{\longnodedistance}{5.03cm}
\begin{tikzpicture}[arrows={-latex'}, node distance=\nodedistance]
    \node[vertex](v30) {};
    \node[vertex, right=of v30](v31) {};
    \node[vertex, above left=\vertnodedistance and \horiznodedistance of v30](v26) {};
    \node[vertex, left=of v26](v25) {};
    \node[vertex, above right=\vertnodedistance and \horiznodedistance of v31](v27) {};
    \node[vertex, right=of v27](v28) {};
    \node[vertex, above left=\vertnodedistance and \horiznodedistance of v25](v20) {};
    \node[vertex, left=of v20](v19) {};
    \node[vertex, above right=\vertnodedistance and \horiznodedistance of v26](v21) {};
    \node[vertex, above left=\vertnodedistance and \horiznodedistance of v27](v22) {};
    \node[vertex, above right=\vertnodedistance and \horiznodedistance of v28](v23) {};
    \node[vertex, right=of v23](v24) {};
    \node[vertex, above left=\vertnodedistance and \horiznodedistance of v19](v12) {};
    \node[vertex, left=of v12](v11) {};
    \node[vertex, above right=\vertnodedistance and \horiznodedistance of v20](v13) {};
    \node[vertex, above left=\vertnodedistance and \horiznodedistance of v21](v14) {};
    \node[vertex, above right=\vertnodedistance and \horiznodedistance of v22](v15) {};
    \node[vertex, above left=\vertnodedistance and \horiznodedistance of v23](v16) {};
    \node[vertex, above right=\vertnodedistance and \horiznodedistance of v24](v17) {};
    \node[vertex, right=of v17](v18) {};
    \node[vertex, above left=\vertnodedistance and \horiznodedistance of v11](v2) {};
    \node[vertex, left=of v2](v1) {};
    \node[vertex, above right=\vertnodedistance and \horiznodedistance of v12](v3) {};
    \node[vertex, above left=\vertnodedistance and \horiznodedistance of v13](v4) {};
    \node[vertex, above right=\vertnodedistance and \horiznodedistance of v14](v5) {};
    \node[vertex, above left=\vertnodedistance and \horiznodedistance of v15](v6) {};
    \node[vertex, above right=\vertnodedistance and \horiznodedistance of v16](v7) {};
    \node[vertex, above left=\vertnodedistance and \horiznodedistance of v17](v8) {};
    \node[vertex, above right=\vertnodedistance and \horiznodedistance of v18](v9) {};
    \node[vertex, right=of v9](v10) {};
    \node[vertex, left=\longnodedistance of v30, label=below:{$s$}](v29) {}; 
    \node[vertex, right=\longnodedistance of v31, label=below:{$t$}](v32) {};
    
    \draw[thickedge] (v1) -- (v2) node [midway, above, fill=none] {$A$};
    \draw[thinedge] (v2) -- (v3);
    \draw[thickedge] (v3) -- (v4);
    \draw[thinedge] (v4) -- (v5);
    \draw[thickedge] (v5) -- (v6);
    \draw[thinedge] (v6) -- (v7);
    \draw[thickedge] (v7) -- (v8) node [midway, above, fill=none] {$B$};
    \draw[thinedge] (v8) -- (v9);
    \draw[thickedge] (v9) -- (v10) node [midway, above, fill=none] {$C$};
    \draw[thinedge] (v11) -- (v12) node [midway, above, fill=none] {$D$};
    \draw[thinedge] (v13) -- (v14) node [midway, above, fill=none] {$E$};
    \draw[thinedge] (v15) -- (v16);
    \draw[thinedge] (v17) -- (v18);
    \draw[thinedge] (v19) -- (v20);
    \draw[thinedge] (v21) -- (v22) node [midway, above, fill=none] {$F$};
    \draw[thinedge] (v23) -- (v24) node [midway, above, fill=none] {$G$};
    \draw[thinedge] (v25) -- (v26) node [midway, above, fill=none] {$H$};
    \draw[thinedge] (v27) -- (v28);
    \draw[thinedge] (v30) -- (v31);
    \draw[thinedge] (v2) -- (v11);
    \draw[thinedge] (v12) -- (v19);
    \draw[thinedge] (v20) -- (v25);
    \draw[thinedge] (v26) -- (v30);
    \draw[thinedge] (v4) -- (v13);
    \draw[thinedge] (v14) -- (v21);
    \draw[thinedge] (v22) -- (v27);
    \draw[thinedge] (v6) -- (v15);
    \draw[thinedge] (v16) -- (v23);
    \draw[thinedge] (v8) -- (v17);
    \draw[thinedge] (v12) -- (v3);
    \draw[thinedge] (v20) -- (v13);
    \draw[thinedge] (v14) -- (v5);
    \draw[thinedge] (v26) -- (v21);
    \draw[thinedge] (v22) -- (v15);
    \draw[thinedge] (v16) -- (v7);
    \draw[thinedge] (v31) -- (v27);
    \draw[thinedge] (v28) -- (v23);
    \draw[thinedge] (v24) -- (v17);
    \draw[thinedge] (v18) -- (v9);
    \draw[thinedge] (v29) -- (v1);
    \draw[thinedge] (v29) -- (v11);
    \draw[thinedge] (v29) -- (v19);
    \draw[thinedge] (v29) -- (v25);
    \draw[thinedge] (v29) -- (v30);
    \draw[thinedge] (v31) -- (v32);
    \draw[thinedge] (v28) -- (v32);
    \draw[thinedge] (v24) -- (v32);
    \draw[thinedge] (v18) -- (v32);
    \draw[thinedge] (v10) -- (v32);
\end{tikzpicture}
    \caption{An instance of \diroddpathedgeblocker. All edges are subdivided into two, except for the five thick edges in the top row.}
    \label{fig:Escher-wall}
  \end{figure}

  Next, we show that this solution $x$ is an extreme point of $\mathcal{P}^{\text{odd-cover-dir}}$, i.e., $x$ is an optimal solution to the odd path blocker LP.
  For this, we find a solution to the dual linear program with the same objective value.
  Let $\mathcal{Q}_{s \rightarrow t}$ be the collection of edge-sets corresponding to odd-length paths from $s$ to $t$.
  As the dual of odd path blocker LP, we obtain:
  \begin{alignat}{2}
	& \text{max} & & \sum_{P \in \mathcal{Q}_{s\rightarrow t}} f_p \label{eq:lp-dual}\\
	& \text{subject to}& \quad & \sum_{\mathclap{{P \in \mathcal{Q}_{s\rightarrow t}:e \in P}}}
	\begin{aligned}[t]
		f_P & \leq c(e),&& \text{for each edge $e$}\\[3ex]
		f_P & \geq 0, && \text{for all}~P\in \mathcal{Q}_{s \rightarrow t}
	\end{aligned}\nonumber
  \end{alignat}
  Let us call the dual LP as odd flow packing LP.
  The dual formulation describes the problem of sending the maximum flow along odd paths in the network, such that the amount of flow going through each edge does not exceed its capacity. 

  Consider the following paths in Fig.~\ref{fig:Escher-flow}:
  \begin{align*}
    P_1= &(29, 1, 2, 11, 12, 19, 20, 25, 26, 30, 31, 32)\\
    P_2= &(29, 11, 12, 3, 4, 13, 14, 21, 22, 27, 28, 32)\\
    P_3= &(29, 19, 20, 13, 14, 5, 6, 15, 16, 23, 24, 32)\\
    P_4= &(29, 25, 26, 21, 22, 15, 16, 7, 8, 17, 18, 32)\\
    P_5= &(29, 30, 31, 27, 28, 23, 24, 17, 18, 9, 10, 32)\\
    P_6= &(29, 1, 2, 3, 4, 5, 6, 7, 8, 9, 10, 32)
  \end{align*}
  The dual solution that we introduce sends a flow of value $1/2$ along each of $P_1, P_2, \ldots, P_6$.
  The total flow, therefore would be $3$.
  By strong duality condition, the feasible solutions of an LP and its dual match only at optimal solutions.
  Therefore the primal solution $x$ is optimal.
  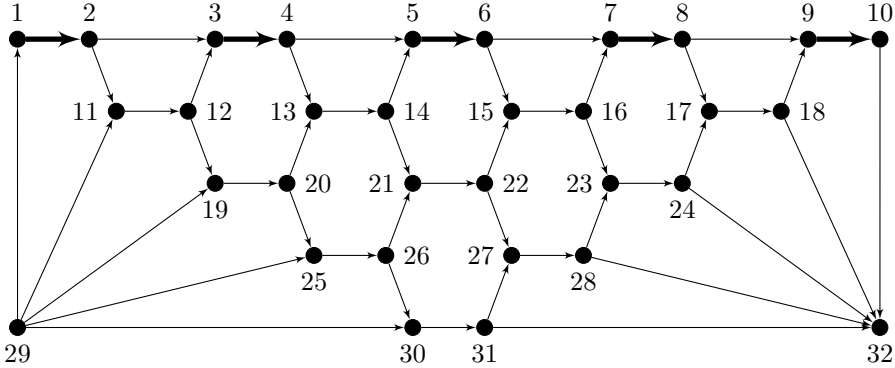
\begin{figure}[ht]
    \centering
    \setlength{\horiznodedistance}{0.2cm}
\setlength{\vertnodedistance}{0.8cm}
\setlength{\nodedistance}{0.73cm}
\setlength{\longnodedistance}{5.03cm}
\begin{tikzpicture}[arrows={-latex'}, node distance=\nodedistance]
    \node[vertex, label=below:{$30$}](v30) {};
    \node[vertex, right=of v30, label=below:{$31$}](v31) {};
    \node[vertex, above left=\vertnodedistance and \horiznodedistance of v30, label=right:{$26$}](v26) {};
    \node[vertex, left=of v26, label=below:{$25$}](v25) {};
    \node[vertex, above right=\vertnodedistance and \horiznodedistance of v31, label=left:{$27$}](v27) {};
    \node[vertex, right=of v27, label=below:{$28$}](v28) {};
    \node[vertex, above left=\vertnodedistance and \horiznodedistance of v25, label=right:{$20$}](v20) {};
    \node[vertex, left=of v20, label=below:{$19$}](v19) {};
    \node[vertex, above right=\vertnodedistance and \horiznodedistance of v26, label=left:{$21$}](v21) {};
    \node[vertex, above left=\vertnodedistance and \horiznodedistance of v27, label=right:{$22$}](v22) {};
    \node[vertex, above right=\vertnodedistance and \horiznodedistance of v28, label=left:{$23$}](v23) {};
    \node[vertex, right=of v23, label=below:{$24$}](v24) {};
    \node[vertex, above left=\vertnodedistance and \horiznodedistance of v19, label=right:{$12$}](v12) {};
    \node[vertex, left=of v12, label=left:{$11$}](v11) {};
    \node[vertex, above right=\vertnodedistance and \horiznodedistance of v20, label=left:{$13$}](v13) {};
    \node[vertex, above left=\vertnodedistance and \horiznodedistance of v21, label=right:{$14$}](v14) {};
    \node[vertex, above right=\vertnodedistance and \horiznodedistance of v22, label=left:{$15$}](v15) {};
    \node[vertex, above left=\vertnodedistance and \horiznodedistance of v23, label=right:{$16$}](v16) {};
    \node[vertex, above right=\vertnodedistance and \horiznodedistance of v24, label=left:{$17$}](v17) {};
    \node[vertex, right=of v17, label=right:{$18$}](v18) {};
    \node[vertex, above left=\vertnodedistance and \horiznodedistance of v11, label=above:{$2$}](v2) {};
    \node[vertex, left=of v2, label=above:{$1$}](v1) {};
    \node[vertex, above right=\vertnodedistance and \horiznodedistance of v12, label=above:{$3$}](v3) {};
    \node[vertex, above left=\vertnodedistance and \horiznodedistance of v13, label=above:{$4$}](v4) {};
    \node[vertex, above right=\vertnodedistance and \horiznodedistance of v14, label=above:{$5$}](v5) {};
    \node[vertex, above left=\vertnodedistance and \horiznodedistance of v15, label=above:{$6$}](v6) {};
    \node[vertex, above right=\vertnodedistance and \horiznodedistance of v16, label=above:{$7$}](v7) {};
    \node[vertex, above left=\vertnodedistance and \horiznodedistance of v17, label=above:{$8$}](v8) {};
    \node[vertex, above right=\vertnodedistance and \horiznodedistance of v18, label=above:{$9$}](v9) {};
    \node[vertex, right=of v9, label=above:{$10$}](v10) {};
    \node[vertex, left=\longnodedistance of v30, label=below:{$29$}](v29) {}; 
    \node[vertex, right=\longnodedistance of v31, label=below:{$32$}](v32) {};
    
    \draw[thickedge] (v1) -- (v2);
    \draw[thinedge] (v2) -- (v3);
    \draw[thickedge] (v3) -- (v4);
    \draw[thinedge] (v4) -- (v5);
    \draw[thickedge] (v5) -- (v6);
    \draw[thinedge] (v6) -- (v7);
    \draw[thickedge] (v7) -- (v8);
    \draw[thinedge] (v8) -- (v9);
    \draw[thickedge] (v9) -- (v10);
    \draw[thinedge] (v11) -- (v12);
    \draw[thinedge] (v13) -- (v14);
    \draw[thinedge] (v15) -- (v16);
    \draw[thinedge] (v17) -- (v18);
    \draw[thinedge] (v19) -- (v20);
    \draw[thinedge] (v21) -- (v22);
    \draw[thinedge] (v23) -- (v24);
    \draw[thinedge] (v25) -- (v26);
    \draw[thinedge] (v27) -- (v28);
    \draw[thinedge] (v30) -- (v31);
    \draw[thinedge] (v2) -- (v11);
    \draw[thinedge] (v12) -- (v19);
    \draw[thinedge] (v20) -- (v25);
    \draw[thinedge] (v26) -- (v30);
    \draw[thinedge] (v4) -- (v13);
    \draw[thinedge] (v14) -- (v21);
    \draw[thinedge] (v22) -- (v27);
    \draw[thinedge] (v6) -- (v15);
    \draw[thinedge] (v16) -- (v23);
    \draw[thinedge] (v8) -- (v17);
    \draw[thinedge] (v12) -- (v3);
    \draw[thinedge] (v20) -- (v13);
    \draw[thinedge] (v14) -- (v5);
    \draw[thinedge] (v26) -- (v21);
    \draw[thinedge] (v22) -- (v15);
    \draw[thinedge] (v16) -- (v7);
    \draw[thinedge] (v31) -- (v27);
    \draw[thinedge] (v28) -- (v23);
    \draw[thinedge] (v24) -- (v17);
    \draw[thinedge] (v18) -- (v9);
    \draw[thinedge] (v29) -- (v1);
    \draw[thinedge] (v29) -- (v11);
    \draw[thinedge] (v29) -- (v19);
    \draw[thinedge] (v29) -- (v25);
    \draw[thinedge] (v29) -- (v30);
    \draw[thinedge] (v31) -- (v32);
    \draw[thinedge] (v28) -- (v32);
    \draw[thinedge] (v24) -- (v32);
    \draw[thinedge] (v18) -- (v32);
    \draw[thinedge] (v10) -- (v32);
\end{tikzpicture}
    \caption{In this dual solution, flows of value $1/2$ are being sent through $p_1, p_2, \ldots, p_6$.}
    \label{fig:Escher-flow}
  \end{figure}

  Finally, we prove that the primal solution $x$ is an extreme point of the polyhedron $\mathcal{P}^{\text{odd-cover-dir}}$ for this instance.
  For this, we present $m$ linearly independent constraints of the odd path blocker LP that are satisfied as equations by $x$.
  For each of the $m-8$ edges that are not in the support of~$x$, the non-negativity constraint is tight. 
  Consider $P_1, P_2, \ldots, P_6$ above, along with the following two odd-length $s\rightarrow t$ paths in $D$:
  \begin{align*}
    P_7= &(29, 1, 2, 3, 4, 5, 6, 15, 16, 23, 24, 32)\\
    P_8= &(29, 11, 12, 3, 4, 5, 6, 7, 8, 17, 18, 32)
  \end{align*}
  We observe that the constraints corresponding to each of these paths are tight with respect to $x$.
  It remains to prove that these $m$ constraints are linearly independent.
  Since each of the non-negativity constraints have exactly one non-zero entry and no two of them have the same non-zero entry, they are linearly independent.
  To prove that all the constraints are linearly independent, it remains to show that the path constraints are linearly independent when restricted to the edges not in the support of $x$.
  This can be verified by computing the determinant of the corresponding constraint matrix. 
\end{proof}

\section{\oddpathedgeblocker\ in undirected graphs}
\label{sec:undir-approximability}
In this section, we focus on the approximability of \oddpathedgeblocker.
We will show inapproximability results and an integrality gap instance which suggests that new techniques might be needed to improve on the approximation factor. 




\subsection{Hardness of Approximation}
\label{sec:undir-np-hardness}
In this section we prove Theorem \ref{thm:odd-path-edge-blocker-hardness}, i.e., \NP-hardness of \undirpathblocker.
\undirOddPathBlockerHardness*
\begin{proof}
  Edmonds gave a polynomial time algorithm to decide whether a given undirected graph with nodes $s$ and $t$ has an odd-length $s-t$ path; cf. LaPaugh and Papadimitriou~\cite{LaPaughPapadimitriou1984}. 
  Therefore, given a candidate solution, one can verify its feasibility in polynomial time.
  Thus, \undirpathblocker is in \NP.
  We will show that the decision version of \undirpathblocker is \NP-complete by a polynomial-time reduction from \mwc.
  We recall that the input to \mwc is an undirected graph $G$, a collection $T$ of nodes in $G$ known as terminals and $k\in \Z_+$, and the goal is to verify if there exists a subset of at most $k$ edges of $G$ whose deletion ensures that no pair of terminals can reach each other. 

  Suppose $(G, T,k)$ is an instance of \mwc where $G$ is an undirected graph with $n$ nodes and $m$ edges, and $T \subseteq V(G)$ is the set of terminals to be separated.
  We obtain a graph $H$ from $G$ as follows: Let $V(H) := V(G) \cup \set{s,t} \cup \set{x_v, x'_v \suchthat v \in T}$, where $s,t, x_v$ and $x'_v$ are newly introduced nodes and let $E(H) := E(G) \cup \set{\set{x_v,v}, \set{x'_v, v}, \set{x_v, x'_v}, \set{s,x_v}, \set{t,x_v} \suchthat v \in T}$; replace every edge in $E(H)\setminus E(G)$ by $m+1$ parallel edges (see Fig.~\ref{fig:undir-hardness-gadget}). 
  For $k\le m$, we claim that the \mwc instance $(G, T,k)$ has a solution of size at most $k$ if and only if the \undirpathblocker instance $H$ has a solution of size at most $k$.
  \begin{figure}[ht]
    \centering
      \begin{subfigure}[t]{0.49 \textwidth}
        \centering
        \setlength{\nodedistance}{1.2cm}
\setlength{\longnodedistance}{3.15cm}
\begin{tikzpicture}[node distance=\nodedistance]
	\node[vertex, label=left:$u$](u) {};
	\node[vertex, above right=of u](z1) {};
	\node[vertex, below right=of u](z3) {};
	\node[vertex, right=of z1](z2) {};
	\node[vertex, right=of z3](z4) {};
	\node[vertex, label=right:$v$, below right=of z2](v) {};

	\draw[very thick] (u) -- (z1);
	\draw[very thick] (z1) -- (z2);
	\draw[very thick] (z2) -- (v);
	\draw[very thick] (v) -- (z4);
	\draw[very thick] (z4) -- (z3);
	\draw[very thick] (z3) -- (u);
	\draw[very thick] (z4) -- (z1);
\end{tikzpicture}
 
        \caption{An instance of \mwc with terminal set $\set{u,v}$.}
        \label{fig:example undir graph for reduction}
      \end{subfigure}
      \begin{subfigure}[t]{0.49 \textwidth}
        \centering
        \setlength{\nodedistance}{1.2cm}
\setlength{\longnodedistance}{3.15cm}
\begin{tikzpicture}[node distance=\nodedistance]
	\node[vertex, label=left:$u$](u) {};
	\node[vertex, above right=of u](z1) {};
	\node[vertex, below right=of u](z3) {};
	\node[vertex, right=of z1](z2) {};
	\node[vertex, right=of z3](z4) {};
	\node[vertex, label=right:$v$, below right=of z2](v) {};
	
	\node[vertex, label=above:$x_u$, above left=of z1](xu) {};
	\node[vertex, label=left:$x_u'$, below left=of xu](xu') {};
	\node[vertex, label=above:$x_v$, above right=of z2](xv) {};
	\node[vertex, label=right:$x_v'$, below right=of xv](xv') {};

	\node[vertex, label=above:$s$, above right=of xu](s) {};
	\node[vertex, label=above:$t$, above left=of xv](t) {};

	\draw[very thick] (u) -- (z1);
	\draw[very thick] (z1) -- (z2);
	\draw[very thick] (z2) -- (v);
	\draw[very thick] (v) -- (z4);
	\draw[very thick] (z4) -- (z3);
	\draw[very thick] (z3) -- (u);
	\draw[very thick] (z4) -- (z1);
	
	\draw (u) -- (xu);
	\draw (u) -- (xu');
	\draw (xu) -- (xu');
	\draw (s) -- (xu);
	\draw (t) -- (xu);
	\draw (v) -- (xv);
	\draw (v) -- (xv');
	\draw (xv) -- (xv');
	\draw (s) -- (xv);
	\draw (t) -- (xv);
\end{tikzpicture}
 
        \caption{The reduced instance of \undirpathblocker}
        \label{fig:reduced instance}
      \end{subfigure}
    \caption{An illustration of the reduction from \mwc to \undirpathblocker.}
    \label{fig:undir-hardness-gadget}
  \end{figure}
  Suppose $F \subseteq E(G)$ is a solution to \mwc in $(G,T)$.
  If $F$ is not a solution to \undirpathblocker in $H$, then there is an odd path from $s$ to $t$ in $H \setminus F$.
  By construction, this path is of the form $sP_1vP_2uP_3t$, where $v,u \in T$.
  Hence, there is a $v-u$ path $P_2$ in $G \setminus F$, contradicting the feasibility of $F$ as a solution to \mwc in $(G,T)$.

  Conversely, suppose $F \subseteq E(H)$ is a solution to \undirpathblocker in $H$ of size at most $k$.
  Since $k\le m$, and by construction, we may assume that $F\subseteq E(G)$. 
  Suppose $F$ is not a solution to \mwc in $(G,T)$.
  Then there is a $v-u$ path $P$ in $G \setminus F$ for some distinct $u,v \in T$.
  If $P$ is even, then let $Q$ be the path $sx_vvPux_{u}'x_ut$ in $G'$ and if $P$ is odd, then let $Q$ be the path $sx_vvPux_ut$ in $G'$.
  In both cases, the path $Q$ is an odd-length $s-t$ in $H-F$, contradicting the feasibility of $F$ as a solution to \undirpathblocker in $(G, E^\infty, s, t)$.

  We note that the above reduction is an approximation factor preserving reduction.
  It is known that \mwc is \NP-hard and does not admit a polynomial-time approximation scheme, unless $\mathsf{P} = \NP$ \cite{DahlhausEtAl1994}.
  Moreover, there is no efficient $(6/5-\varepsilon)$-approximation for \mwc assuming the Unique Games Conjecture \cite{ManokaranEtAl2008,AngelidakisEtAl2017}.
  Hence, the results follow.
\end{proof}


\subsection{Integrality Gap}
\label{sec:undir-gap}
By the half-integrality of the extreme points of $P^{\text{odd-cover}}$ \cite{SchrijverSeymour1994}, we have a $2$-approximation algorithm for \undirpathblocker by solving the LP-relaxation of the odd path blocker LP.
The following proposition shows that the integrality gap of the odd path blocker LP is indeed $2$ and hence we cannot hope to improve on the $2$-approximation using the odd path blocker LP.
\begin{lemma}
\label{lem:undir-odd-path-blocker-lp-integrality-gap}
  The integrality gap of the following odd path blocker LP for \undirpathblocker is at least $2$:
  \begin{align*}
                \min &\sum_{e\in E} c(e)x_e &\\
    &\sum_{e\in P}x_e \ge 1, &\textnormal{for all odd-length $s-t$ paths $P$ in $G$}\\
     &            x_e \ge 0, &e\in E(G).
  \end{align*}
\end{lemma}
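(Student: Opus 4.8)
The plan is to \emph{import} the integrality gap of the standard (distance / path-covering) LP relaxation of \mwc\ through the reduction constructed in the proof of Theorem~\ref{thm:odd-path-edge-blocker-hardness}. Recall that this reduction takes a \mwc\ instance $(G,T)$ with $m$ edges and builds a graph $H$ by attaching, to each terminal $v\in T$, a gadget consisting of $x_v,x'_v$ together with the parallelized edges $\{x_v,v\},\{x'_v,v\},\{x_v,x'_v\},\{s,x_v\},\{t,x_v\}$, and that for solution sizes at most $m$ a set $F\subseteq E(G)$ blocks all odd $s$--$t$ paths in $H$ if and only if $F$ is a multiway cut of $(G,T)$. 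I would instantiate $G$ as the star $K_{1,k}$ whose $k$ leaves are the terminals $T$ and whose centre $c$ is a Steiner (non-terminal) vertex, with unit edge costs; here $m=k$.

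First I would compute the two optima. On the star, any multiway cut must disconnect all $\binom{k}{2}$ terminal pairs; leaving two leaves attached to $c$ keeps them connected, so every feasible cut removes at least $k-1$ edges, and removing all but one leaf-edge attains this. Hence the minimum multiway cut has size exactly $k-1$, and since $k-1\le m$, the reduction transfers this to show that the integral optimum of \undirpathblocker on $H$ equals exactly $k-1$ (a strictly smaller integral blocker would, again via the reduction, yield a smaller multiway cut). For the fractional side I would exhibit the solution that assigns $x_e=1/2$ to each of the $k$ star edges and $0$ to every gadget edge, giving LP value $k/2$; combining, the integrality gap is at least $(k-1)/(k/2)=2-2/k$, which tends to $2$ as $k\to\infty$.

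The crux is verifying feasibility of this fractional solution, i.e.\ that \emph{every} odd $s$--$t$ path in $H$ traverses at least two star edges (so that its total $x$-weight is at least $1$). The key structural observation is that any $s$--$t$ path which stays inside the gadgets is even: from $s$ one can only reach $t$ as $s,x_u,t$ of length two, and exploiting the odd triangle $x_u,x'_u,u$ forces the path to leave through the terminal $u$ into $G$. Thus every odd $s$--$t$ path enters $G$ at some leaf and leaves at some leaf; since the centre $c$ is not a terminal and cannot connect back to a gadget, the portion inside the star is a leaf-to-leaf path, which in $K_{1,k}$ passes through $c$ and therefore uses exactly the two centre-incident edges, each of weight $1/2$. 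This establishes feasibility and hence $\mathrm{LP}\le k/2$. The delicate points to get right are precisely this case analysis of odd $s$--$t$ paths in $H$ (in particular ruling out cheap odd paths that avoid $G$ altogether) and the bookkeeping that lets the reduction carry the integral optimum across when the solution size stays below $m$; once these are in place, the bound $2-2/k$, and therefore the claimed lower bound of $2$ on the integrality gap, is immediate.
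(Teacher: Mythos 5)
Your proposal is correct and follows essentially the same route as the paper: the paper also instantiates the \mwc\ instance as a star whose leaves are the terminals, applies the reduction from Theorem~\ref{thm:odd-path-edge-blocker-hardness}, assigns $1/2$ to each star edge fractionally, and observes that any integral solution must delete all but one star edge, yielding the gap $2-O(1/k)\to 2$. Your write-up merely supplies details the paper leaves implicit (the case analysis showing every odd $s$--$t$ path crosses two star edges, and the transfer of the integral optimum through the reduction when the budget is below $m$), so there is no substantive difference.
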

\begin{proof}
  For every $k \in \Z_+$, we construct a graph $G$ for which the integrality gap of the odd path blocker LP is at least $2(1-1/k)$.
  Let $S_k$ be the star graph on $k$ nodes (i.e., $V(S_k):=\{u_1,u_2,\ldots,u_k\}$ and $E(S_k):=\{u_iu_k:i\in\{1,\ldots,k-1\}\}$).
  Let $T$ be the set of leaves of $S_k$.
  Let~$G$ be the graph obtained from $(S_k,T)$ by applying the construction in the proof of Theorem~\ref{thm:odd-path-edge-blocker-hardness}.
  An optimal solution to the odd path blocker LP for $G$, assigns the value $1/2$ to every edge of $S_k$, while an optimal integral solution removes all but one edges of $S_k$ that are in $G$.
  Therefore, the ratio of the integral solution to the fractional solution is $(k-1)/(k/2)=2(1-1/k)$.
\end{proof}
The same arguments as in Lemma \ref{lem:undir-odd-path-blocker-lp-integrality-gap} also show that the integrality gap of the odd path node blocker LP for \undirpathnodeblocker is also at least $2$.

\section{Discussion}
\label{sec:discussion}
In this work, we studied a natural cut problem with parity constraints in undirected graphs and directed graphs.
Our main results are fixed-parameter algorithms parameterized by the solution size, as well as constant-factor polynomial-time approximation algorithms and inapproximability reductions.

Several questions lend themselves for future work.
Firstly, it would be interesting to determine the exact approximability bound achievable in polynomial time for \omwedgec\ in undirected graphs, closing the gap between the lower bound of $6/5$ and the upper bound of 2.

Secondly, an important line of investigation in parameterized complexity is the design of fixed-parameter algorithms that have the best possible asymptotic dependence on the parameters (modulo the Exponential-Time Hypothesis), with only linear time dependence on the instance size~\cite{LokshtanovEtAl2016,LokshtanovEtAl2017,EtscheidMnich2017,RamanujanSaurabh2017,IwataEtAl2014,MarxEtAl2013}.
Thus, it would be interesting to know whether the proposed fixed-parameter algorithms for \omwnodec in DAGs can be expedited to run in time $2^{O(k)} \cdot O(n)$.

Finally, we ask about the parameterized complexity of the common generalization of \omwnodec and {\sc Multicut} known as {\sc Odd Multicut}: given a graph $G$ with terminal pairs $\{s_1,t_1\},\hdots,\{s_p,t_p\}$ and an integer $k$, decide if some set $S$ of at most $k$ nodes intersects all odd-paths between $s_i$ and $t_i$, for $i = 1,\hdots,p$.
Does this problem admit a fixed-parameter algorithm parameterized by the solution size $k$?
\bibliographystyle{siamplain}
\bibliography{refs}
\appendix
\section{Equivalence of \omwedgec and \omwnodec}
In this section, we show an approximation-preserving and parameter-preserving equivalence between \omwedgec and \omwnodec in directed graphs. 
To establish the notation, we restate the definitions of the two problems.

\defparproblem{\omwedgec}{$k$}{A directed acyclic graph $G$ with a set $T\subseteq V(G)$ of terminal nodes and a set $E^{\infty}\subseteq E(G)$ of protected edges, and an integer $k\in\Z_+$.}{Verify if there exists an odd multiway edge cut of $T$ in $G$ of size at most $k$ and disjoint from $E^\infty$, that is, a set $M \subseteq E(G) \setminus E^\infty$ of edges that intersects every odd $T$-path in $G$.}

\defparproblem{\omwnodec}{$k$}{A directed acyclic graph $G$ with a set $T\subseteq V$ of terminal nodes and a set $V^{\infty}\subseteq V(G)$ of protected nodes, and an integer $k\in\Z_+$.}{Verify if there exists an odd multiway node cut in $G$ of size at most $k$ and disjoint from $V^\infty$, that is, a set $M \subseteq V(G) \setminus V^\infty$ of nodes that intersects every odd $T$-path in $G$.}


\begin{lemma}\label{lem:omwedgec-to-nodec}
  There exist approximation-preserving and parameter-preserving reductions between \omwedgec in directed graphs and \omwnodec in directed graphs. 
\end{lemma}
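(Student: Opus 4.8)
The plan is to prove the equivalence by giving two gadget reductions, one in each direction, and to check that each reduction (i) preserves the parity of every $T$-path, (ii) induces a cardinality- and cost-preserving bijection between the relevant cuts, and (iii) leaves the parameter $k$ untouched. Together these give both the approximation-preserving and the parameter-preserving equivalence asserted in Lemma~\ref{lem:omwedgec-to-nodec}.

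First I would reduce \omwedgec to \omwnodec. Given an instance $(G,T,E^\infty,k)$, I subdivide every arc $e=(u,v)$ into a directed path $u\to a_e\to b_e\to v$ of length three, declaring $a_e$ deletable exactly when $e\notin E^\infty$ and protecting every original node together with every $b_e$ (and every $a_e$ with $e\in E^\infty$). Since each arc is replaced by exactly three arcs, a $T$-path of length $\ell$ in $G$ becomes a $T$-path of length $3\ell$ in the new graph, so parities agree; one checks this is a bijection between simple odd $T$-paths of the two graphs, the gadget introducing no shortcuts. The only deletable nodes are the $a_e$ with $e\notin E^\infty$, and deleting $a_e$ is equivalent to deleting $e$, whence odd multiway node cuts of the new instance correspond one-to-one and cost-for-cost to odd multiway edge cuts of $G$ disjoint from $E^\infty$.

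In the reverse direction I would reduce \omwnodec to \omwedgec. Given $(G,T,V^\infty,k)$, I keep every protected node (in particular every terminal) as a single vertex and split each non-protected node $v$ into a two-arc path $v^-\to m_v\to v^+$, directing all arcs formerly entering $v$ into $v^-$ and all arcs formerly leaving $v$ out of $v^+$; the arc $v^-\to m_v$ is deletable and stands for $v$, while $m_v\to v^+$ and every rerouted arc are protected. A $T$-path in $G$ that traverses $q$ non-protected internal nodes maps to a path whose length exceeds the original by exactly $2q$, an even quantity, so the parity of every $T$-path is preserved, and again the map is a bijection on simple $T$-paths. Deleting the arc $v^-\to m_v$ is equivalent to deleting $v$, yielding a cardinality- and cost-preserving bijection between odd multiway node cuts of $G$ and odd multiway edge cuts of the constructed graph.

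The crux of both reductions, and the reason the textbook edge/node gadgets do not transfer verbatim, is parity. A single subdivision (replacing an arc by a two-arc path) or a single-arc node split changes the length of every $T$-path by an amount whose parity depends only on the endpoints, which collapses the odd/even distinction entirely. The fix I rely on is to make each local replacement change path length by an \emph{odd multiplicative} factor (tripling arcs) or by an \emph{even additive} amount (two arcs per split node), so that parities are preserved exactly while the cut structure is faithfully simulated. I expect the main technical point to be verifying that these gadgets genuinely induce bijections between simple odd $T$-paths---not merely odd walks, a distinction that matters in general digraphs---in each graph and its transform; once that is established, the equality of optima, the preservation of the approximation factor, and the invariance of $k$ are all immediate.
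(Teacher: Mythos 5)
Your proposal is correct and follows essentially the same route as the paper: the paper likewise reduces \omwedgec to \omwnodec by subdividing each non-protected arc into a length-three path with two new (deletable) interior nodes, and reduces \omwnodec to \omwedgec by splitting each node $v$ into $v_{\text{in}}\to v_{\text{mid}}\to v_{\text{out}}$ with the gadget arcs of protected nodes (and all rerouted arcs) protected, relying on exactly the same parity accounting (odd-length arc replacement, even additive node splitting). The only differences are cosmetic---the paper leaves both subdivision nodes deletable and splits all nodes rather than only non-protected ones---and do not affect correctness.
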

\begin{proof}
  We first show an approximation preserving and parameter preserving reduction from \omwedgec\ to \omwnodec. 
  Let $\calI=(G, E^\infty, T, k)$ be an instance of the \omwedgec problem in directed graphs.
  We create a new graph~$G'$ by subdividing every edge $e \in E(G) \setminus E^\infty$ into three edges, by introducing two new nodes~$x_e$ and~$y_e$.
  By construction,~$G'$ is a directed graph too.
  Define $V^\infty := V(G)$.
  We claim that the instance~$\calI$ of \omwedgec has a solution of size, say $r$, if and only if the instance $\calI'=(G', V^\infty, T, k)$ of \omwnodec has a solution of size $r$.

  Let $M \subseteq E(G) \setminus E^\infty$ be an odd edge multiway cut in $G$.
  We claim that $M' := \set{x_e \suchthat e \in M}$ is an odd multiway cut in $G'$.
  Let $P'$ be an odd $T$-path in $G'$. Let $P$ be the corresponding path in $G$.
  By construction, $P$ has the same parity as $P'$ and therefore, is odd.
  Since $M$ is an odd multiway cut in $G$, it must contain an edge $e \in E(P)$.
  Therefore, $M'$ contains $x_e$. Thus, every odd $T$-path in $G'$ intersects $M'$.
  Hence, $M'$ is a solution for the instance $\calI'$ of the \omwnodec problem.
  Moreover, $\card{M'} = \card{M}$.

  Conversely, suppose $N'$ is a solution to the instance $\calI'$ of the \omwnodec problem.
  Define $N := \set{e \suchthat x_e \in N' \text{ or } y_e \in N'}$.
  By construction,~$N$ is disjoint from $E^\infty$.
  Let~$P$ be an odd $T$-path in~$G$ and let~$P'$ be the corresponding path in~$G'$.
  The path $P'$ is an odd $T$-path in~$G'$.
  Since~$N'$ is an odd multiway node cut in~$G'$, it must contain a node $v \in V(P')$.
  Since $V(G) \subseteq V^\infty$, it must be the case that $v \in \bigcup_{e \in E(G)} \set{x_e,y_e}$.
  Suppose $v \in \set{x_e,y_e}$ for some $e \in E(G)$. By construction, we have $e \in N$.
  Therefore,~$N$ includes an edge from every odd $T$-path in $G$.
  Thus, $N$ is a solution to the instance $\calI$ of the \omwedgec problem.
  Moreover, $\card{N} = \card{N'}$.

	We next show an approximation preserving and parameter preserving reduction from \omwnodec\ to \omwedgec. 
  Let $\calI=(G, V^\infty, T, k)$ be an instance of the \omwnodec problem in directed graphs.
  We create a new graph~$G'$ as follows. For every node $v \in V(G)$, we create three nodes $v_\text{in}$, $v_\text{mid}$ and~$v_\text{out}$ and put an edge from~$v_\text{in}$ to~$v_\text{mid}$ and an edge from $v_\text{mid}$ to $v_\text{out}$.
  For every edge $uv \in E(G)$, we put an edge from~$u_\text{out}$ to~$v_\text{in}$ in~$G'$.
  Define $E^\infty := \set{u_\text{out}v_\text{in} \suchthat uv \in E(G)} \cup \set{v_\text{in}v_\text{mid}, v_\text{mid}v_\text{out} \suchthat v \in V^\infty}$ and define~$T'$ as $\set{t_\text{in} \suchthat t \in T}$.
  We claim that the instance $\calI$ of the \omwnodec problem has a solution of size, say $r$, if and only if the instance $\calI' := (G', E^\infty, T', k)$ of \omwedgec has a solution of size $r$.

  Let $M \subseteq V(G)$ be a solution to the instance $\calI$ of the \omwedgec.
  We claim that $M' := \set{v_\text{mid}v_\text{out} \suchthat v \in M}$ is an odd multiway cut in~$G'$.
  By construction, $M'$ is disjoint from~$E^\infty$. We show that it intersects every odd $T'$-path in~$G'$.
  Let~$P'$ be an odd $T'$-path in~$G'$. Let~$P$ be the corresponding path in~$G$.
  By construction,~$P$ has the same parity as $P'$ and therefore, is odd.
  Since~$M$ is an odd multiway cut in $G$, it must contain a node $v \in V(P)$.
  Therefore,~$M'$ contains $v_\text{mid}v_\text{out}$.
  Thus, every odd $T'$-path in~$G'$ intersects~$M'$.
  Hence,~$M'$ is a solution for the instance~$\calI'$ of \omwedgec. Moreover $\card{M'} = \card{M}$.

  Conversely, suppose~$N'$ is a solution to the instance~$\calI'$ of the \omwnodec problem.
  Define $N := \set{v \suchthat v_\text{in}v_\text{mid} \in N' \text{ or } v_\text{mid}v_\text{out} \in N'}$.
  By construction,~$N$ is disjoint from~$V^\infty$. Let~$P$ be an odd $T$-path in $G$ and let $P'$ be the corresponding $T'$-path in $G'$.
  The path~$P'$ is an odd $T'$-path in~$G'$.
  Since~$N'$ is an odd multiway edge cut in~$G'$, it must contain an edge $e \in E(P')$.
  By choice of $E^\infty$, the edge~$e$ has to be in $\set{v_\text{in}v_\text{mid}, v_\text{mid}v_\text{out} \suchthat v \in V(G) \setminus V^\infty}$.
  Suppose $e \in \set{v_\text{in}v_\text{mid}}$ for some $v \in V(G) \setminus V^\infty$.
  By construction, $v \in N$. Therefore,~$N$ includes a node from every odd $T$-path in~$G$. 
  Thus,~$N$ is a solution to the instance~$\calI$ of the \omwnodec problem.
  Moreover, $\card{N} = \card{N'}$.
\end{proof}

\end{document}